\documentclass[acmsmall,screen,nonacm]{acmart}
\usepackage{booktabs}
\usepackage{graphicx}
\usepackage[ruled,vlined,linesnumbered]{algorithm2e}
\usepackage{amsmath}
\usepackage{multirow}
\usepackage{makecell} 
\usepackage{subcaption}
\usepackage{pgfplots}
\pgfplotsset{compat=1.18}
\usetikzlibrary{arrows.meta}
\usepackage{mathpartir}
\usepackage[capitalize]{cleveref}
\usepackage[dvipsnames]{xcolor}
\usepackage{stfloats}
\usepackage{float}

\usepackage{stmaryrd}

\usepackage{mathtools}

\usepackage{tikz}
\usetikzlibrary{quantikz2}

\renewcommand{\>}{\rangle}

\newcommand{\bona}{\textsc{Bona}}
\newcommand{\recycle}{\textsc{Recycle}}
\newcommand{\quantinuum}{\textsc{Decross}}
\newcommand{\trivD}{\textsc{TrivD}}

\newcommand{\borr}[2]{\mathbf{borrow}[{#1}] \{{#2}\}}
\newcommand{\qv}{\mathit{qv}}
\newcommand{\qa}{\mathit{qa}}
\newcommand{\argsop}{\mathit{args}}
\newcommand{\comm}[2]{{#1}\, \|\, {#2}}

\newcommand{\RuleAssoc}{\textsc{Assoc}}
\newcommand{\RuleCompLeft}{\textsc{CompL}}
\newcommand{\RuleCompRight}{\textsc{CompR}}
\newcommand{\RuleBorrInner}{\textsc{BorrI}}
\newcommand{\RuleBorrLeft}{\textsc{BorrL}}
\newcommand{\RuleBorrRight}{\textsc{BorrR}}
\newcommand{\RuleBorrComm}{\textsc{BorrC}}
\newcommand{\RuleSwap}{\textsc{Swap}}
\newcommand{\RuleSubst}{{\textsc{Borr}}}
\newcommand{\TopoRules}{\textsc{TopoRules}}
\newcommand{\topotrans}{\mathrel{\to_{\mathsf{T}}}}

\newcommand{\In}{\mathsf{In}}
\newcommand{\Out}{\mathsf{Out}}
\newcommand{\Op}{\mathsf{Op}}
\newcommand{\fst}{\mathsf{fst}}
\newcommand{\lst}{\mathsf{lst}}

\newcommand{\dep}{\mathrm{dep}}
\newcommand{\depthdelta}{\Delta\mathrm{d}}
\newcommand{\estdepthdelta}{\widehat{\depthdelta}}
\newcommand{\Pool}{\mathit{Pool}}
\newcommand{\Skipped}{\mathit{Skipped}}

\newcommand{\sem}[1]{\left \llbracket #1 \right \rrbracket}

\newcommand{\T}{\mathcal{T}}

\newcommand{\tX}{\mathtt{X}}

\newcommand{\tT}{\mathtt{T}}

\newcommand{\tCNOT}{\mathtt{CNOT}}

\newcommand{\tMCX}[1]{\mathtt{C}^{#1}\mathtt{NOT}}

\newcommand{\xf}[0]{x_{\text{false}}}
\newcommand{\fr}[0]{\textsc{Complete-Borrowing}}
\newcommand{\daginudg}[0]{\textsc{DAG-in-UDG}}

\newcommand{\blue}[1]{{\color{blue}#1}}

\title{\bona: Automatic Management of Dirty Ancilla Borrowing in Quantum Circuits}

\author{Xiaoquan Xu}
\authornote{These authors contributed equally.}
  \orcid{0009-0008-0285-695X}
	\affiliation{
    \institution{Institute of Software, Chinese Academy of Sciences and University of Chinese Academy of Sciences
		}
		\country{China}
	}
	\email{xuxq@ios.ac.cn}

\author{Chenke Liu}
\authornotemark[1]
    \orcid{0009-0009-5703-6614}
	\affiliation{
    \institution{Institute of Software, Chinese Academy of Sciences and University of Chinese Academy of Sciences
		}
		\country{China}
	}
	\email{liuchenke24@mails.ucas.ac.cn}

\author{Boning Meng}
\authornotemark[1]
    \orcid{0009-0006-0088-1639}
	\affiliation{
    \institution{Institute of Software, Chinese Academy of Sciences and University of Chinese Academy of Sciences
		}
		\country{China}
	}
	\email{mengbn@ios.ac.cn}

\author{Zihao Shen}
\authornotemark[1]
    \orcid{0009-0004-9115-6250}
	\affiliation{
    \institution{Institute of Software, Chinese Academy of Sciences and University of Chinese Academy of Sciences
		}
		\country{China}
	}
	\email{shenzihao25@mails.ucas.ac.cn}
    
\author{Li Zhou}
\authornote{Corresponding author: Li Zhou.}
  \orcid{0000-0002-9868-8477}
\affiliation{
    \department{Key Laboratory of System Software (Chinese Academy of Sciences) and State Key Laboratory of Computer Science}
    \institution{Institute of Software, Chinese Academy of Sciences}
    \country{China}
}
\email{zhouli@iscas.ac.cn}
\email{zhou31416@gmail.com}

\begin{document}

\begin{abstract}

The management of ancilla qubits has become a critical technique for reducing quantum circuit width.
Dirty ancillas, which may be borrowed from any temporarily idle qubit regardless of their initial states, offer substantial flexibility for width optimization, but their use has so far required manual and error-prone handling.
We formalize the dirty-qubit borrowing problem and establish a fundamental computational limit by proving its NP-hardness.
To support practical optimization, we present \bona, the first scheduler for dirty-qubit borrowing, built on a novel depth-aware heuristic algorithm.
We evaluate \bona~ across a variety of benchmarks, including practical quantum circuits and randomly arranged compositions of real circuit modules, and find that it reduces nearly 99\% of dirty ancillas on average with controlled depth overhead.
In particular, for parallel quantum walk---an essential component of parallel Hamiltonian simulation---\bona~ matches the circuit width achieved by the clean-qubit schemes of
\citeauthor{jiang2024recycling}~(\citeyear{jiang2024recycling}) and
\citeauthor{quantinuum}~(\citeyear{quantinuum}),
but attains significantly smaller circuit depth, 
providing concrete evidence that dirty ancillas offer unique optimization advantages in circuits with certain parallelism.

\end{abstract}

\maketitle

\section{Introduction}

As quantum computing scales toward practical applications~\cite{Preskill2025BeyondNISQ}, the management of \emph{temporary qubits}---known as \emph{ancillas}---has become a central concern in circuit compilation and optimization~\cite{Paler2017Wirerecycling, quantinuum, Hua2023CqQR, jiang2024recycling}. Ancillas are essential for efficiently realizing complex quantum circuits with low \emph{depth} (runtime after parallelization) and moderate \emph{width} (qubit count) through a two-step process~\cite{Barenco1995Elementarygates, Paler2017Wirerecycling}: in the \emph{basic construction} step, ancillas enable multi-qubit gate decomposition and the implementation of functional modules with greatly reduced depth and size, trading only a reasonable, often constant or constant-factor, increase in width for polynomial or even exponential depth reduction~\cite{nie2024quantumcircuitmultiqubittoffoli, baker2019decomposingquantumgeneralizedtoffoli, factoring2np2_2017, gidney2018factoringn2cleanqubits}; in the \emph{qubit management} step, ancilla recycling and reuse in larger modular circuits can amortize this width overhead, reducing the number of required ancillas---sometimes exponentially or even to a constant---while adding only minor or constant-factor depth growth~\cite{Paler2017Wirerecycling, quantinuum, Hua2023CqQR, jiang2024recycling}.

Ancilla qubits fall into two categories: clean ancillas, initialized to the ground state $|0\>$, and dirty ancillas, whose initial states are unknown but must be fully restored after use~\cite{Barenco1995Elementarygates}.
Dirty ancillas provide much greater flexibility in qubit management. Since they do not require a $|0\rangle$ initialization, they can borrow any qubit idle at that time to accomplish their gates. When this happens, as in~\Cref{fig:intro-reuse-eg} , a qubit $d$ is saved and width is reduced by one
  (\cref{fig:intro-reuse-eg}).
\noindent Consequently, dirty ancillas are a recognized and indispensable technique in circuit design, including elementary gates~\cite{zindorf2025efficientimplementationmulticontrolledquantum, nie2024quantumcircuitmultiqubittoffoli}, unitary synthesis~\cite{Low2024tradingtgatesdirty}, arithmetic circuits~\cite{gidney2018factoringn2cleanqubits, factoring2np2_2017}, and cryptography~\cite{Ha2024Resourceanalysis}. Recent work~\cite{Low2024tradingtgatesdirty, Huang2025Constructingquantumimplementations} further shows that dirty ancillas support unique reduction in T-gate count and depth. These features make dirty ancillas a key mechanism for depth-width optimization in practical quantum circuits.

\begin{figure}[tb]
    \centering
    \vspace{-0.2cm}
    \includegraphics[width=.7\linewidth]{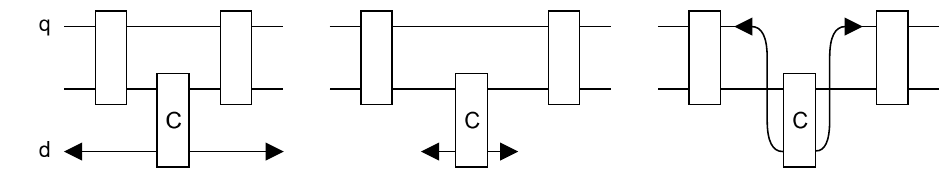}
    \vspace{-0.4cm}
    \caption{The transformation from left to right illustrates how an idle middle interval of the top wire can be borrowed to serve as a dirty ancilla (the wire marked with $\blacktriangleleft$ and $\blacktriangleright$).}
    \label{fig:intro-reuse-eg}
\end{figure}

While dirty ancillas have been well studied in the basic construction step, research on their qubit management remains incomplete.
% Despite their importance, research on ancilla management remains incomplete. 
Unlike clean ancillas, whose recycling has been extensively studied~\cite{Paler2017Wirerecycling, quantinuum, Hua2023CqQR, jiang2024recycling}, dirty-ancilla management is typically handled manually by algorithm developers
% currently rely on  to manually exploit their flexibility 
in specific arithmetic or modular constructions~\cite{factoring2np2_2017, Huang2025Constructingquantumimplementations}.
%are understood mostly from a circuit-design perspective, 
% where algorithm developers manually exploit their flexibility in specific arithmetic or modular constructions~\cite{factoring2np2_2017, Huang2025Constructingquantumimplementations}.
Systematic study of \emph{dirty-qubit borrowing}---the automatic scheduling of dirty ancilla reuse during compilation to further reduce circuit width---remains largely absent.
Existing compilers typically assume all ancillas are clean~\cite{quantinuum, Hua2023CqQR, fang2023Dynamic, qiskit}, while manual dirty management is error-prone and infeasible for large-scale circuits.
Q\#~\cite{qpunch} already allows programmers to explicitly declare the use of dirty ancillas, but provides little mechanism for managing them.
This lack of compiler-level support prevents the full realization of dirty ancillas' potential for resource optimization.

To bridge this gap, it is essential to address two fundamental problems that jointly capture the formal and practical aspects of dirty-qubit borrowing:
\begin{enumerate}
  \item \emph{Formalization and theoretical limits}: How can the dirty-qubit borrowing problem be rigorously defined, and what are its limits in terms of computational complexity?
  \item \emph{Algorithm design}: How to develop an efficient and effective algorithm for dirty-qubit borrowing, and under what conditions do dirty ancillas clearly outperform clean ones?
\end{enumerate}

% \ckl{need to modify}
\paragraph{{\textbf{This work}}}
This work aims to establish a complete foundation for automatic dirty-ancilla management in quantum circuit compilation. 
We first formalize the dirty-qubit borrowing problem and prove that its decision version is NP-complete.
Motivated by this hardness, we present \bona, the first compiler-level scheduler for dirty-qubit borrowing, and use several benchmarks to evaluate its effectiveness and identify when dirty ancillas offer optimization advantages.

% which provides evidence that dirty ancillas can outperform clean ones in both circuit depth and width on practical benchmarks.

\paragraph{Formalization of dirty-qubit borrowing}
In a structured circuit, a dirty ancilla $d$ specifies a code block $C$ when it is declared.
Since $C$ guarantees to restore $d$'s initial state, any
temporarily idle qubit $q$ can safely act as $d$. We formalize this
\emph{borrowing transition} as substituting all occurrences of $d$ with $q$
(as shown in Fig.~\ref{fig:intro-reuse-eg}):
$$\borr{d}{C} \to C[q/d], \quad\mbox{if }q\mbox{ is not involved in }C.$$
When the dirty ancilla $d$ in a circuit $C$ is correctly used (in design)~\cite{su2024bibasedreasoningquantumprograms, su2025borrowingdirtyqubitsquantum}, performing a borrowing transition yields a semantically equivalent circuit.
In addition to the borrowing transition, we further introduce several semantics-preserving \emph{structural transitions} and \emph{topological transitions} to describe how a circuit may change.
Based on these transitions, and from the perspective of circuit-width optimization, we define the \emph{dirty-qubit borrowing problem} as follows: given a circuit, find a sequence of transitions to minimize the circuit width.

To rigorously reason about this search space, we abstract the circuit into an
\emph{Endpoint DAG model} that is invariant under structural and topological
transitions and reformulates borrowing as a graphical \emph{edge-splicing}
operation. We prove a sound-and-complete step correspondence between edge
splicing and circuit-level borrowing up to topological preprocessing, thereby
establishing a unified operational foundation for our subsequent theoretical
analysis and algorithm design.

\paragraph{NP-hardness of the borrowing problem}
We first identify a tractable fragment: if topological transitions are
disallowed, the minimal circuit width can be exactly computed in linear time
via a static structural recursion. However, once topological transitions are
permitted, the optimal topological structure must be dynamically generated
during the borrowing process. This interleaving of topological rearrangements
and borrowing choices creates a combinatorial explosion.

We formally establish the NP-completeness of the dirty-qubit borrowing problem
by proving that finding a complete borrowing assignment on the Endpoint DAG is
equivalent to finding an acyclic realization of an intermediate Uncertain
Directed Graph (UDG), which we show is NP-hard via a reduction from 3-SAT.
Furthermore, we show that attempting to optimize circuit depth--whether
as the primary or secondary objective--also runs into an NP-hard combinatorial
wall. This motivates the need for a pragmatic, heuristic-driven scheduler.

\paragraph{Heuristic borrowing scheduling}
We present \bona, the first automated tool for dirty-qubit
borrowing scheduling, which achieves near-optimal width reduction in
practice through a depth-aware heuristic combined with a width-greedy
strategy.

The scheduler combines three key ideas:
(i) a depth-aware heuristic that instantly estimates the impact of each borrowing step and prioritizes depth-preserving embeddings;
(ii) a dynamic resource pool that efficiently processes queries and updates;
and (iii) a \emph{Clean-then-Dirty} orchestration strategy that allows \bona\ to directly couple with existing state-of-the-art clean-ancilla optimizers,
functioning as a highly effective end-to-end compiler pipeline.
We also prove that, for a safe input circuit, every successful splice performed
by \bona\ satisfies the unreachable condition, ensuring that its output
represents a circuit semantically equivalent to the input.

\paragraph{Implementation and case studies}
We implement \bona\ as a Python tool based on the heuristic approach.
We evaluate \bona~ across three levels of benchmarks: 
(1) two quantum algorithms, parallel quantum walk (PQW)~\cite{Zhang2024parallelquantum} and Shor's algorithm~\cite{Shor_1997} using the implementation from~\cite{factoring2np2_2017};
(2) component-level quantum circuits from~\cite{gidney2015, Low2024tradingtgatesdirty}; and 
(3) randomly stitched parallel circuits constructed from RevLib~\cite{revlib}.
These benchmark circuits are represented at the logical-gate level and their scale is relevant to near-term hardware demonstrations.
When a benchmark circuit contains both clean and dirty ancillas, we first apply \recycle~\cite{jiang2024recycling} to reuse the clean ancillas and then apply \bona\ to reuse the dirty ancillas.
Our key experimental findings fall into two aspects:

\begin{enumerate}
    \item \textit{Effectiveness.}
    On the PQW circuits, \recycle+\bona\ reduces circuit width by 92\%--99\% and dirty-ancilla usage by 99\%--100\%, with a depth overhead of 101\%--330\%.
    When applied after manual preprocessing, it eliminates all dirty ancillas while limiting the depth overhead to 1\%--18\%.
    On Shor's algorithm, \bona\ reduces the dirty-ancilla count from 192--6400 to only 3--20, with a depth overhead of at most 20\%, producing circuits close to the manually optimized implementations.
    For the component-level benchmarks, \recycle+\bona\ matches the final width of manual optimization in three out of four cases, with depth overheads between 0\% and 27\%.

    \item \textit{Scheduling Advantage of Dirty Ancillas.}
    For PQW circuits, we find that the dirty-ancilla implementations optimized by \recycle+\bona\ achieve width comparable to that of the clean-ancilla implementations optimized by \recycle~\cite{jiang2024recycling} or \quantinuum~\cite{quantinuum}, while attaining smaller final depth at most scales, despite having greater depth before optimization.
    In the randomly composed circuits, this advantage emerges beyond a certain degree of parallelism.
    At the highest tested parallelism, the dirty-ancilla circuits optimized by \recycle+\bona\ achieve nearly half the depth of their clean-ancilla counterparts.

\end{enumerate}

Our logical-gate-level evaluation demonstrates that \bona\ effectively reuses dirty ancillas and can be combined with clean-ancilla optimizers such as \recycle. In practice, clean and dirty ancillas can play complementary roles, with dirty ancillas offering additional scheduling flexibility that may reduce post-optimization depth.

\paragraph{{\textbf{Organization and summary of contributions}}}
\Cref{sec: preliminary} introduces the basic concepts of quantum circuits and dirty ancilla qubits.
The subsequent sections present our main contributions:
\begin{itemize}
    \item \Cref{sec:formalization} formally defines the dirty-qubit borrowing problem on a circuit language;
    \item \Cref{sec:endpointdag} introduces the Endpoint DAG model, translating syntactic borrowing into a graphical edge-splicing operation;
    \item \Cref{sec:hardness} proves the NP-hardness of dirty-qubit borrowing problem;
    \item \Cref{sec:heuristic} introduces \bona, a novel depth-aware heuristic algorithm;
    \item \Cref{sec:evaluation} presents case studies.
\end{itemize}
We finally discuss related work in \Cref{sec: related work}.

\section{Preliminary}
\label{sec: preliminary}

\subsection{Quantum Circuits}
\paragraph{Quantum gates.}
The state of an $n$-qubit system is described by a vector in Hilbert space $(\mathbb{C}^2)^{\otimes n}$~\cite{Nielsen_Chuang_2010}.
A \emph{quantum gate} is a fixed-arity unitary operator. A $k$-qubit gate is represented by $2^k$-dimensional matrix, 
such as single-qubit NOT gate $\tX$ and multi-qubit controlled gates $\tMCX{k}$:
\[
\tX = \begin{pmatrix} 0 & 1 \\ 1 & 0 \end{pmatrix}, \quad
\tMCX{1} = \tCNOT = \left(\begin{smallmatrix} 1 & 0 & 0 & 0 \\ 0 & 1 & 0 & 0 \\ 0 & 0 & 0 & 1 \\ 0 & 0 & 1 & 0 \end{smallmatrix}\right), \quad
\tMCX{k+1} =  \begin{pmatrix} 1 & 0 \\ 0 & 0 \end{pmatrix} \otimes I_k + \begin{pmatrix} 0 & 0 \\ 0 & 1 \end{pmatrix} \otimes \tMCX{k}
\]
where $I_k$ is the $2^k$-dimensional identity matrix.
Given a sequence of distinct qubits $\overline{q}=(q_1,\dots,q_k)$, we write $U[\overline{q}]$ to denote the application of a $k$-qubit gate $U$ on those qubits, leaving all others unchanged.

\paragraph{Quantum circuits.}
A \emph{quantum circuit} $C$ is a finite sequence of gate applications that jointly form a unitary operator acting on the qubits appearing in~$C$, 
commonly denoted by an intuitive graphical representation. For example, the left panel in~\Cref{fig:circ-dag-example} represents a circuit that first applies two NOT gates $\tX[q_1]$ and $\tX[q_2]$, and then follows a controlled-NOT gate $\tMCX{1}[q_1,q_2]$,
where the control qubit is marked by a solid bullet and the target qubit by an $\oplus$ symbol.
\begin{figure}
\centering
\begin{minipage}{0.2\textwidth}
\centering
\small
\begin{quantikz}[row sep = 0.2cm]
   \lstick{$q_1$} \qw & \gate{\tX} & \ctrl{1} & \qw \\
   \lstick{$q_2$} \qw & \gate{\tX} & \targ{}  & \qw
\end{quantikz}
\end{minipage}
\begin{minipage}{0.45\textwidth}
\centering
\begin{tikzpicture}[
node distance=1.5cm,
every node/.style={font=\small},
gate/.style={rectangle, draw=black, minimum width=0.8cm, minimum height=0.6cm},
io/.style={circle, draw=black, minimum size=0.55cm, inner sep=0pt}
]

\node[io] (q1_in) at (0, 1) {$q_{1,in}$};
\node[io] (q2_in) at (0, 0) {$q_{2,in}$};

\node[gate] (x1) at (1.5, 1) {$\tX_{q_1}$};
\node[gate] (x2) at (1.5, 0) {$\tX_{q_2}$};

\node[gate] (cnot) at (3, 0.5) {$\tCNOT_{q_1,q_2}$};

\node[io] (q1_out) at (4.5, 1) {$q_{1,out}$};
\node[io] (q2_out) at (4.5, 0) {$q_{2,out}$};

\draw[->] (q1_in) -- (x1);
\draw[->] (x1) -- (cnot);
\draw[->] (cnot) -- (q1_out);

\draw[->] (q2_in) -- (x2);
\draw[->] (x2) -- (cnot);
\draw[->] (cnot) -- (q2_out);

\end{tikzpicture}
\end{minipage}
\caption{A quantum circuit (left) with input qubits $q_1$ and $q_2$ and three quantum gates, and its corresponding DAG representation (right).}
% \vspace{-0.1cm}
\label{fig:circ-dag-example}
\end{figure}

\begin{definition}[Basic notions in Graph Theory]
A \emph{directed graph} is a pair $G=(V,E)$, where $V$ is a finite set of
vertices and $E$ is a (possibly multi-)set of directed edges. An edge $e =
(u,v) \in E$ directs from a \emph{predecessor} $u$ to a \emph{successor} $v$.
For any vertex $v\in V$, its \emph{in-degree} $\deg^{-}(v)$ and
\emph{out-degree} $\deg^{+}(v)$ denote the number of incoming and outgoing
edges (counted with multiplicity), respectively.

A \emph{path} $p$ in $G$ is a sequence of vertices
$(v_1,\dots,v_\ell)$ such that $(v_i,v_{i+1})\in E$ for each $i$.
We write $a\leadsto b$ if $b$ is \emph{reachable} from $a$, i.e.,
there exists a path that starts from $a$ and ends at $b$.

A \emph{cycle} is a path $(v_1,\dots,v_\ell)$ with $\ell\ge2$ and $v_1=v_\ell$.
A graph $G$ is a \emph{directed acyclic graph (DAG)} if it contains no cycles.

\end{definition}

\paragraph{Directed acyclic graph (DAG) representation.}
Abstracting circuits into directed acyclic graphs to analyze temporal
dependencies is rooted in classical static timing
analysis~\cite{sapatnekar2004timing} and is widely adopted by modern quantum
compilation frameworks (e.g., Qiskit~\cite{qiskit}). In a general DAG
representation of a quantum circuit $C$ (e.g., the right panel
in~\cref{fig:circ-dag-example}), vertices may correspond to quantum
gates (e.g., the boxed $\tX_{q_1}$, $\tX_{q_2}$, and $\tCNOT_{q_1,q_2}$) and input/output nodes
(e.g., the four circled nodes).
Directed edges represent temporal dependencies induced by consecutive
operations along each qubit wire, including those from an input to the first
gate on a wire (e.g., $q_{1,in} \to \tX_{q_1}$), those between two gates
(e.g., $\tX_{q_1} \to \tCNOT_{q_1,q_2}$), and those from the last gate on a
wire to the corresponding output (e.g., $\tCNOT_{q_1,q_2} \to q_{1,out}$).
A bare dependency graph does not recover wire/port information; our Endpoint
DAG and its validity conditions are defined in Section~\ref{sec:endpointdag}.

\paragraph{Circuit width and depth.}
We use width and depth as coarse-grained logical-level resource metrics:
\begin{itemize}
  \item \textit{Width}: the number of qubits used by the circuit, including both data and ancillas.
  \item \textit{Depth}: the length of the longest path in the circuit's DAG.
\end{itemize}
Width corresponds to the spatial cost---how many qubits are required simultaneously---whereas depth reflects the temporal cost.
Reducing one often increases the other, making the width--depth trade-off a central challenge in quantum compilation and the context in which ancilla management plays a key role.

\subsection{Dirty Ancilla Qubits}

\begin{figure}[t]
  \centering
  \includegraphics[width=0.95\linewidth]{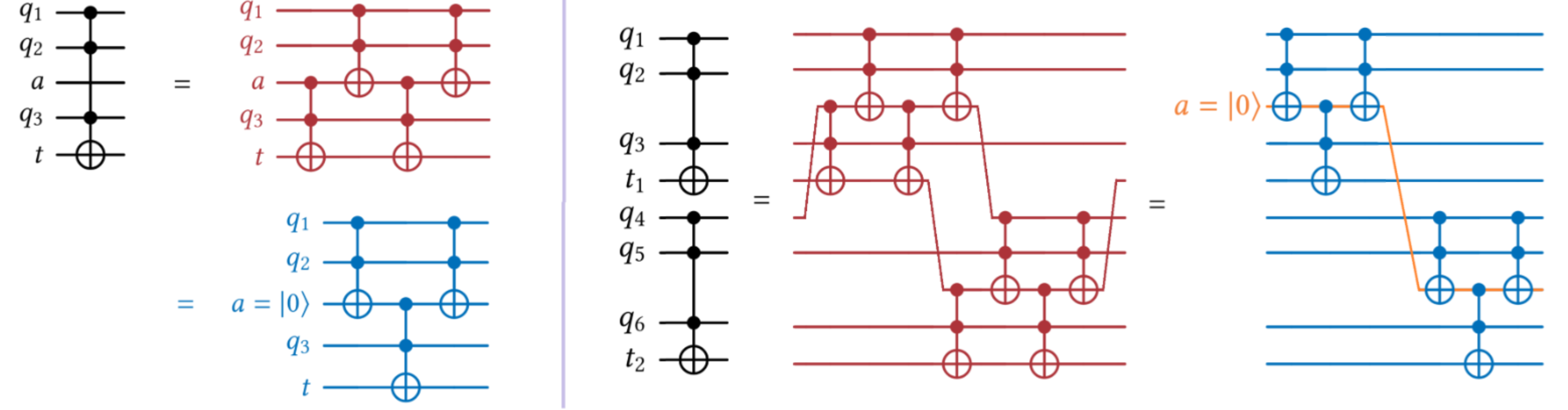}
  \caption{Left: Implementations of $\tMCX{3}$ using $\tMCX{2}$ gates and one ancilla. The red dirty-ancilla circuit is fully semantically equivalent to the black target circuit, whereas the blue clean-ancilla circuit is equivalent only when the ancilla is initialized to $\ket{0}$.
  Right: Implementation of two parallel $\tMCX{3}$ with recycling, with the clean ancilla marked by an orange line.
  }
  \label{fig:mcx pre}
\end{figure}

A \emph{dirty ancilla} is an auxiliary qubit whose initial state is arbitrary and unknown. 
To ensure safe use, a dirty ancilla must be fully restored to its original (unknown) state, 
so as to avoid irreversible corruption of the borrowed qubit during the management step.
Formally, given a circuit that syntactically uses a dirty ancilla~$a$ and represents a unitary operator~$U$,
we say that $U$ \emph{safely uses the dirty ancilla}~$a$ if
\begin{equation}
\label{eq:safe-dirty}
U = I_{a} \otimes V
\quad\text{for some unitary } V.
\end{equation}

Several equivalent formulations of this safety condition have been discussed in prior work~\cite{su2025borrowingdirtyqubitsquantum}.
The left panel of~\Cref{fig:mcx pre} illustrates the safe use of a dirty ancilla~$a$ (in red) and a clean ancilla~$a$ (in blue) 
in the implementation of a $\tMCX{3}$ gate (leftmost circuit) on $(q_1, q_2, q_3, t)$ using only $\tMCX{2}$ gates (a.k.a.\ Toffoli gates).

\paragraph{Dirty-qubit borrowing.}
Once a dirty ancilla is safely used, the condition in~\Cref{eq:safe-dirty} ensures that the circuit acts trivially on the borrowed qubit.
Therefore, any temporarily idle qubit can be \emph{borrowed} as a dirty ancilla. 
This flexibility enables better width optimization and is often employed when the circuit width is fixed~\cite{Low2024tradingtgatesdirty, factoring2np2_2017} or preferentially constrained.
The right panel of~\Cref{fig:mcx pre} illustrates the implementation of two parallel $\tMCX{3}$ gates.
The dirty scheme (shown in red) yields a depth-7 implementation without allocating any additional ancillas, since the two $\tMCX{3}$ gates can borrow each other's working qubits ($q_4$ and $t_1$ here) as temporary ancillas.
In contrast, clean ancillas cannot be borrowed from working qubits, resulting in a sequential implementation that reuses the same clean ancilla~$a$ to minimize width.

\paragraph{Effectiveness: dirty vs. clean ancillas}
For basic circuit building blocks such as multi-controlled NOT gates, carry modules, and constant adders, clean ancillas generally achieve smaller circuit size and lower depth.
Dirty ancillas usually require a comparable number of ancillas but incur a constant-factor overhead in depth, often around twofold, due to the additional unitaries needed to restore arbitrary initial states rather than the fixed~$\ket{0}$ state.

On the other hand, dirty ancillas offer much greater flexibility in resource management, 
which can be advantageous for large, sparse, or parallel circuits. 
For example, to implement $2n$ parallel $\tMCX{3}$ gates, the dirty scheme achieves a depth-7 implementation without allocating any ancillas by pairing every two $\tMCX{3}$ gates as shown in~\Cref{fig:mcx pre}. 
In contrast, the clean scheme requires at least one additional ancilla, 
and its depth is approximately~$6n/k$ when at most~$k$ clean ancillas are available, 
which remains worse than the dirty scheme in both depth and width.

In summary, efficient and effective management of dirty ancillas is crucial, and it remains open whether more realistic scenarios exist where dirty ancillas clearly outperform clean ones.

\section{Formal Model of Dirty-Qubit Borrowing}\label{sec:formalization}

This section formalizes dirty-qubit borrowing in quantum circuits. We first define a small language with explicitly scoped dirty ancillas, and then formulate borrowing as a transition-based finding problem.

\subsection{Syntax and Semantics}

We begin by introducing a simple language that supports the declaration and use of dirty ancillas under explicit scoping.

\begin{definition}[Syntax] A quantum circuit with dirty ancillas is generated by:
\begin{align*}
  (\text{Circuit}) \quad 
  C ::= &\ U[\overline{q}] 
        \mid C_1; C_2 
        \mid \borr{d}{C}
\end{align*}
\end{definition}

A quantum circuit is either a gate $U$ acting on a sequence of qubits $\overline{q}$, a sequential composition of two circuits, or a statement $\borr{d}{C}$, which borrows a qubit and uses it as a dirty ancilla $d$ in $C$.
The set of quantum variables $\qv(C)$ of a circuit $C$ is inductively defined as follows:
\begin{align*}
\qv(C) = \begin{cases}
      \{q \mid q \in \overline{q}\}, & \text{if } C = U[\overline{q}]\\
      \qv(C_1) \cup \qv(C_2), & \text{if } C = C_1; C_2\\
      \qv(C') \setminus \{d\}, & \text{if } C = \borr{d}{C'}
    \end{cases}
\end{align*}

Since a dirty ancilla may be obtained by borrowing a temporarily idle qubit that may carry useful data, the circuit $C$ must use it \emph{safely}. That is, the borrowed qubit must be restored to its original state when it goes out of scope, even though that state is unknown to $C$. 
Formally, safety here is characterized by \Cref{eq:safe-dirty}. We assign the distinguished value $\bot$ to any unsafe use of dirty ancillas, and any circuit containing such an invalid part is itself deemed invalid.

\begin{definition}[Semantics]\label{def:semantics}
    The semantics of a quantum circuit $C$, denoted by $\sem{C}$, is defined as a unitary operator on $\qv(C)$ when $C$ is \textit{safe}, and as $\bot$ when $C$ is \textit{unsafe}. Moreover, $\bot U = U \bot = \bot$ and $\bot \otimes U = U \otimes \bot = \bot$ for any unitary operator $U$. Formally:

    \begin{enumerate}
        \item $\sem{U[\overline{q}]} = U_{\overline{q}}$;
        
        \item $\sem{C_1; C_2} = 
        \bigl( \sem{C_2} \otimes I_{\qv(C_1) \setminus \qv(C_2)} \bigr)
        \bigl( \sem{C_1} \otimes I_{\qv(C_2) \setminus \qv(C_1)} \bigr)$;
        
        \item $\sem{\borr{d}{C}} = 
            \begin{cases}
                F_{\qv(C) \setminus \{d\}}, & \text{if } \sem{C} = F_{\qv(C) \setminus \{d\}} \otimes I_d, \\
                \sem{C}, & \text{if } d \notin \qv(C), \\
                \bot, & \text{otherwise.}
            \end{cases}$
    \end{enumerate}
\end{definition}

\paragraph{Substitution of qubits.}
The substitution $C[q/d]$ is defined inductively by replacing each free occurrence of qubit $d$ in circuit $C$ with $q$, while leaving bound occurrences of $d$ unaffected:
\begin{align*}
    (U[\overline{p}])[q/d] 
      &\triangleq U[\overline{p}[q/d]] \\[4pt]
    (\borr{a}{C})[q/d] 
      &\triangleq 
        \begin{cases}
            \borr{a}{C}, & \text{if } d = a; \\[3pt]
            \borr{a'}{C[a'/a][q/d]}, & \text{if } q = a \text{ and }a'\text{ is a fresh name}\\[3pt]
            \borr{a}{C[q/d]}, & \text{otherwise.}
        \end{cases} \\[6pt]
    (C_1; C_2)[q/d] 
      &\triangleq C_1[q/d];\, C_2[q/d].
\end{align*}
The usual substitution lemma holds: $\sem{C[q/d]} = \sem{C}[q/d]$.

For simplicity, from now on we assume that every dirty ancilla in a circuit
has an identifier distinct from those of all other dirty ancillas and working
qubits, and is actually used within its borrowing body
(i.e., $d\in\qv(C)$ for every $\borr{d}{C}$).
This assumption can always be ensured by renaming each dirty ancilla with a fresh name
and removing any unused borrowing statements, both of which preserve the semantics of any circuit.
For clarity, we write $\qa(C)$ to denote the set of all dirty ancillas in circuit $C$, and $\argsop(C) \triangleq \qv(C) \cup \qa(C)$.

\subsection{Circuit transitions and the dirty-qubit borrowing problem}\label{subsec:transitions}

\paragraph{Objective of dirty-qubit borrowing}
Given a safe circuit $C$, \textit{dirty-qubit borrowing} aims to find a circuit $C'$ that is semantically equivalent to $C$ while containing the fewest $\mathbf{borrow}$ statements, i.e., the fewest required dirty ancillas.
We model such a finding process as the repeated application of semantics-preserving transition rules.

\paragraph{Structural transitions}
Structural transitions lift transitions on subcircuits to transitions on the entire circuit.
\begingroup
\setlength{\jot}{0pt}
\begin{mathpar}
    \inferrule*[right=\RuleCompLeft]{
        C_1 \to C_1'
    }{
        C_1; C_2 \to C_1'; C_2
    }
    \hspace{1.5em}
    \inferrule*[right=\RuleCompRight]{
        C_2 \to C_2'
    }{
        C_1; C_2 \to C_1; C_2'
    }
    \hspace{1.5em}
    \inferrule*[right=\RuleBorrInner]{
        C \to C'
    }{
        \borr{d}{C} \to \borr{d}{C'}
    }
\end{mathpar}

\endgroup

Since structural transitions only propagate local transitions to larger circuits, we treat them as implicit and do not mention them further.
For example, when we say that $C \to C'$ by one application of a non-structural transition rule, we mean that $C'$ is obtained by applying that rule once together with any number of structural transitions.

\paragraph{Topological transitions}
The following rules constitute the set $\TopoRules$ and capture semantics-preserving structural rearrangements of circuits:
\begingroup
\setlength{\jot}{5pt}
\begin{mathpar}
    \inferrule*[right=\RuleBorrLeft]{ d \notin \qv(C_0) }{
        C_0; \borr{d}{C_1} \leftrightarrow \borr{d}{C_0; C_1}
    }

    \inferrule*[right=\RuleSwap]{ \comm{C_1}{C_2} }{
        C_1; C_2 \leftrightarrow C_2; C_1
    }

    \inferrule*[right=\RuleBorrRight]{ d \notin \qv(C_1) }{
        \borr{d}{C_0}; C_1 \leftrightarrow \borr{d}{C_0; C_1}
    }

    \inferrule*[right=\RuleAssoc]{ }{
        (C_1; C_2); C_3 \leftrightarrow C_1; (C_2; C_3)
    }

    \inferrule*[right=\RuleBorrComm]{ }{
        \borr{d_1}{\borr{d_2}{C}}
        \leftrightarrow
        \borr{d_2}{\borr{d_1}{C}}
    }

\end{mathpar}
\endgroup
Here, $\comm{C_1}{C_2}$ means that $C_1$ and $C_2$ are \emph{disjoint}, that is, they share no quantum variables.
We write $\leftrightarrow$ for a pair of transitions allowed in both directions.
We write $C \topotrans C'$ for one topological transition and
$C \topotrans^{*} C'$ for its reflexive--transitive closure. Since all
topological rules are bidirectional, $\topotrans^{*}$ is symmetric and hence
an equivalence relation.

\paragraph{Borrowing transition}
The following rule models the actual borrowing step: it substitutes the required dirty ancilla $d$ with an idle qubit $q$ (i.e., $q \notin \qv(C) \cup \qa(C)$), called the \emph{borrowed qubit}, which may be either a working qubit or another ancilla:
\[
    \inferrule*[right=\RuleSubst]{
        q \notin \qv(C) \cup \qa(C)
    }{
        \borr{d}{C} \to C[q/d]
    }
\]

As an example, consider the following transition sequence, where the highlighted subcircuit is the result of the previous transition. Assume that $\qv(C_1) = \{q_2, d\}$, $\qv(C_2) = \{q_1, q_3\}$, and
$\qv(C_3) = \{q_2, d\}$.
\begin{align*}
\borr{d}{C_1; C_2; C_3} 
&\xrightarrow{\RuleSwap} \borr{d}{C_1; \blue{C_3; C_2}} 
& (\comm{C_2}{C_3}) \\
&\xrightarrow{\RuleBorrRight} \textbf{borrow}[d]\{C_1; \blue{C_3\}; C_2} 
& (d \notin \qv(C_2)) \\
&\xrightarrow{\RuleSubst}     \blue{C_1[q_3/d]; C_3[q_3/d]}; C_2
& (q_3 \notin \qv(C_1; C_3)).
\end{align*}

The circuit is progressively rearranged by topological transition rules until the borrowing step becomes applicable. This sequence is also illustrated in \cref{fig:trans-rule-case}.

\begin{figure}[htb]
    \centering
    \includegraphics[width=.8\linewidth]{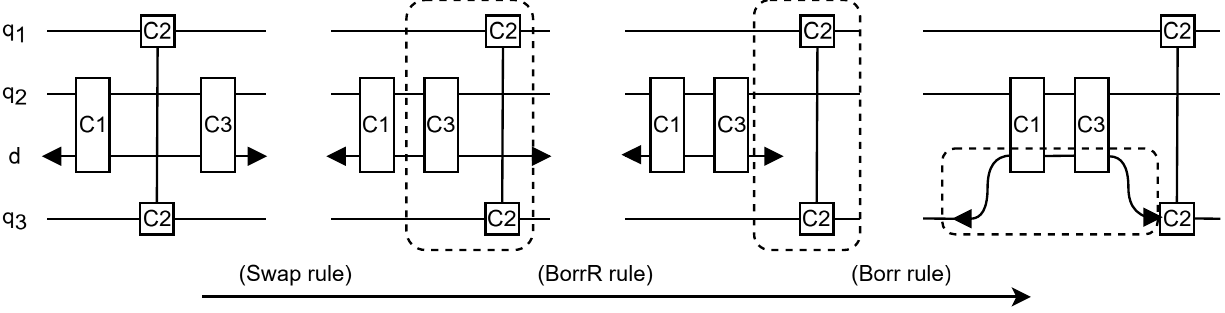}
    \caption{A circuit-level illustration of the transition sequence from left to right. Each labeled box represents a subcircuit. Boxes with the same label connected by a vertical line together denote a single subcircuit, indicating the quantum variables on which it acts. A dashed box denotes the subcircuit produced by the preceding transition.}
    \label{fig:trans-rule-case}
\end{figure}

\begin{definition}[Dirty-qubit borrowing problem]
\label{def:borrow-problem}
Given a safe circuit $C$, the dirty-qubit borrowing problem is to find a sequence of transitions from $C$ to $C'$ such that the width $|\argsop(C')|$ is minimal.
\end{definition}
The only way to reduce the width is through~\RuleSubst. We additionally assume that, in the following sections of this paper, the borrowed qubit $q$ is not a fresh name outside the initial circuit, because simply converting an ancilla into a new working qubit is trivial but does not help reduce the overall circuit width. Therefore, minimizing the width is equivalent to maximizing the number of applications of~\RuleSubst.

\begin{proposition}[Semantics preservation]\label{prop:semantics-preserving-tarnsitions}
For any safe circuit $C$ such that $C \to^{*} C'$, we have
\[
  \sem{C'} = \sem{C}.
\]
\end{proposition}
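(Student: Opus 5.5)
The proof goes by induction on the length of the transition sequence. By transitivity, it suffices to show that a single step $C \to C'$ from a safe $C$ gives $\sem{C'} = \sem{C}$. Since $\sem{C}\neq\bot$, $C'$ is then also safe, so the induction hypothesis applies to the remaining steps. For a single step, a second induction runs on the derivation of $C\to C'$, i.e.\ on the structural context.

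\emph{Structural cases.} For \RuleCompLeft, \RuleCompRight\ and \RuleBorrInner\ the semantic clauses of \Cref{def:semantics} are compositional: $\sem{C_1;C_2}$ and $\sem{\borr{d}{C}}$ depend only on the operators $\sem{C_i}$ and on the variable sets $\qv(C_i)$. So I first check that every base rule preserves $\qv$. For the topological rules this is immediate. For \RuleSubst\ it fails, because $q$ becomes a free variable. The remedy is to compare both sides on the common space $\qv(C)\cup\{q\}$, padding the smaller side with identities as the $\otimes I$ conventions already do. To make this robust, I would prove the stronger invariant ``$\sem{C}\otimes I_{X}=\sem{C'}\otimes I_{Y}$ on a common space'', which is what composition and \RuleBorrInner\ actually consume. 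Safety of a subcircuit follows from safety of the whole, since $\bot$ is absorbing.

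\emph{Topological base cases.}
\begin{itemize}
\item \RuleAssoc\ follows from associativity of operator composition once everything is padded to $\qv(C_1)\cup\qv(C_2)\cup\qv(C_3)$.
\item \RuleSwap\ holds because operators acting on disjoint qubit sets commute: $(A\otimes I)(I\otimes B)=(I\otimes B)(A\otimes I)$.
\item \RuleBorrComm\ reduces to the identity $F\otimes I_{d_1}\otimes I_{d_2}$ being recognised in either order. The one subtlety is showing that if $\sem{C}$ factors as $G\otimes I_{d_2}$ and $G$ factors as $F\otimes I_{d_1}$, then $\sem{C}$ also factors through $d_1$ first. This follows from uniqueness of such tensor factorisations.
\item \RuleBorrLeft\ and \RuleBorrRight\ are the main obstacle, and they must be handled in both directions. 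For \RuleBorrLeft\ in the left-to-right direction, $\sem{C_1}=F\otimes I_d$, so $\sem{C_0;C_1}=(F\otimes I_d)(\sem{C_0}\otimes I_d)=(F\sem{C_0})\otimes I_d$ because $d\notin\qv(C_0)$. For the right-to-left direction, I need: if $(F'\otimes I\ldots)(\sem{C_0}\otimes I_d)=G\otimes I_d$ and $\sem{C_0}$ does not touch $d$, then $\sem{C_1}=G\sem{C_0}^{-1}\otimes I_d$. This follows by multiplying on the right by $\sem{C_0}^{-1}\otimes I_d$, using unitarity. This inversion step is the crux. It shows that safety is preserved in both directions, so no $\bot$ appears on one side only. \RuleBorrRight\ is symmetric, using left multiplication by $\sem{C_1}^{-1}$.
\end{itemize}

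\emph{Borrowing case.} Here $\borr{d}{C}\to C[q/d]$ with $q\notin\qv(C)\cup\qa(C)$. Safety gives $\sem{C}=F\otimes I_d$, so $\sem{\borr{d}{C}}=F$. The substitution lemma gives $\sem{C[q/d]}=\sem{C}[q/d]=F\otimes I_q$, which is $F$ padded by identity on the idle qubit $q$. This is exactly the padded equality required by the strengthened invariant. Freshness of $q$ ensures the renaming does not collide with existing wires or bound ancillas.

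Chaining the single steps gives $\sem{C'}=\sem{C}$, up to the identity padding convention, which is the intended reading of the proposition.
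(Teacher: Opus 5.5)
Your proposal is correct and follows the same route as the paper's appendix proof. Both show that each rule preserves semantics (direct operator computations for \RuleSwap, \RuleAssoc, \RuleBorrLeft/\RuleBorrRight, \RuleBorrComm, and the substitution lemma $\sem{C[q/d]}=\sem{C}[q/d]$ for \RuleSubst), lift this through the structural rules by induction, and chain along the transition sequence. Your identity-padding invariant, the unitarity argument that one side of \RuleBorrLeft/\RuleBorrRight{} is safe iff the other is, and the explicit propagation of safety to intermediate circuits and redexes fill in steps that the paper leaves implicit or calls trivial. In particular, the paper silently identifies $F$ with $F\otimes I_q$ after \RuleSubst, and only asserts the two-sided safety claim.
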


\paragraph{Why joint borrowing is non-trivial.}
Although each \RuleSubst~step is local, borrowing opportunities can interact,
requiring topological transitions between steps, as
\Cref{fig:frozen_wider} illustrates. In particular, exposing one borrowing
opportunity may block another, so the useful topological arrangement can
depend on earlier borrowing choices. We discuss this interaction in
\Cref{sec:tractable-case}.

\begin{figure}[htb]
  \centering
  \includegraphics[width=\linewidth]{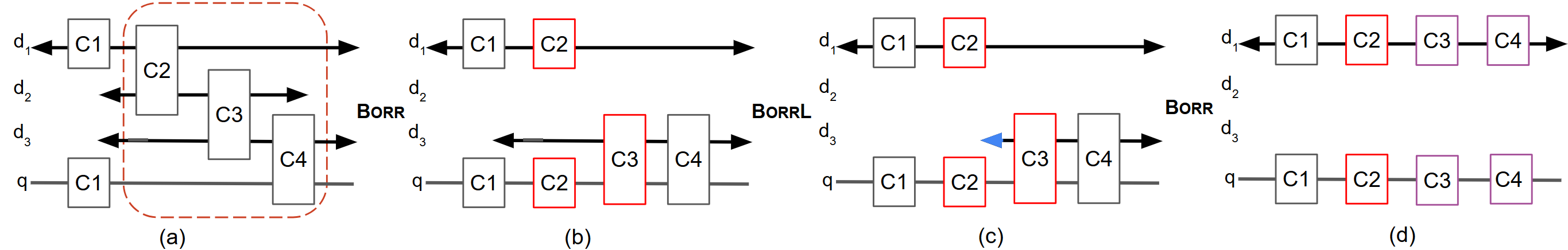}
  \caption{Interleaving topology and borrowing. Borrowing $q$ for $d_2$ first
  yields width 3 in (b); a topological transition then exposes $d_3$ in (c),
  enabling another borrowing to reach width 2 in (d).}
  \label{fig:frozen_wider}
\end{figure}

\section{Endpoint DAG Model}\label{sec:heuristic:graph}\label{sec:endpointdag}

By analyzing the structural and topological transition rules, we observe that \RuleSwap~ is the only one in structural and topological transitions that can modify gate order. Particularly, gates sharing a common qubit cannot swap, while gates on disjoint qubits can.
This observation motivates representing the constraints on gate order with a directed graph, which effectively quotients out the equivalence induced by structural and topological rules, allowing us to focus solely on the borrowing transitions.

\subsection{Definition}\label{sec:endpointdag:def}

To formalize this intuition, we associate with each circuit $C$ an \emph{endpoint DAG model} $\T(C) = (Q, Q_D, V, E)$, which encodes qubits, gates (i.e., operations), and their topological relationships in an easily visualized graph. The construction is as follows:
\begin{itemize}
  \item $Q$ is the set of all qubit identifiers, i.e., $\qv(C) \cup \qa(C)$.
  \item $Q_D \subseteq Q$ marks the dirty ancilla qubits, i.e., $\qa(C)$.
  \item $V$ is a finite set containing three types of nodes:
  \begin{itemize}
      \item $\In(q)$ and $\Out(q)$: one pair for each qubit $q \in Q$, representing the logical entry and exit points of $q$ in the circuit.
      \item $\Op(g)$: one for each quantum gate $g$ in $C$. For simplicity, we use ``node $g$'' to refer to ``node $\Op(g)$'' when there is no ambiguity, but $\In$ and $\Out$ are never omitted.
  \end{itemize}
  \item $E$ is a multi-set of directed edges over $V$: for each qubit $q \in Q$, let $g_1, g_2, \cdots , g_l$ be the gates acting on $q$ in program order. Then we add edges of the path
  \[(\In(q), g_1,  g_2, \cdots, g_l, \Out(q))\]
  to $E$, representing the temporal flow of gates along $q$.
\end{itemize}

This construction yields a directed acyclic graph (DAG) that preserves all gate-order constraints relevant to \RuleSwap, while abstracting away the nested-structure details and the gate semantics. For example,
the first three circuits in Fig.~\ref{fig:trans-rule-case} all correspond to the same
endpoint DAG model, illustrated in Fig.~\ref{fig:endpointmodel}.

\begin{figure}
    \centering
    \includegraphics[width=0.7\linewidth]{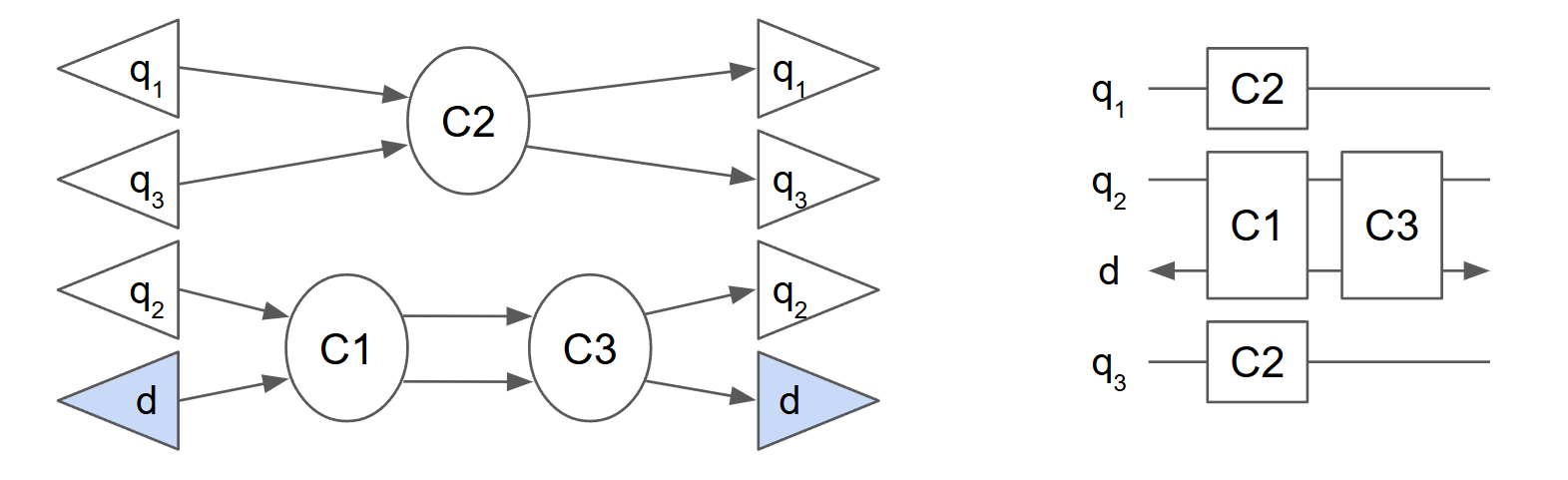}
    \caption{The Endpoint DAG Model as a canonical abstraction.
    The left panel shows the corresponding formal multi-digraph $(V, E)$,
    where each gate is a single node and edges track per-qubit temporal flow.
    The right panel shows a ``de-structured'' circuit view that reflects the
    model's indifference to syntactic nesting and borrowing scopes.
    Identical identifiers (e.g., $C_2$) represent a single multi-qubit gate
    acting on non-adjacent qubits ($q_1$ and $q_3$).
    }
    \label{fig:endpointmodel}
\end{figure}

The endpoint DAG model serves as a canonical representative for an entire
equivalence class of circuits under topological and structural transitions.
As illustrated in Figure~\ref{fig:endpointmodel} (right), our visualization of the model intentionally
omits syntactic details such as the boundaries of \texttt{borrow} statements or
the hierarchical nesting of subcircuits.
We draw a left-aligned circuit by default, i.e., each gate is placed as far to the left as possible. However, the reader should understand that the orders of $C_1$ and $C_2$,
and $C_2$ and $C_3$, are not fixed, since they do not have edges connecting them.
Furthermore, multi-qubit gates are treated as atomic units. In Figure~\ref{fig:endpointmodel}, the
two rectangles labelled $C_2$ represent a single gate acting on qubits $q_1$ and
$q_3$. Finally, we do not explicitly draw $\In$ and $\Out$ nodes,
since they are constructed only because we need the edges before the first operation
and after the last operation to represent the opportunity for borrowing.
We later denote the first and the last operation node of qubit $q$ by $\fst(q)$ and $\lst(q)$,
respectively.

Not every quadruple $(Q, Q_D, V, E)$ corresponds to a valid quantum circuit. To characterize exactly which endpoint DAGs can arise from a quantum circuit, we introduce a set of \emph{validity conditions}:
\begin{enumerate}
    \item \emph{Acyclic}: The multi-digraph $(V, E)$ contains no cycles, corresponding to the fact that quantum circuits do not allow loops.
    \item \emph{Balanced}: For all operation nodes $v = \Op(g)$, their in-degree $\deg^{-} (v)$ equals their out-degree $\deg^{+}(v)$, equal to the arity of $g$. This captures the fact that quantum gates neither create nor destroy
    qubits, but only transform them.
    \item \emph{Input/output consistency}: For each qubit $q \in Q$:
    \begin{itemize}
        \item $\In(q)$ has no incoming edges and $\Out(q)$ has no outgoing edges;
        \item there is a distinguished directed path $P_q$ from $\In(q)$ to $\Out(q)$.
    \end{itemize}
    Moreover, the edge multiset $E$ is the multiset-disjoint union of these paths:
    \[
        E = \biguplus_{q\in Q} E(P_q).
    \]
    This ensures that each qubit has well-defined starting and ending points, and
    that each edge occurrence belongs to exactly one qubit line.
\end{enumerate}

The power of this abstraction lies in its ability to bridge syntax and
structure. The following lemma confirms that the endpoint DAG model is not just
a visualization, but a rigorous foundation for circuit analysis. By quotienting
out the syntactic hierarchy, it establishes a bijective relationship between
valid DAGs and structural equivalence classes of circuits.
(The full proof is deferred to Appendix~\ref{app:correctDAG}.)

\begin{lemma}\label{lem:dag-properties}
The endpoint DAG model possesses two fundamental properties:
\begin{itemize}
    \item \textbf{Invariance:} If $C_1 \topotrans^{*} C_2$, then their DAG models are identical: $\T(C_1) = \T(C_2)$.
    \item \textbf{Validity:} An endpoint DAG $G$ corresponds to a quantum
circuit if and only if it satisfies the three validity conditions.
\end{itemize}
\end{lemma}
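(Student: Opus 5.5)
For \textbf{Invariance}, it suffices to check that a single topological transition $C_1 \topotrans C_2$ leaves $\T(C_1)$ unchanged, and then induct on the length of $\topotrans^{*}$. Structural transitions are harmless because $\T$ is compositional. The node set depends only on $\qv(C)\cup\qa(C)$ and the multiset of gates, and every rule in $\TopoRules$ preserves both. The rules only reshuffle or rescope subcircuits, and under our standing freshness assumption no renaming occurs. So the real work is to show that the edge set is preserved, i.e., for each qubit $q$ the sequence of gates acting on $q$ in program order is unchanged.

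I would go rule by rule. \RuleAssoc\ and \RuleBorrComm\ leave the linearised gate sequence literally identical. \RuleBorrLeft\ and \RuleBorrRight\ also leave the linearised gate sequence identical, since the $\mathbf{borrow}$ scope is not a gate; the side conditions $d\notin\qv(C_0)$ and $d\notin\qv(C_1)$ guarantee that the occurrences of $d$ still all lie inside the scope, so the per-qubit path for $d$ is the same. For \RuleSwap\ the linear order does change, but since $C_1$ and $C_2$ are disjoint, every qubit $q$ is touched by at most one of them. Hence the restriction of the gate order to gates acting on $q$ is unchanged, and the path $(\In(q), g_1,\dots,g_l,\Out(q))$ is identical.

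For \textbf{Validity}, the forward direction comes straight from the construction. Each edge lies on exactly one qubit path $P_q$, which gives input/output consistency. Each gate of arity $k$ lies on $k$ distinct paths, entering and leaving once on each, which gives balance. For acyclicity, I would number gates by their position in the linearised circuit; every edge then goes from a smaller to a larger number, with $\In$ nodes as sources and $\Out$ nodes as sinks.

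For the converse, take a valid $G$ and a topological order of its operation nodes, which exists by acyclicity. Emit each gate $g$ as $U[\overline{q}]$, where $\overline{q}$ lists the qubits whose distinguished paths $P_q$ pass through $g$. Balance together with the edge decomposition ensures that the arity matches and that each path visits $g$ at most once, since a second visit would create a cycle. Then wrap each $q\in Q_D$ in a $\mathbf{borrow}[q]\{\cdot\}$ around the whole sequence. Finally, check $\T(C)=G$: along each $P_q$, the topological order lists its gates in path order, so the constructed per-qubit sequences reproduce exactly the paths $P_q$.

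I expect the main obstacle to be this converse. One difficulty is that $E$ is a multiset, so edges alone do not determine the qubit assignment at a gate with parallel edges. I would resolve this by reading the correspondence relative to the distinguished paths $P_q$ that are part of the data. A second point is that "corresponds to a circuit" ignores gate semantics, since any unitary label of the right arity works. I would also remark that, by the invariance half, the choice of topological order and borrow placement is immaterial up to $\topotrans^{*}$.
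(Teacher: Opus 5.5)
Your proof is correct. The invariance half and the forward direction of validity follow the paper's own arguments. For invariance, that is a rule-by-rule check that the per-wire gate sequences are preserved, with \RuleSwap\ handled by disjointness, followed by induction on $\topotrans^{*}$. For the forward direction, it is program-order numbering, arity counting, and per-wire paths.

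You differ on the converse. The paper inducts on the number of operation nodes. It removes the first node $u=\Op(g)$ of a topological order, whose predecessors are all input nodes, and short-circuits its edges to get a smaller graph $G'$ that it asserts is still valid. It then returns $g[q_1,\dots,q_\ell];C'$. You instead emit all gates at once along a topological order, read each gate's qubits off the distinguished paths, and verify $\T(C)=G$ in a single check.

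This is the same idea unrolled, but your direct version is cleaner in three ways.
\begin{enumerate}
\item It needs no lemma that the contracted graph stays valid; the paper only asserts this.
\item It never passes through intermediate graphs containing a gate-free wire $\In(q)\to\Out(q)$. The grammar cannot produce such a wire: there is no empty circuit, and $\qv(C)\cup\qa(C)$ contains only used qubits. The paper's induction creates such wires whenever it peels off the last gate on some qubit, and its base case appeals to an ``empty circuit''.
\item It places the $Q_D$ wires under $\mathbf{borrow}$ explicitly. By contrast, $g[q_1,\dots,q_\ell];C'$ puts $g$ outside the scope of any dirty $q_i$ it touches.
\end{enumerate}

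Neither route handles a valid graph in which some qubit carries no gate at all. Such a graph is not of the form $\T(C)$; in your construction, wrapping such a $d$ would violate the standing assumption $d\in\qv(C)$. So the validity claim should be read for graphs in which every qubit is acted on, a property that splicing preserves.

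Finally, drop your closing aside. Invariance only goes from $\topotrans^{*}$-equivalence to equal DAGs, so it cannot show that different topological orders or borrow placements give $\topotrans^{*}$-related circuits. That would need the rearrangement argument from the soundness half of Theorem~\ref{thm:step-correspondence}, and the lemma does not need it.
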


\begin{figure}[tb]
    \centering
    \includegraphics[width=1\linewidth]{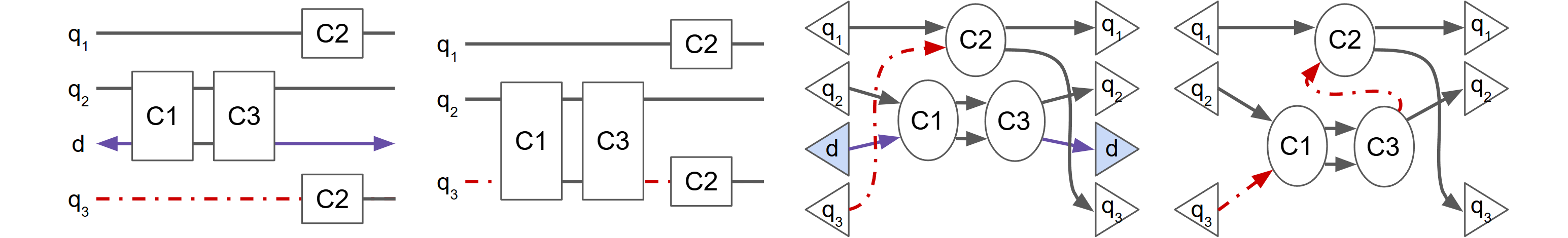}
    \caption{Borrowing as edge-splicing. In the DAG, substituting $d$ with
      $q_3$ cuts the idle edge $(\In(q_3), C_2)$ and splices $d$'s path into the
      gap (in red). The internal edge $(C_1, C_3)$ is preserved,
      while edges of endpoints of $d$ (in purple) are discarded.}
    \label{fig:dagborrow}
\end{figure}

\subsection{Borrowing as Edge-Splicing}
Having abstracted away topological reordering, we now examine how borrowing
transitions mathematically manifest in this graph model.

Consider the borrowing transition shown in Figure~\ref{fig:dagborrow}, where
the dirty ancilla $d$ is substituted by the borrowed qubit $q_3$ (same as in
Fig.~\ref{fig:trans-rule-case}).
When the substitution $C[q_3/d]$ occurs, the operations previously acting on
$d$ ($C_1$ and $C_3$) are forced to execute on $q_3$ before $C_2$.
From a graph-theoretic perspective, this transformation is
remarkably elegant: it simply cuts the idle edge $e = (\In(q_3), C_2)$ and
redirects the flow through $d$'s operations, creating new edges $(\In(q_3),
C_1)$ and $(C_3, C_2)$.
Since $d$ is no
longer a standalone qubit after the transition, its boundary nodes $\In(d),
\Out(d)$ and their terminal edges are naturally discarded.

Crucially, the internal structure of $d$'s path remains untouched as $(C_1, C_3)$, shown in Figure~\ref{fig:dagborrow}.
The causal execution order between $C_1$ and $C_3$ was already fixed by their
sequential action on $d$; the borrowing transition merely re-routes the
entry and exit points of this block to align with the borrowed qubit.
This geometric transformation allows us to model borrowing purely as a graph-level operation as follows:

\begin{definition}
An \emph{edge-splicing operation} on an endpoint DAG model $G = (Q, Q_D, V, E)$ is parameterized by an edge $e = (u,v) \in E$ and a dirty ancilla $d \in Q_D$,
producing a new quadruple
$G[e \triangleleft d] = (Q',Q'_D,V',E')$ defined as follows:
\[
\begin{cases}
Q' = Q \setminus \{d\}, \qquad Q_D' = Q_D \setminus \{d\},\\
V' = V \setminus \{\In(d),\Out(d)\},\\
E' = E \cup \{(u, \fst(d)), (\lst(d), v)\} \setminus \{(u,v), (\In(d),\fst(d)), (\lst(d), \Out(d)) \}
\end{cases}
\]

Intuitively, this operation removes the path corresponding to the
dirty qubit $d$ and splices it into the edge $e$.
\end{definition}

A natural question now arises: when is a borrowing step feasible?
A straightforward approach would be to check whether the resulting graph
$G[e \triangleleft d]$ still satisfies the validity conditions.

However, we show that feasibility can be characterized by a much simpler
unreachable condition, which also establishes the soundness of borrowing
in the endpoint DAG model.

\begin{definition}[Borrowing Step]\label{def:borrowing_dag}
Let $C$ be a circuit and $\T(C)=G=(Q,Q_D,V,E)$ its endpoint DAG model.  
For any edge $(u,v)\in E$ and dirty ancilla $d\in Q_D$,
the borrowing step is written in inference form:
\[
\frac{(u,v)\in E \qquad d\in Q_D \qquad \fst(d) \not\leadsto u \qquad v \not\leadsto \lst(d)}
     {G \rightarrow G[{(u,v) \triangleleft d}]}.
\]
Here, $G[(u,v) \triangleleft d]$ is the edge-splicing operation defined above.
\end{definition}

The necessity of the unreachable condition is straightforward: splicing
introduces two new directed edges, $(u, \fst(d))$ and $(\lst(d), v)$. If the
original graph contained a path $\fst(d) \leadsto u$ or $v \leadsto
\lst(d)$, adding the respective new edge would immediately create a cycle.
However, the sufficiency proof is more involved and is presented in Lemma~\ref{lem:splicing-validity} (Appendix~\ref{app:correctDAG}).

Crucially, this purely graphical borrowing step is perfectly isomorphic to
the borrowing transition ($\RuleSubst$) in the circuit model.
We formalize this as the following step
correspondence theorem (full proof provided in Appendix~\ref{app:correctDAG}):
      
\begin{theorem}[Step correspondence]
\label{thm:step-correspondence}
Let $C$ be a circuit and $\T(C)=G$ its endpoint DAG model.
\begin{itemize}
    \item (Soundness) If $G \to G'$ by an edge-splicing step,
    then there exist circuits $C_0$ and $C'$ s.t.:
    \[ C \topotrans^{*} C_0 \quad \text{and} \quad C_0 \xrightarrow{\RuleSubst} C' \quad \text{and} \quad \T(C') = G'. \]
    \item (Completeness) If $C \to C'$ by a borrowing transition,
    then $$\T(C) \to \T(C')$$ by an edge-splicing step.
\end{itemize}
\end{theorem}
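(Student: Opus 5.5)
We prove the two directions separately. Both rely on Lemma~\ref{lem:dag-properties}: invariance lets us pick any topologically equivalent circuit, and validity guarantees that DAGs of the right shape are realised by circuits. We also use the fact that a linearisation (topological order) of $\T(C)$ can be realised as a circuit in the same $\topotrans^{*}$-class. We intend to extract this fact from the proof of the validity half of the lemma, since that proof must build a circuit from any valid DAG.

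\textbf{Completeness.} Suppose $C \to C'$ by \RuleSubst, applied to a subterm $\borr{d}{D}$ with borrowed qubit $q\notin \qv(D)\cup\qa(D)$. This rule may sit under structural contexts. The gates of $D$ acting on $d$ form the path $\fst(d),\dots,\lst(d)$. Because $q$ is idle inside $D$, the gates on $q$ in program order split into a prefix before $D$ and a suffix after $D$. Let $u$ be the last node of the prefix, or $\In(q)$ if the prefix is empty. Let $v$ be the first node of the suffix, or $\Out(q)$. Then $(u,v)\in E$.

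Computing $\T(C')$ directly from the definition, $q$'s path becomes prefix, then $d$'s gates, then suffix. All other qubits' paths are unchanged, and $\In(d),\Out(d)$ disappear. This is exactly $G[(u,v)\triangleleft d]$. The unreachability conditions then follow for free. $\T(C')$ is the DAG of a circuit and is therefore acyclic. It contains the edges $(u,\fst(d))$ and $(\lst(d),v)$. Every path of $G$ avoiding the removed edges survives in $\T(C')$, so a path $\fst(d)\leadsto u$ in $G$ would close a cycle. We must check that such a path can be rerouted when it uses $(u,v)$ or the terminal edges of $d$. A path through $(u,v)$ can be replaced by $u\to\fst(d)\leadsto\lst(d)\to v$. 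Terminal edges of $d$ are never interior to a path. The case $v\leadsto\lst(d)$ is symmetric.

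\textbf{Soundness.} Assume $(u,v)\in E$ lies on the path $P_q$ of qubit $q$, and that $\fst(d)\not\leadsto u$ and $v\not\leadsto\lst(d)$. By Lemma~\ref{lem:splicing-validity}, $G'=G[(u,v)\triangleleft d]$ is valid. We then need a circuit $C_0$ with $C\topotrans^{*}C_0$ in which $d$'s borrow block is a contiguous subterm $\borr{d}{D}$ that touches no gate of $q$ and is placed between $u$ and $v$.

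The construction is as follows. Let $A$ be the set of nodes that reach $u$ or reach $\fst(d)$ but do not lie strictly between $\fst(d)$ and $\lst(d)$. Let $M$ be the set of nodes lying on some path $\fst(d)\leadsto\lst(d)$. The remaining nodes form the third group. The hypotheses ensure that this yields a topological order of $G$ of the form (nodes before) $\cdot$ $M\cup\{$other independent gates$\}$ $\cdot$ (nodes after). In this order $u$ precedes all of $M$ and $v$ follows all of $M$. Since $(u,v)$ is consecutive on $P_q$, no gate in $M$ acts on $q$.

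We take $D$ to be the sequential composition of the gates of $M$ in this order. It contains every gate on $d$, so $d\in\qv(D)$, and $q\notin\qv(D)$. Next we realise the order as a circuit. Using \RuleSwap{} and \RuleAssoc{} we move gates into this order, and using \RuleBorrLeft, \RuleBorrRight{} and \RuleBorrComm{} we shrink or widen the scope of $\borr{d}{\cdot}$ so that it encloses exactly $D$. Other ancillas' borrow scopes are handled by pushing them outward with \RuleBorrComm. Because all these transitions are DAG-invariant, it is enough to exhibit one circuit $C_0$ with $\T(C_0)=G$ of this shape. That circuit is obtained by building it from the linearisation and reusing the "same DAG implies $\topotrans^{*}$-equivalent" fact from the validity proof. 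Then \RuleSubst{} applies, and completeness's computation gives $\T(C')=G'$.

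The main obstacle is this scope-normalisation step. We need that equal DAGs imply $\topotrans^{*}$-equivalence, and that each ancilla's $\mathbf{borrow}$ can be narrowed to an interval of the linear order containing its gates. We would first prove a normal form in which all borrows are hoisted outermost (via \RuleBorrLeft/\RuleBorrRight{} and \RuleBorrComm) over a flat gate sequence. Swaps of adjacent disjoint gates then reach any linearisation. Finally we narrow $d$'s scope back down to $D$.
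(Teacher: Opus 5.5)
Your overall route matches the paper's. For completeness, you read $\T(C')$ off the substituted circuit and get the unreachable condition from acyclicity of $\T(C')$. Your rerouting through $u\to\fst(d)\leadsto\lst(d)\to v$ works but is unnecessary: in the acyclic $G$, a path ending at $u$ or starting at $v$ can never traverse $(u,v)$. For soundness, you hoist all borrows outward with $d$ innermost (so $\qa(D)=\emptyset$), reach a suitable linearisation by swapping adjacent incomparable (hence disjoint) gates, and narrow $d$'s scope with reverse \RuleBorrLeft/\RuleBorrRight. One correction: the fact that equal DAGs imply $\topotrans^{*}$-equivalence cannot be extracted from the validity proof. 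That proof only builds some borrow-free circuit by peeling off a first gate. Your fallback normal-form argument is what is really needed, and it is exactly what the paper carries out directly on $C$.

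The one step that fails as written is your explicit linear order. Your set $A$ (nodes reaching $u$ or $\fst(d)$) misses gates that feed intermediate $d$-gates along other wires. Suppose $\lst(d)$ is a two-qubit gate on $(d,r)$ and $x$ is the preceding gate on $r$. Then $x$ reaches neither $u$ nor $\fst(d)$ and is not in $M$, so ``$A$, then $M$, then the rest'' places $x$ after its successor $\lst(d)$. If you absorb such gates into the middle block (your ``other independent gates'', which are not actually independent), then taking $D$ to be exactly the gates of $M$ need not give a contiguous subterm.

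The clean fix is the paper's. You already invoked Lemma~\ref{lem:splicing-validity} but never used it: take any topological order of $G'$ restricted to operation nodes.
\begin{itemize}
\item It is a linear extension of $G$'s order, because the removed edge $(u,v)$ is replaced by the path $u\to\fst(d)\leadsto\lst(d)\to v$.
\item The new edges force $u<\fst(d)$ and $\lst(d)<v$.
\item Every gate on $d$ lies between $\fst(d)$ and $\lst(d)$, and every gate on $q$ is $\le u$ or $\ge v$ along $P_q$.
\end{itemize}
So let $D$ be the whole contiguous block from $\fst(d)$ to $\lst(d)$, unrelated gates included; it contains no gate on $q$. Alternatively, keep your idea but use $(\mathrm{Anc}(\lst(d))\cup\mathrm{Anc}(u))\setminus M$ as the prefix. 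This set is downward closed, and the prefix together with $M$ is downward closed, by the unreachable conditions. That makes $M$ itself contiguous.
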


\begin{proof}[Proof sketch]
\emph{(Soundness)}:
Suppose $G'=G[e\triangleleft d]$. We construct the target circuit $C'$
by specifying the parameters $q$ and $d$
of the borrowing transition. From the edge $e=(u,v)$, we
identify the borrowed qubit $q$ in $C$. The acyclicity of $G'$ mathematically
guarantees that, using \RuleBorrLeft, \RuleBorrRight, \RuleBorrComm,
\RuleAssoc, and \RuleSwap, the original circuit $C$ can be topologically
rearranged to $C_0$, i.e., $C\topotrans^*C_0$, such that all operations of
$d$ are sequentially between $u$ and $v$, and $q$ is unrelated to the code
block of $d$.
This allows the borrowing transition $C_0\xrightarrow{\RuleSubst}C'$, verifying that $\T(C') = G'$.

\emph{(Completeness)}:
Conversely, we can reveal $e$ and $d$ from the borrowing transition of the circuit.
Then it suffices to verify that $e$ and $d$ satisfy the unreachable condition,
and $\T(C)[e\triangleleft d] = G'$.
\end{proof}

\paragraph{Solving the Borrowing Problem.}
Recall from Definition~\ref{def:borrow-problem} that the ultimate goal of the dirty-qubit
borrowing problem is to find a sequence of transitions that minimizes the
resulting circuit width $|\argsop(C')|$. By
Theorem~\ref{thm:step-correspondence}, this problem is rigorously reduced to
finding a sequence of valid edge-splicing steps on the Endpoint DAG that
maximally shrinks the set $Q$. This structural correspondence forms the
unified operational foundation for our complexity analysis
(\Cref{sec:hardness}) and the \bona\ scheduling algorithm
(\Cref{sec:heuristic}).

\section{Complexity results}\label{sec:hardness}
In this section, we present complexity results of the dirty-qubit borrowing problem.
The hardness result motivates the heuristic-driven scheduler presented in \Cref{sec:heuristic}.

\subsection{An Exactly Solvable, Linear-Time Case: Frozen Topology}
\label{sec:tractable-case}

Recall that the goal of dirty-qubit borrowing is to
find a transition sequence that
minimizes the total width of the resulting circuit. Before addressing the general problem, we first isolate a restricted subclass by \emph{freezing the circuit topology}, i.e., disallowing topological transitions. In this setting, the circuit’s structural hierarchy and borrowing-scope boundaries are fixed, removing one degree of freedom from the search space. Consequently, any transition sequence can only apply \RuleSubst, so each step reduces to choosing which qubit to borrow. This restriction makes the subclass exactly solvable, as explained below.

At first glance, finding the optimal sequence of borrowing substitutions
still appears combinatorial. However, we show that under a frozen topology,
this sequence can be determined purely by computing a structural metric. Let
$W(C)$ be the minimal width of circuit $C$ achievable under a frozen
topology. $W(C)$ can be inductively defined by a bottom-up structural
recursion:

\begin{itemize}
    \item Basic gates:
    \[ W(U[\overline{q}]) = |\overline{q}| \]
    The width required by a single gate is at least the number of qubits it acts on.

    \item Sequential composition:
    \[ W(C_1; C_2) = \max\bigl(|\qv(C_1; C_2)|, W(C_1), W(C_2)\bigr) \]
    Working qubits traversing the composition require a baseline width of $|\qv(C_1;
C_2)|$. Beyond these working qubits, the local dirty ancillas in $C_1$
and $C_2$ are temporally disjoint. Thus, the qubit wire released by ancillas at
the end of $C_1$ can be immediately reallocated to ancillas in $C_2$.
The $\max$ operation
guarantees sufficient width to accommodate the maximum width
requirement of the internal blocks as well as the shared working
qubits.

    \item Borrowing declaration:
    \[ W(\borr{d}{C}) = W(C) \]
    Declaring a dirty ancilla merely establishes a scope boundary for a
borrowed qubit that is already accounted for within $W(C)$. This declaration
does not increase the overall width of the
circuit.
\end{itemize}

This structural simplicity places the dirty-qubit borrowing problem
with frozen topology
strictly in the complexity class P, providing a stark contrast to the
general case.

\begin{theorem}\label{thm:frozen-exact}
For any valid circuit $C$, the minimal achievable width under a frozen
topology, $W(C)$, is computable in linear time $O(|C|)$.
\end{theorem}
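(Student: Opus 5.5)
The proof has two parts: (i) the recursion for $W$ really does give the minimal achievable width under a frozen topology, and (ii) it can be evaluated in $O(|C|)$ time. I will prove (i) by structural induction on $C$, and (ii) by a single post-order traversal of the syntax tree.

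\emph{Correctness.} I strengthen the induction hypothesis as follows. For every subcircuit $C$ there is a borrowing assignment, using only \RuleSubst\ with the scopes fixed, under which the physical wires hosting $C$ number exactly $W(C)$, and no assignment does better.

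For a gate $U[\overline{q}]$ there is nothing to borrow, and the width is $|\overline{q}|$.

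For $\borr{d}{C}$ the scope adds no gates, so both bounds are inherited from $C$. If $d$ was borrowed from a wire already counted in $W(C)$, nothing changes. Conversely, any wire used by $d$ is one of the wires hosting $C$, because $d\in\qv(C)$ by our renaming assumption.

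For $C_1;C_2$ the argument is in two directions.
\begin{itemize}
\item \emph{Lower bound.} Every wire of $\qv(C_1;C_2)$ is a working qubit and cannot be eliminated. In addition, the width of the whole circuit is at least the width of each component.
\item \emph{Upper bound.} Take optimal assignments for $C_1$ and $C_2$ separately and let $m=\max(|\qv(C_1;C_2)|,W(C_1),W(C_2))$. We construct a pool of $m$ wires containing all of $\qv(C_1;C_2)$. Ancillas of $C_1$ whose scopes are closed inside $C_1$ occupy at most $W(C_1)$ wires while $C_1$ runs, and then release them. The same wires can then be reused by the ancillas of $C_2$, since a dirty ancilla may borrow any qubit idle during its scope, including a working qubit of $C_1$ that $C_2$ does not touch. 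The \RuleSubst\ side condition $q\notin\qv(C)\cup\qa(C)$ for the borrowed qubit holds because, inside each block, the induction hypothesis already guarantees that the assignment is local to that block.
\end{itemize}

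\emph{Main obstacle.} The hardest point is the upper bound for sequential composition. The induction hypothesis must record not just a number but a realizable assignment whose borrowed wires can be renamed arbitrarily within a pool. Two things need checking:
\begin{itemize}
\item working qubits of $C_1$ may serve as dirty hosts inside $C_2$ (and vice versa), so the renaming must send ancillas of $C_2$ only to wires outside $\qv(C_2)$;
\item the pool size $m$ leaves enough such wires, namely $m-|\qv(C_2)|\ge W(C_2)-|\qv(C_2)|$.
\end{itemize}
The inequality in the second point follows directly from $m\ge W(C_2)$. I would phrase the hypothesis as: ``for any injective relabeling of the non-working wires into a pool of sufficient size, the assignment remains valid.'' This invariance follows from the substitution lemma $\sem{C[q/d]}=\sem{C}[q/d]$ together with freshness.

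\emph{Linear time.} Compute $\qv$ and $W$ bottom-up in one traversal. The only nontrivial quantity is $|\qv(C_1;C_2)|$. To get it in amortized constant time per node, I would precompute, for each qubit, the first and last syntax-tree leaves where it occurs, and attribute each qubit to the lowest common ancestor of its occurrences. Then $|\qv(C)|$ equals the number of qubits occurring in the subtree minus those bound inside it, and it can be accumulated by counting leaf occurrences and subtracting repeated occurrences at the corresponding LCA nodes. The LCA computations can be done offline in linear time (Tarjan's offline LCA, or Gabow--Tarjan union-find for the stricter bound). Alternatively, under the usual convention that $|C|$ counts gate arities, I would compute these counts by a standard small-to-large merge. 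That gives $O(|C|\log|C|)$, so the LCA route is the one needed for $O(|C|)$.
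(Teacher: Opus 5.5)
Your proof is correct. It follows the same basic route as the paper, evaluating the structural recurrence for $W$ bottom-up, but it establishes more than the paper does.

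\textbf{How the two proofs differ.} The paper's proof is one sentence. It takes the recurrence as the characterization of the frozen-topology optimum, justified only by the informal bullet explanations, and asserts that a single bottom-up traversal computes it. You add two things.
\begin{enumerate}
\item You prove that the recurrence actually equals the minimum. You do this by strengthening the induction to a relabeling-invariant assignment, which is exactly what is needed to merge the two halves of $C_1;C_2$ into a shared pool of $m$ wires.
\item You notice that the term $|\qv(C_1;C_2)|$ is not free per node. Naive set unions can cost quadratic time, and small-to-large merging still gives only $O(|C|\log|C|)$. You obtain the counts by subtree distinct-counting with LCAs, which is what the stated $O(|C|)$ bound really requires.
\end{enumerate}
The paper's version buys brevity. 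Yours actually justifies both words in the claim, ``minimal'' and ``linear.''

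\textbf{Points to tighten.}
\begin{enumerate}
\item The lower bound $W(C_1;C_2)\ge W(C_i)$ is asserted without argument, yet it is not immediate. Inside the composite, ancillas of $C_1$ may borrow wires belonging to $C_2$. To pull a composite assignment back to an isolated one, proceed as follows. For each non-working wire used inside $C_1$, choose one ancilla of $C_1$ whose occurrences end up on that wire. Distinct wires get distinct ancillas, so renaming each wire to its chosen ancilla is injective and lands in $\qa(C_1)$.
\item The relabeling invariance you need is a syntactic fact: the side condition $q\notin\qv(C)\cup\qa(C)$ of \RuleSubst\ is preserved under injective renaming, provided nested substitutions are performed in a suitable order. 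It does not follow from the semantic lemma $\sem{C[q/d]}=\sem{C}[q/d]$.
\item State the counting scheme precisely. Place $+1$ at every leaf occurrence, and $-1$ at the LCA of each pair of \emph{consecutive} occurrences of the same qubit in leaf order. Also place $-1$ at each $\mathbf{borrow}$ node for its bound ancilla. Subtracting multiplicities at a single LCA per qubit, as your ``first and last leaves'' phrasing suggests, gives wrong counts in intermediate subtrees.
\end{enumerate}
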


The theorem follows immediately by evaluating the above structural
recurrence in a single bottom-up traversal of $C$.

\paragraph{Transition to Intractability.}
The linear-time exactness of $W(C)$ reveals that dirty-qubit borrowing is
simple when scopes are static, but \Cref{fig:frozen_wider} demonstrates that
this rigidity leaves optimization potential untapped. In particular, a
two-stage heuristic that first applies all topological rules to shrink scopes
and then solves the frozen borrowing problem cannot expose all opportunities
needed by an optimal sequence.

This counterexample reveals a key insight: \emph{the optimal topological
structure is dynamically generated during the borrowing process.}
Borrowing a qubit changes the dependency graph, enabling new
topological moves, which in turn expose further borrowing opportunities.
Finding the optimal sequence of these deeply interleaved transitions
elevates the general problem to NP-hardness, as we establish next.

\subsection{Computational Complexity: The Hardness of Reuse} \label{sec:np-hardness}

In this section, we analyze the computational complexity of the dirty-qubit borrowing problem (Definition~\ref{def:borrow-problem}).  By guessing every borrowing step, it is easy to show that the decision version of the dirty-qubit borrowing problem is in NP. In the following,
we show that even a simplified version, the \fr\ problem (deciding if all dirty ancillas can be completely mapped), is NP-complete. This implies that finding an optimal borrowing strategy is likely intractable for large-scale circuits.

 \begin{definition}[\fr] The \fr\ problem is defined as follows:
    \begin{itemize}
        \item \textbf{Input:} An Endpoint DAG Model $G = (Q, Q_D, V, E)$.
        \item \textbf{Output:} Whether all dirty ancillas can be completely mapped, i.e., whether the optimal width can be $|Q-Q_D|$.
    \end{itemize}
\end{definition}

To bridge the gap between logic (3-SAT) and circuit topology, we introduce an intermediate graph problem. 

\begin{definition}[Uncertain Directed Graph, UDG]
    A UDG $\mathcal{G}=(V,E)$ is a graph where some edges are "uncertain", i.e. with uncertain tail or head. Specifically, an edge with  uncertain tail $Ts, s\in V, T\subseteq V$ has a fixed head $s$ but its tail must be chosen from a candidate set $T$. An edge with  uncertain head $tS$ is defined similarly. A \textit{realization} $G \in \mathcal{G}$ is obtained by picking exactly one vertex from the candidate set for each uncertain edge, resulting in a (certain) directed graph.
\end{definition}

\begin{lemma} \label{lem:NPdagudg}
    Deciding whether a UDG has a DAG realization (\daginudg) is NP-hard.
\end{lemma}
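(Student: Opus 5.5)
We reduce from 3-SAT. Let $\varphi = c_1\wedge\cdots\wedge c_m$ be a 3-CNF formula over variables $x_1,\dots,x_n$. We build in polynomial time a UDG $\mathcal{G}_\varphi$ such that $\varphi$ is satisfiable if and only if $\mathcal{G}_\varphi$ has a DAG realization. The guiding idea is that each uncertain edge encodes a choice. Choices that conflict must close a directed cycle in the certain part of the graph, and consistent choices must leave the graph acyclic.

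\emph{Variable gadgets.} For each $x_i$ we introduce two vertices $p_i$ and $\bar p_i$, standing for the literals $x_i$ and $\neg x_i$, together with an anchor vertex $a_i$. We add an uncertain-head edge $a_i\{p_i,\bar p_i\}$ and a certain edge pair that turns a chosen literal vertex into a ``blocked'' one. Concretely, choosing head $p_i$ means the literal $x_i$ is false: a fixed path from $p_i$ back toward $a_i$ (through clause vertices, described next) would then close a cycle.

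\emph{Clause gadgets.} For each clause $c_j=(\ell_1\vee\ell_2\vee\ell_3)$ we add a vertex $s_j$ and an uncertain-tail edge $\{\ell_1,\ell_2,\ell_3\}s_j$, with tails ranging over the corresponding literal vertices. We also add certain edges from $s_j$ back to the anchors of the variables in which those literals are evaluated. The intent is as follows. Selecting the tail $\ell_k$ creates the path $a_{i}\to(\text{head})\cdots$, and this path becomes a cycle exactly when the variable gadget selected $\ell_k$'s vertex as ``false''. I would tune the direction conventions so that the following two properties hold.
\begin{enumerate}
\item Every cycle in a realization must pass through an anchor edge and a clause edge that pick the same literal vertex.
\item All certain edges form a DAG, obtained by layering anchors, then literal vertices, then clause vertices, with back edges present only through uncertain choices.
\end{enumerate}
One workable orientation is $a_i\to\{p_i,\bar p_i\}$, $\ell\to s_j$, and a certain edge $s_j\to a_i$ for each literal $\ell$ in $c_j$ on variable $x_i$. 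If these back edges create unwanted cycles, I would instead duplicate the anchors per clause occurrence, so that a cycle forms only as $a_i\to \ell\to s_j\to a_i$ when both uncertain edges pick the same $\ell$.

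\emph{Correctness.} In the forward direction, given a satisfying assignment, each anchor picks the false literal and each clause picks a true literal. Then no anchor and clause share a literal vertex, and a topological order (clauses, anchors, literals, arranged appropriately) certifies acyclicity. In the backward direction, from a DAG realization we read off $x_i$ as true if and only if the anchor picked $\bar p_i$. Each clause's chosen tail $\ell$ cannot equal the anchor's pick, since that would produce a cycle, so $\ell$ is true.

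\emph{Main obstacle.} The delicate step is the backward direction: ruling out cycles that mix several gadgets, and making sure that a single certain back edge $s_j\to a_i$ does not create cycles through other literals of $c_j$ or of other clauses. I expect to resolve this with per-occurrence copies of the anchor edges, one uncertain edge per (variable, clause) occurrence, tied together so that all copies for the same variable make the same choice. If the copies can diverge, I would enforce consistency with an extra gadget: a chain $p_i^{(1)}\to p_i^{(2)}\to\cdots$ whose orientation forces uniform choices. A careful argument by topological layers then shows that every cycle has length $3$ and has the intended form.
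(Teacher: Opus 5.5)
Your committed orientation already gives a correct reduction. Anchors pick the false literal via the uncertain-head edge $a_i\{p_i,\bar p_i\}$. Clauses pick a true literal via the uncertain-tail edge $\{\ell_1,\ell_2,\ell_3\}s_j$. The certain edges are $s_j\to a_i$. This is the paper's construction up to reversing every edge and splitting its single hub into per-variable anchors. The paper uses uncertain tails $\{x_i,\bar x_i\}\to x_{\text{false}}$, certain edges $x_{\text{false}}\to c\to c_j$, and uncertain heads $c_j\to Y_j$. Since acyclicity is invariant under reversal, nothing more is needed. What needs fixing is the correctness argument: the ``main obstacle'' you name is not one, and the step that actually needs an argument is stated incorrectly.

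\emph{Backward direction.} You are given an acyclic realization, so there are no cycles to rule out. You only use the absence of the particular triangle $a_i\to\ell\to s_j\to a_i$, and you already argue that.

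\emph{Forward direction.} This is where global acyclicity must be shown. Your order ``clauses, anchors, literals'' is not topological as written, because each clause-chosen (true) literal has an edge into its clause. Under a satisfying assignment, the realized edges are exactly
\[
a_i\to(\text{false literal}),\qquad (\text{true literal})\to s_j,\qquad s_j\to a_i.
\]
So true literals, then clauses, then anchors, then false literals is a topological order. Equivalently, a literal vertex on variable $x_i$ can only receive an edge from $a_i$ and can only send edges to clauses. Hence every cycle alternates anchor, literal, clause and must pass through a literal chosen both by its own anchor and by some clause. This is your property~1, and for this orientation it is an immediate observation rather than something to be ``tuned.'' (Your property~2 is also inverted: relative to the layering anchors, literals, clauses, the back edges are the certain edges $s_j\to a_i$, not the uncertain ones.)

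\emph{The fallback.} With that one-line argument, the per-occurrence anchor copies and the consistency chain are unnecessary. As sketched, they are also unsupported: you do not explain how a chain of certain edges $p_i^{(1)}\to p_i^{(2)}\to\cdots$ would force independent uncertain choices to agree, and you do not verify that ``every cycle has length $3$.'' Drop the contingency and commit to the simple construction with the corrected forward-direction argument.
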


\begin{proof}[Proof Insight]
    We reduce from 3-SAT. For a formula $\psi$, we construct a UDG where:
    \begin{enumerate}
        \item \textbf{Variable Selection:} Each literal is a vertex and for each variable $x_i$, an uncertain edge from $\{x_i,\bar x_i\}$ to $x_{\text{false}}$ represents the choice between setting $x_i$ to True or False.
        \item \textbf{Clause Verification:} Each clause is a vertex as well and an uncertain edge starting from it to its literals represents the choice of the literals that satisfies the clause. 
        \item \textbf{Conflict as Cycles:} There is an edge from $x_{\text{false}}$ to $C$ and an edge from $C$ to $C_j$ for each clause $C_j$. The key insight is that if a clause is not satisfied by the chosen variable assignments, the construction forces a directed cycle in the graph. A DAG realization exists if and only if there is a satisfying assignment.
    \end{enumerate}
    See Figure \ref{fig:3SAT} for an example.
\end{proof}
\begin{figure*}
	\centering
    \subcaptionbox{\label{fig:UDG}}{ \includegraphics[width=0.3\textwidth]{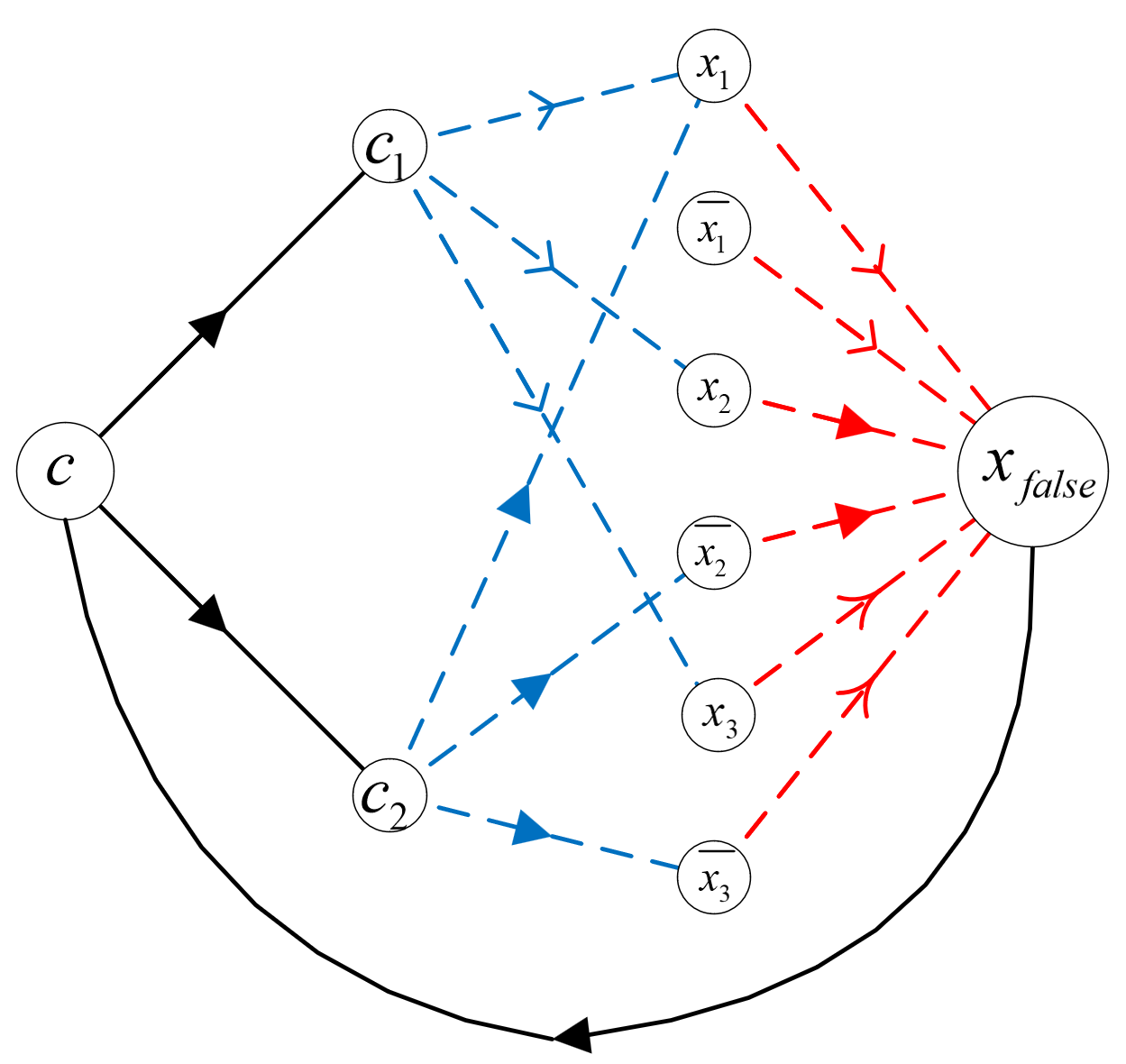}}
		\subcaptionbox{\label{fig:3SATT}}{\includegraphics[width=0.3\textwidth]{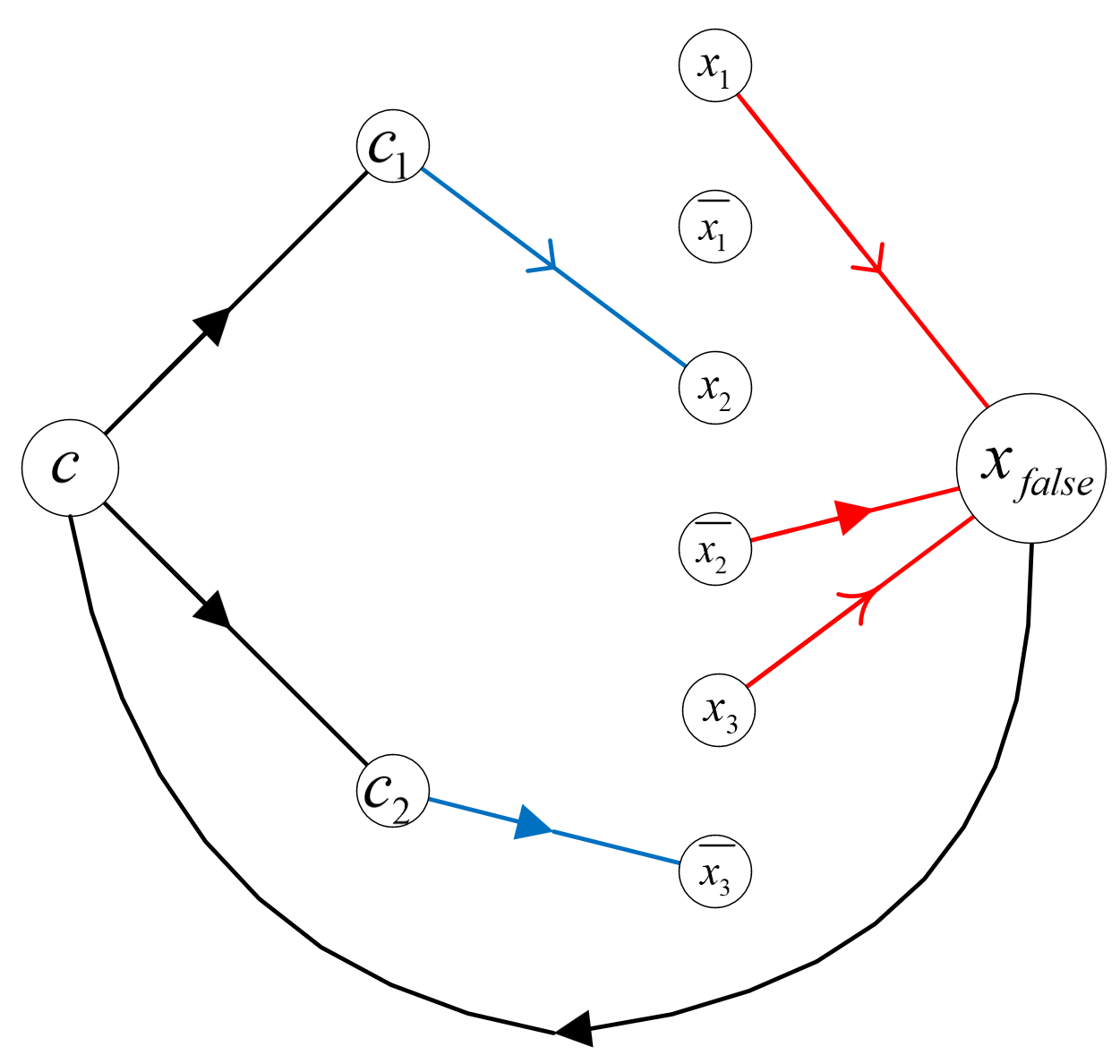}}
 \subcaptionbox{\label{fig:3SATF}}{\includegraphics[width=0.3\textwidth]{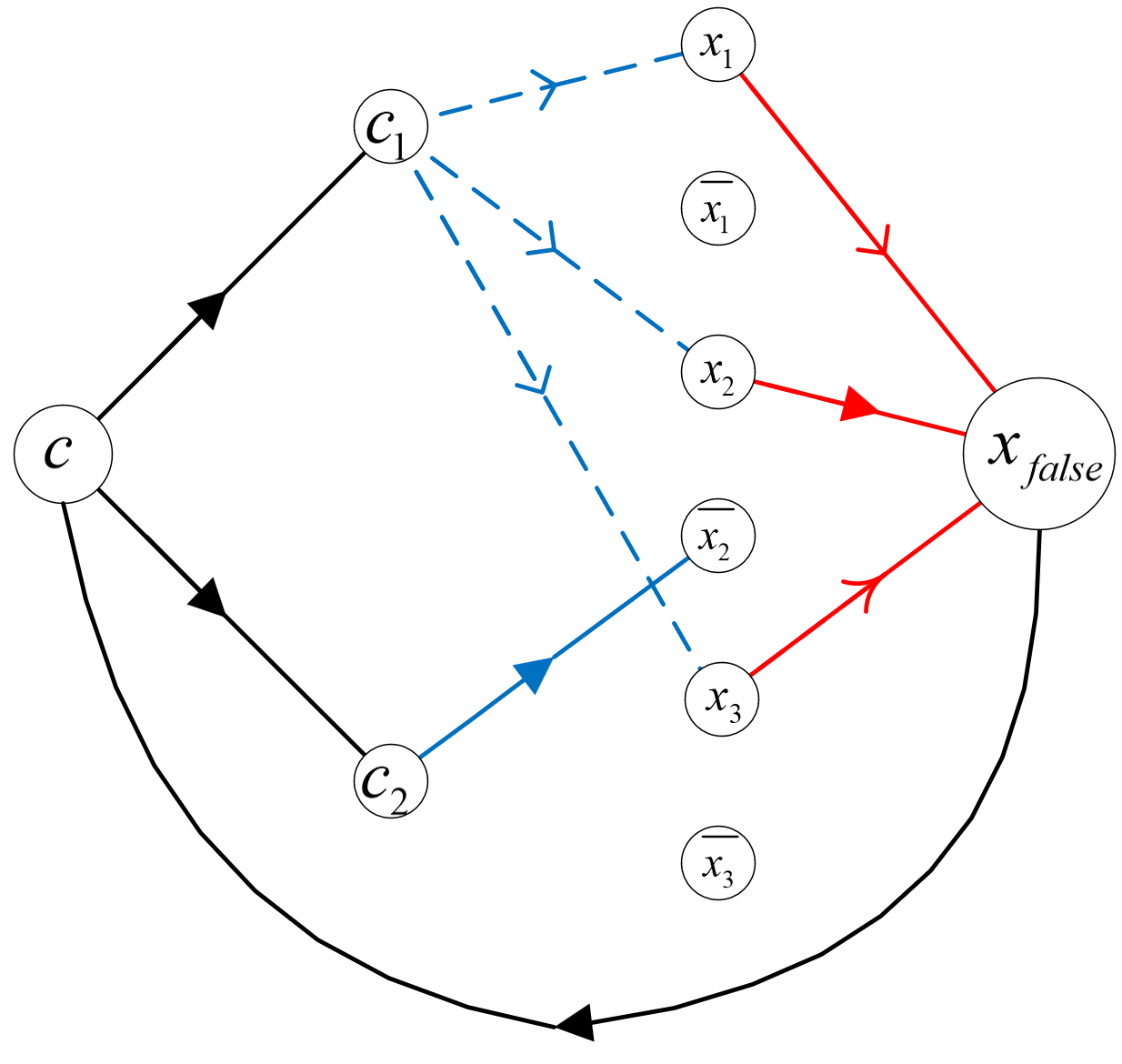}}
	\caption{Schematic diagram of the reduction from 3-SAT to \daginudg.  
    (a) The UDG constructed from a 3-SAT instance $(x_1\lor x_2\lor x_3)\land(x_1\lor \overline{x_2}\lor\overline{x_3})$. Each solid edge corresponds to a certain edge. Blue dashed edges are edges with an uncertain head, while red dashed edges are edges with uncertain tail. Different types of arrows are used to distinguish between distinct uncertain edges.
    (b) A DAG realization corresponding to the solution $(x_1,x_2,x_3)=(0,1,0)$.     
    (c) A partial realization corresponding to the non-solution $(x_1,x_2,x_3)=(0,0,0)$, which implies a cycle in the directed graph.} 
    % \vspace{-0.1cm}
	\label{fig:3SAT}
\end{figure*}

\begin{theorem} \label{thm:NPfr}
    \fr\ is NP-complete.
\end{theorem}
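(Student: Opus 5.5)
\emph{Membership in NP.} Membership follows the argument sketched before the definition of \fr. A certificate is a sequence of at most $|Q_D|$ pairs $(e_i,d_i)$. Each step is checked in polynomial time: verify $e_i\in E$ and $d_i\in Q_D$ in the current graph, then test the two unreachability conditions $\fst(d_i)\not\leadsto u$ and $v\not\leadsto\lst(d_i)$ by DFS. We accept iff every dirty ancilla has been removed. By \Cref{thm:step-correspondence}, such a sequence exists iff a transition sequence reaching width $|Q\setminus Q_D|$ exists, so this decides \fr.

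\emph{Hardness.} I would reduce from \daginudg, which is NP-hard by \Cref{lem:NPdagudg}, and mimic its three gadget types. Since the 3-SAT reduction only produces uncertain edges of two forms, it suffices to handle: certain edges; uncertain-tail edges $\{x_i,\bar x_i\}\to\xf$; and uncertain-head edges from a clause vertex to its literals.
\begin{itemize}
\item \textbf{Vertices and certain edges.} Each UDG vertex $a$ becomes an operation node $g_a$. Each certain edge $(a,b)$ becomes a fresh working qubit whose path is $\In\to g_a\to g_b\to\Out$.
\item \textbf{Uncertain-tail edge $Ts$.} Introduce one dirty ancilla $d$ with a single two-qubit gate $h_d$. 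The gate $h_d$ also acts on a fresh working qubit whose path is $\In\to h_d\to g_s\to\Out$, so the dependency $h_d\leadsto g_s$ is fixed.
\item \textbf{Candidate slots.} For each $t\in T$, create a working qubit whose last gate is $g_t$, so the edge $(g_t,\Out)$ is idle. Splicing $d$ there produces $g_t\to h_d\to g_s$, which realizes the choice of tail $t$.
\item \textbf{Uncertain-head edge.} Symmetrically, $h_d$ is fed by $g_t\to h_d$, and splices into edges $(\In, g_s)$ for $s\in S$.
\end{itemize}
Working qubits are never removed, so a complete borrowing is exactly a choice, for every ancilla, of one edge to splice into.

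The correctness argument has two directions. ($\Leftarrow$) Given a DAG realization, splice each ancilla into its intended slot. The order does not matter: the final graph is acyclic, and each intermediate graph's reachability relation is contained in the final one's (a spliced edge $(u,v)$ is replaced by a path $u\to\cdots\to v$). Hence the unreachability condition holds at every step. ($\Rightarrow$) Given any complete borrowing sequence, read off from the final DAG a realization. This requires showing that each ancilla ended up in a slot encoding a legal choice.

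\emph{Main obstacle.} The hard step is this ($\Rightarrow$) direction. An ancilla may be spliced into an unintended edge, including edges created by earlier splices. I would first try to pad each gadget so that every non-intended edge is either infeasible or dominated. Concretely, every edge not among the designated slots lies on a path from or to $h_d$, for instance by wrapping gadget qubits so that their $\In/\Out$ edges are reachable from $\fst(d)$ or reach $\lst(d)$. That path would make splicing there violate the unreachability condition. Where this cannot be enforced, I would show the spurious choice still induces a dependency $u'\leadsto g_s$ with $u'$ reachable from some $g_t$, $t\in T$. Then the induced realization's edges are implied by reachability in the final acyclic graph, which forces the realization to be acyclic as well.
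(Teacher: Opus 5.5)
Your NP-membership argument is fine. Your $(\Leftarrow)$ argument is also fine: reachability only grows under splicing, so acyclicity of the final graph certifies every intermediate unreachable condition. The hardness reduction, however, has a genuine gap exactly at the step you flag, and with the gadgets as written the reduction is false.

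\emph{The gadgets do not confine the ancillas.} In an uncertain-tail gadget, the only predecessors of $h_d$ are input nodes. So for every edge $(\In(q),w)$ with $w\neq h_d$ we have $h_d\not\leadsto\In(q)$ and $w\not\leadsto h_d$, and splicing $d$ there only adds the harmless constraint $h_d\to w$. Dually, an uncertain-head ancilla reaches only output nodes, so it can be spliced into any edge $(w,\Out(q))$ with $w\neq h_d$. These splices remain feasible in any order; later ancillas can even stack onto $(\In(q),h_d)$ or $(h_d,\Out(q))$. Hence every UDG instance maps to a YES instance of \fr.

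Your fallback, that a spurious splice still yields $u'\leadsto g_s$ with $u'$ reachable from some $g_t$, fails here, since $u'=\In(q)$ is reachable from nothing. The padding idea of making every non-designated edge infeasible also cannot be done symmetrically. Suppose tail gadgets $d_1,d_2$ with candidate sets $T_1,T_2$ must each be barred from the other's slots $(g_t,\Out)$. Then you need $\fst(d_1)\leadsto g_t$ for all $t\in T_2$ and $\fst(d_2)\leadsto g_t$ for all $t\in T_1$. The intended splices $g_{t_1}\to\fst(d_1)$ and $g_{t_2}\to\fst(d_2)$ then close the cycle $\fst(d_1)\leadsto g_{t_2}\to\fst(d_2)\leadsto g_{t_1}\to\fst(d_1)$, which breaks $(\Leftarrow)$.

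\emph{The missing idea is the paper's asymmetric global ordering combined with counting.} Fix an order $e_1,\dots,e_m$ of the UDG edges and add shared gates forming a chain $e_0\to e_1\to\cdots\to e_m$. Each vertex $v$ gets a working wire $e_{[0,m]}svt$. Each edge $e_i$ contributes working wires $e_{[0,i-1]}y$ and equally many dirty wires $e_{[i,m]}x$. The fresh nodes $s$ and $t$, which precede and follow every vertex node, absorb the non-chosen candidates.

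Because every $e_k$ reaches $s$ and hence every vertex node, an ancilla starting at $e_i$ can be spliced only into a final edge $(e_{j-1},y)$ of a working wire with $j\le i$. A used slot leaves behind only infeasible edges. Higher-index ancillas are \emph{not} forbidden from lower-index slots. Instead, induction on $i$ with the balanced counts shows that in any complete borrowing every index-$i$ slot is consumed by an index-$i$ ancilla. This yields exactly paths $e_{[0,m]}xy$, from which a realization is read off. The topological order of the final Endpoint DAG, restricted to $V$, then shows this realization is acyclic.

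Your $(\Rightarrow)$ direction needs a mechanism of this kind, one that confines each ancilla to its own gadget without forcing cycles, and it does not yet have one.
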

The full proof is deferred to Appendix~\ref{app:NP}, since even small UDG instances produce Endpoint DAGs too large to be readable in the main text.

\paragraph{The Complexity Landscape.}
The NP-completeness of \fr~immediately establishes
the hardness of our primary optimization objective.

\begin{corollary}\label{lem:maxreuse_nphard}
    The dirty-qubit borrowing problem (minimizing circuit width) is NP-hard.
\end{corollary}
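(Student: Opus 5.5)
The corollary follows from \Cref{thm:NPfr} by a direct polynomial-time reduction from \fr\ to the decision version of the dirty-qubit borrowing problem. That decision version asks: given a safe circuit $C$ and an integer $k$, is there a transition sequence $C \to^{*} C'$ with $|\argsop(C')| \le k$?

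\textbf{The reduction.} Given an instance $G=(Q,Q_D,V,E)$ of \fr, I may assume $G$ is a valid Endpoint DAG, since that is the input domain of \fr. By the validity part of \Cref{lem:dag-properties}, there is a circuit $C$ with $\T(C)=G$. I would obtain $C$ by topologically sorting $V$, emitting the gates in that order on their qubit lines, and wrapping each $d\in Q_D$ in a $\mathbf{borrow}$ scope that spans from $\fst(d)$ to $\lst(d)$. This construction is polynomial in $|G|$.

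\textbf{Safety of $C$.} The reduction only reasons about the graph, so I am free to choose the gate semantics. The simplest choice makes every gate the identity unitary of the appropriate arity. Then each $\mathbf{borrow}$ body acts as the identity, so by \Cref{def:semantics} the semantics is never $\bot$ and $C$ is safe. The NP-hardness construction in Appendix~\ref{app:NP} depends only on graph structure, so this choice loses nothing. I then set $k = |Q\setminus Q_D|$.

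\textbf{Correctness.} Every borrowing transition removes exactly one dirty ancilla, and by assumption it uses no fresh qubit, so width decreases by one per \RuleSubst\ application. By the completeness and soundness directions of \Cref{thm:step-correspondence}, iterated along the sequence, sequences of borrowing transitions from $C$ correspond step by step to sequences of valid edge-splicing steps from $G$. Topological transitions do not change $\T$, by the invariance part of \Cref{lem:dag-properties}. Hence $C$ can reach width $k$ if and only if all of $Q_D$ can be spliced away in $G$, which is exactly the \fr\ question. Since \fr\ is NP-complete, the decision version of width minimization is NP-hard, and so is the optimization problem.

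The only delicate point is making sure the constructed circuit is safe, so that \Cref{def:borrow-problem} applies; the identity-gate semantics resolves this.
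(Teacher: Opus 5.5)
Your proposal is correct and follows the paper's route. The paper only asserts that the corollary is immediate from Theorem~\ref{thm:NPfr}, since \fr\ is exactly the decision question of whether width $|Q\setminus Q_D|$ is reachable; you supply the details it leaves implicit, namely realizing $G$ as a safe circuit via identity gates and transferring reachability through Lemma~\ref{lem:dag-properties} and Theorem~\ref{thm:step-correspondence}.

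One minor fix: borrow scopes spanning exactly from $\fst(d)$ to $\lst(d)$ need not nest when two such intervals cross, so declare all dirty ancillas at the outermost level instead. This is harmless because $\T$ ignores scope boundaries, and \RuleBorrLeft/\RuleBorrRight\ can shrink the scopes later.
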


To complete the theoretical picture, we also investigate the complexity of optimizing circuit depth,
or some lexicographical combination of both metrics.
Unfortunately, the hardness is pervasive across all meaningful formulations:
\begin{itemize}
    \item \textbf{Depth Minimization:} Minimizing depth alone is trivial---one simply performs
zero borrowing transitions. However, this is useless in practice.
    \item \textbf{Width-then-Depth Optimization:} Finding the optimal depth
\emph{subject to} an optimal width is strictly harder than finding the optimal width itself.
Thus, it is NP-hard.
    \item \textbf{Depth-then-Width Optimization:} Finding the optimal width
\emph{subject to} an optimal depth is also NP-hard via a reduction from \daginudg,
since one can trivially lock the optimal depth of the circuit by appending an ``extremely
long'' dummy working qubit that shares gates with every dirty ancilla.
\end{itemize}

Because every meaningful prioritization of width and depth encounters an
NP-hard combinatorial wall, searching for exact solutions in large-scale
circuits is computationally prohibitive. This intractable landscape
necessitates the design of a heuristic-driven scheduler, which we
introduce next.

\section{\bona: A Depth-Aware Borrowing Scheduler}\label{sec:heuristic}

We now present \bona, a depth-aware heuristic borrowing scheduling tool
built on the Endpoint DAG model. Previous sections demonstrated that
simultaneously optimizing width and depth is NP-hard, regardless of which
metric is chosen as the primary objective. \bona~is designed as a practical
scheduler that achieves a strong trade-off among the following objectives:
\begin{itemize}
  \item the resulting circuit width,
  \item the resulting circuit depth, and
  \item the runtime of the tool itself.
\end{itemize}

\subsection{Depth-Aware Borrowing Metric}

For a fixed dirty ancilla $d$, every successful borrowing step removes the
same dirty-ancilla wire and therefore yields the same immediate width
reduction, regardless of which feasible candidate edge is selected. When
multiple candidate edges satisfy the unreachable condition, the width
objective alone cannot distinguish among them. However, splicing $d$ into
each candidate edge introduces its own precedence constraints and may delay
different parts of the DAG. We quantify the resulting effect on circuit depth
by the actual depth change
\begin{equation}
    \depthdelta(e,d) := \operatorname{depth}(G[e\triangleleft d])-
    \operatorname{depth}(G).
\end{equation}
Figure~\ref{fig:alter} compares this quantity for two feasible choices.

\begin{figure}[h]
    \centering
    \includegraphics[width=1\linewidth]{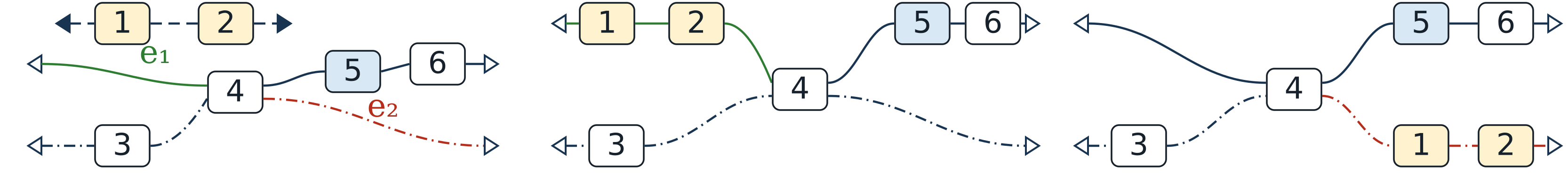}
        \caption{Different depth effects of two feasible borrowing choices.
(Left) The initial circuit contains a dirty ancilla $d$ with
$\fst(d)=1$ and $\lst(d)=2$, together with two
feasible candidate edges, $e_1$ and $e_2$, highlighted in green and red,
respectively.
(Middle) Splicing $d$ into $e_1$ delays operation $4$ and its successors,
producing a circuit of depth $5$, with $\depthdelta(e_1,d)=1$.
(Right) Splicing $d$ into $e_2$ instead extends the initially non-critical
bottom path and produces a circuit of depth $4$, with
$\depthdelta(e_2,d)=0$.}
    \label{fig:alter}
\end{figure}

Circuit depth is computed from the depths of its nodes. We write $\dep(v)$ for
the earliest possible layer of a node $v$, equivalently the length of a longest
path from any input node to $v$; the circuit depth is then
$\max_{v\in V}\dep(v)$. A standard topological traversal computes all node
depths~\cite{kahn1962topological}. Exactly evaluating $\depthdelta(e,d)$ would
require constructing the corresponding spliced graph and recomputing these
depths, which would be too costly for the
inner loop of a greedy scheduler. \bona\ instead estimates each candidate using a lightweight look-ahead cost derived from the current depth
assignment.

Suppose we attempt to borrow an idle period on some qubit, represented by an
edge $e = (u,v)$, for a dirty ancilla $d$. If we splice $d$ into $e$, the
operations of $d$ are forced to execute after $u$ and before $v$. This can
delay $\fst(d)$, $v$, or their successors; whether these local delays increase
the overall circuit depth depends on whether they affect a critical path.

We define a look-ahead cost function, $\estdepthdelta(e, d)$, to estimate
the local depth penalty incurred by this borrowing step:
\begin{equation}
    \estdepthdelta(e, d) = \max\bigl(\dep(u)+1-\dep(\fst(d)), 0\bigr) +
\max\bigl(\dep(\lst(d))+1-\dep(v), 0\bigr)
\end{equation}

\begin{figure}[h]
    \centering
    \includegraphics[width=1\linewidth]{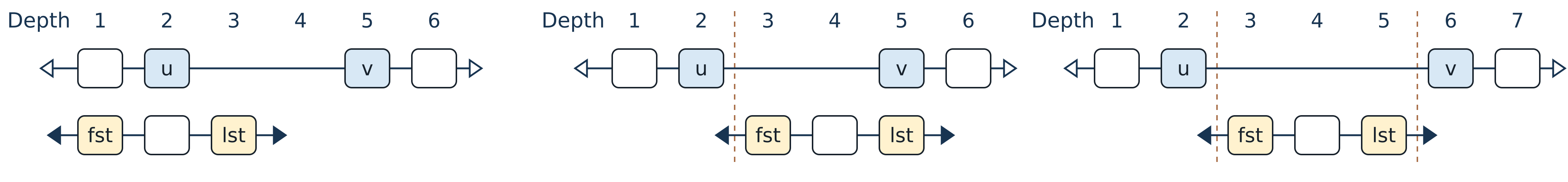}
        \caption{Visualizing the components of the depth-aware look-ahead cost
$\estdepthdelta(e, d)$.
    (Left) A dirty ancilla $d$ (spanning from $\fst(d)$ to $\lst(d)$) is
considered for borrowing onto idle edge $e = (u, v)$.
    (Middle) For $\fst(d)$ to follow $u$, its earliest depth becomes $\dep(u)
+ 1$, resulting in a local depth increase of $\max\bigl(\dep(u) + 1 -
\dep(\fst(d)), 0\bigr)$.
    (Right) For $v$ to follow $\lst(d)$, its earliest depth becomes
$\dep(\lst(d)) + 1$, resulting in an additional depth increase of
$\max\bigl(\dep(\lst(d)) + 1 - \dep(v), 0\bigr)$.
    The total heuristic cost is the sum of these two structural penalties.}
    \label{fig:depheu}
\end{figure}
This heuristic conservatively over-approximates the structural stretch required
to accommodate $d$ when the available edge $e$ is too short. It defines
\bona's depth-aware greedy objective: among feasible candidate edges, choose
one minimizing $\estdepthdelta$.

Specifically, in Figure~\ref{fig:alter},
$\estdepthdelta(e_1,d)=1 < \estdepthdelta(e_2,d)=2$, so \bona\ selects
$e_1$. However, the actual changes satisfy $\depthdelta(e_1,d)=1>
\depthdelta(e_2,d)=0$, so selecting $e_2$ yields the shallower circuit with depth
$4$ instead of $5$. Thus, $\estdepthdelta$ provides an inexpensive
depth-aware ranking but does not necessarily equal $\depthdelta$ or guarantee a
globally depth-optimal choice.

\subsection{The Borrowing Algorithm}

Because $\estdepthdelta$ is nonnegative, a zero-cost feasible edge attains the
minimum possible heuristic cost. Such a splice satisfies
$\dep(u)<\dep(\fst(d))$ and $\dep(\lst(d))<\dep(v)$ and preserves the current
depth assignment. Algorithm~\ref{alg:bona} treats this case as a fast path,
called \emph{soft borrowing}; all remaining cases are \emph{hard borrowing}.

\paragraph{Soft borrowing.}
The resource pool is designed to maintain active idle edges during the
depth-ordered sweep. When the sweep reaches $c=\dep(\fst(d))$, every pooled
edge $e=(u,v)$ satisfies $\dep(u)<c$. A covering query additionally requires
$\dep(\lst(d))<\dep(v)$ and, among matching edges, selects the minimum
$\dep(v)$ to preserve longer intervals for later ancillas. These inequalities
give zero cost and rule out both $\fst(d)\leadsto u$ and
$v\leadsto\lst(d)$, establishing the unreachable condition without a
reachability query. Implemented as a \texttt{SortedList} ordered by
right-endpoint depth~\cite{sortedcontainers_sortedlist}, the pool supports
covering queries and local updates in $O(\log |Q|)$ time.

\paragraph{Hard borrowing.}
If no covering edge exists, the hard branch invokes
$G.\text{QueryBestEdge}(d)$, which scans all edges, explicitly filters
$\fst(d)\not\leadsto u$ and $v\not\leadsto\lst(d)$, and minimizes
$\estdepthdelta(e,d)$. Precomputed descendant and ancestor
\texttt{set}s~\cite{python_set_type} make each filter check $O(1)$ and the
complete query $O(|E|)$. A successful hard splice invalidates $\dep$, so the
algorithm recomputes all depths in $O(|E|)$ time and restarts the sweep; if
the query fails, it adds $d$ to $\Skipped$ to avoid repeating the search.

\begin{algorithm}[H]
\small
\SetKwInOut{Input}{Input}
\SetKwInOut{Output}{Output}
\SetAlgorithmName{Algorithm}{alg:bona}{List of Algorithms}
\caption{Depth-Aware Heuristic Borrowing (\bona)}
\label{alg:bona}
\DontPrintSemicolon

\Input{Endpoint DAG $G=(Q, Q_D, V, E)$ with dirty ancillas $Q_D \subseteq
Q$.}
\Output{Updated $G$ with better width and depth}

$\Skipped \gets \emptyset$ \tcp*{Ancillas skipped after a failed global search}

\While{$Q_D \setminus \Skipped \neq \emptyset$}{
    $\dep \gets \text{ComputeDepth}(G)$ \tcp*{Via a topological sort}
    $\Pool \gets \emptyset$ \tcp*{Maintains valid idle edges crossing current
depth}

    \ForEach{depth $c \in [0, \text{MaxDepth}(G)]$}{
        \ForEach{dirty ancilla $d \in Q_D \setminus \Skipped$ with
        $\dep(\fst(d)) = c$}{
            $e_{soft} \gets \Pool.\text{QueryCoveringEdge}(d)$\;

            \eIf{$e_{soft} \neq \text{NULL}$}{
                $G \gets G[e_{soft} \triangleleft d]$ \tcp*{Soft Borrowing:
zero depth penalty}
                $\Pool.\text{Remove}(e_{\mathrm{soft}})$\;
            }{
                $e_{hard} \gets G.\text{QueryBestEdge}(d)$\;
                \eIf{$e_{hard} \neq \text{NULL}$}{
                    $G \gets G[e_{hard} \triangleleft d]$ \tcp*{Hard
Borrowing: invalidates $\dep$}
                    \textbf{goto} Line 2 \tcp*{Topology changed, recompute
$\dep$}
                }
                {
                    \tcp{If $e_{hard}$ is NULL, $d$ remains unmapped and is
skipped}
                    $\Skipped \gets \Skipped \cup \{d\}$
                }
            }
        }

        \ForEach{operation node $v \in V$ with $\dep(v) = c$}{
            \ForEach{incoming edge $e' = (w, v) \in E$ to node $v$}{
$Pool.\text{Remove}(e')$ }
            \ForEach{outgoing edge $e = (v, u) \in E$ from node $v$}{
$Pool.\text{Add}(e)$ }
        }
    }

    \tcp{Terminate if a full round yields no successful mapping}
    \If{no $G[e \triangleleft d]$ was applied in this round}{ \textbf{break}
}
}
\Return $G$\;
\end{algorithm}

\begin{figure}[H]
    \centering
    \includegraphics[width=1\linewidth]{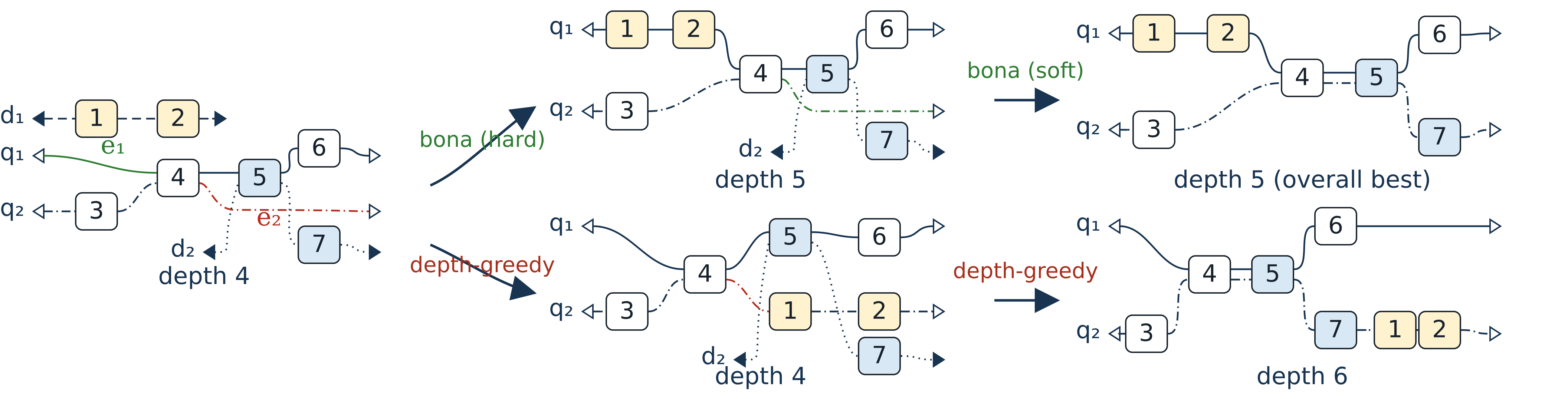}
    \caption{Borrowing schedules for the circuit in
Figure~\ref{fig:alter}, augmented with a second dirty ancilla $d_2$:
\bona\ reaches the overall-best depth $5$, whereas greedily minimizing
$\depthdelta$ at each step reaches depth $6$.}
    \label{fig:heuristic}
\end{figure}

In the upper branch of Figure~\ref{fig:heuristic}, Algorithm~\ref{alg:bona}
first processes $d_1$. At this point, $e_1$ is already in the resource pool,
but its right endpoint has the same depth as $\lst(d_1)$. It therefore fails
the strict covering condition $\dep(\lst(d_1))<\dep(v)$, and
$\text{QueryCoveringEdge}$ returns \text{NULL}. The algorithm enters the hard
branch, where $\text{QueryBestEdge}$ selects $e_1$ because
$\estdepthdelta(e_1,d_1)=1$ is minimal among the feasible candidate edges.
This splice changes the circuit depth from $4$ to $5$ and terminates the
current traversal. When the traversal restarts and reaches $d_2$, $e_2$ is in
the pool and is returned by $\text{QueryCoveringEdge}$. The resulting soft borrowing has
$\estdepthdelta(e_2,d_2)=0$, requires no depth recomputation, and leaves the
final depth at $5$.

If each dirty ancilla were instead assigned greedily to a feasible edge that
minimizes the actual depth increase $\depthdelta$ in the current DAG, the
first choice would be $e_2$, since $\depthdelta(e_2,d_1)=0<
\depthdelta(e_1,d_1)=1$. The circuit would initially remain at depth $4$.
However, after this splice changes the available idle intervals, the minimum
feasible choice that eliminates $d_2$ increases the final depth to $6$.

This outcome reflects the dynamic-topology difficulty identified in
Section~\ref{sec:hardness}: each borrowing step changes the topology, and
hence the candidate edges, seen by later decisions. Rather than materializing
every candidate splice to compute $\depthdelta$, \bona\ uses a cheap,
structure-sensitive estimate that exposes a zero-cost fast path. Soft
candidates are found through the resource pool without recomputing depths;
graph-wide search is reserved for hard borrowing, with depth recomputation
only after a successful hard splice. This design makes \bona\ efficient
without substantially compromising the resulting width--depth trade-off, as
evaluated in Section~\ref{sec:evaluation}.

Although \bona\ does not guarantee a globally optimal schedule, the following
theorem establishes that its output corresponds to a valid circuit
semantically equivalent to the input.

\begin{theorem}[Soundness of \bona]\label{thm:bona-soundness}
Let $C$ be a safe circuit, and let $G=\T(C)$ be given as input to
Algorithm~\ref{alg:bona}. If Algorithm~\ref{alg:bona} outputs $G'$, then
$G'$ is a valid endpoint DAG. Moreover, there exists a circuit $C'$ such
that $G'=\T(C')$ and $C \to^* C'$, and $\sem{C}=\sem{C'}$.
\end{theorem}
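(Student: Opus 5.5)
Algorithm~\ref{alg:bona} modifies $G$ only through splices $G\gets G[e\triangleleft d]$, either on the soft branch or on the hard branch. Write $G_0=\T(C),G_1,\dots,G_k=G'$ for the sequence of graphs it produces. The plan is an induction on $k$ with the following invariant: $G_i$ is a valid endpoint DAG, and there is a circuit $C_i$ with $\T(C_i)=G_i$ and $C\to^* C_i$. The base case is immediate with $C_0=C$. The inductive step has two parts. First, show that every splice the algorithm performs is a legal borrowing step in the sense of Definition~\ref{def:borrowing_dag}: for the current $G_i=(Q,Q_D,V,E)$ we need $e=(u,v)\in E$, $d\in Q_D$, $\fst(d)\not\leadsto u$ and $v\not\leadsto \lst(d)$. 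Second, invoke the Soundness half of Theorem~\ref{thm:step-correspondence} on $C_i$ and $G_i$. It gives $C_i\topotrans^* C_i^0\xrightarrow{\RuleSubst} C_{i+1}$ with $\T(C_{i+1})=G_{i+1}$, hence $C\to^* C_{i+1}$. Validity of $G_{i+1}$ then follows either from Lemma~\ref{lem:splicing-validity} or, more directly, from the Validity part of Lemma~\ref{lem:dag-properties}, since $G_{i+1}$ is the model of a circuit.

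Checking the unreachable condition splits by branch. On the hard branch it holds by construction, since QueryBestEdge explicitly filters for it. The one thing to confirm is that the precomputed ancestor and descendant sets are those of the current $G_i$. They must be recomputed, or maintained, after every splice, including soft splices that happened earlier in the same sweep. Similarly, membership $d\in Q_D$ holds because a spliced ancilla leaves $Q_D$ and is never revisited.

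On the soft branch we only know the depth inequalities $\dep(u)<\dep(\fst(d))$ and $\dep(\lst(d))<\dep(v)$. The key fact is that $\dep$ is a strict potential: $a\leadsto b$ with $a\neq b$ implies $\dep(a)<\dep(b)$. So $\fst(d)\leadsto u$ would force $\dep(\fst(d))\le\dep(u)$, and $v\leadsto\lst(d)$ would force $\dep(v)\le\dep(\lst(d))$, each a contradiction.

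The main obstacle is that $\dep$ is computed once per round, while several soft splices may occur before recomputation. I would prove a second invariant: after any sequence of soft splices, the stale $\dep$ is still a valid strict potential for the current graph. The argument is that each soft splice adds the edges $(u,\fst(d))$ and $(\lst(d),v)$. Both strictly increase $\dep$ by the inequalities above. The removed edges only drop constraints, and $d$'s internal path is unchanged. Hence every edge of the new graph still strictly increases the old $\dep$, and the reachability argument goes through. Two further points need checking. Every edge in the pool must be an edge of the current graph: a spliced edge is removed from the pool, and new edges incident to processed nodes are added in the per-node loop. The hard branch, meanwhile, jumps back to line~2 and recomputes both $\dep$ and the pool, so the invariant restarts cleanly there.

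Finally, semantic equivalence. Since $C$ is safe and $C\to^* C'$ with $C'=C_k$, Proposition~\ref{prop:semantics-preserving-tarnsitions} gives $\sem{C'}=\sem{C}$. Termination is not needed for this statement, because it is conditioned on the algorithm producing output.
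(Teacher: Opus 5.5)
Your proposal is correct and takes the same approach as the paper. You check the unreachable condition for every applied splice (depth inequalities on the soft branch, explicit filtering on the hard branch), induct over the splices using the soundness half of Theorem~\ref{thm:step-correspondence}, and conclude with Proposition~\ref{prop:semantics-preserving-tarnsitions}. Your extra invariants go beyond the paper: that the once-per-round $\dep$ remains a strict potential after intermediate soft splices, and that the hard-branch reachability sets must be taken on the current graph. These make explicit a point the paper's proof passes over when it simply appeals to depth strictly increasing along reachability.
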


\begin{proof}
Algorithm~\ref{alg:bona} changes the graph only through successful
assignments $G\gets G[e\triangleleft d]$; skipped ancillas leave it
unchanged. Consider one such assignment with $e=(u,v)$. In the soft branch,
the pool invariant and covering query give
$\dep(u)<\dep(\fst(d))$ and $\dep(\lst(d))<\dep(v)$, respectively. Since
reachability in a DAG strictly increases depth, these inequalities rule out
both $\fst(d)\leadsto u$ and $v\leadsto\lst(d)$. In the hard branch,
$\text{QueryBestEdge}$ explicitly retains only edges satisfying the same two
unreachable relations. Thus every applied splice satisfies the unreachable
condition of Definition~\ref{def:borrowing_dag}.

By Theorem~\ref{thm:step-correspondence}, each update
corresponds to a finite sequence of circuit transitions ending in a circuit
whose endpoint DAG is the updated graph. Induction over the successful
updates, including the empty sequence, yields a circuit $C'$ such that
$G'=\T(C')$ and $C\to^*C'$. By Proposition~\ref{prop:semantics-preserving-tarnsitions}, $\sem{C}=\sem{C'}$.
\end{proof}

\subsection{Integration with Clean-Ancilla Optimizers}

While \bona\ specifically targets the dirty-qubit borrowing problem,
practical quantum algorithms frequently employ a mix of both clean and dirty
ancillas. To build an end-to-end compiler pipeline, \bona\ is designed to
seamlessly couple with any state-of-the-art clean-ancilla optimizer (e.g.,
\recycle~\cite{jiang2024recycling} or \quantinuum~\cite{quantinuum}).

To orchestrate hybrid resources effectively, we establish a strict
\emph{Clean-then-Dirty} execution order, driven by the asymmetry in resource
flexibility. Clean ancillas carry rigid constraints: they can only recycle
idle periods through serial concatenation (strictly before or after another
clean ancilla) and cannot borrow temporary segments from existing working
qubits. Dirty ancillas, conversely, are universally flexible and can be
spliced into any edge---regardless of whether the borrowed qubit is a working one or
another ancilla, and whether the insertion point is internal or at the
endpoints. In resource allocation, satisfying the most constrained requests
first naturally yields better overall utilization.

Following this principle, the pipeline's first pass invokes a clean-ancilla
optimizer, treating dirty-borrowing scopes as fixed operations. This pass
minimizes the number of dedicated clean wires. Once allocated, \bona\ simply
treats these resulting clean wires as standard working qubits, exposing their
idle periods as available edges for dirty-qubit borrowing.

\paragraph{Implicit Clean-Dirty Interleaving.}
Because dirty ancillas are scheduled second, they can freely embed themselves
into idle gaps between clean ancillas, including the endpoints.
Consequently, complex interleaved
Clean-Dirty chains are implicitly constructed by Algorithm~\ref{alg:bona}
without requiring specialized joint-scheduling logic. Ultimately, this
modular design allows \bona\ to focus purely on its graph-theoretic core
while leveraging upstream tools to maximize overall resource utilization.

\section{Case Studies}\label{sec:evaluation}

We implement \bona~ in Python and evaluate it to answer the following research questions (RQs):
\begin{itemize}
    \item \textbf{RQ1 (Effectiveness):} How effective is \bona~ in reducing circuit width while limiting the overhead in circuit depth?
    \item \textbf{RQ2 (Advantage):} What scheduling advantages do (safely used) dirty ancillas offer over clean ancillas in terms of circuit width and depth?
    
    \item \textbf{RQ3 (Efficiency):} Does \bona~ run efficiently on large benchmark instances?
\end{itemize}

\subsection{Setup}

\paragraph{Benchmarks}
We evaluate \bona~ across three categories of benchmarks:
\begin{enumerate}
    \item \emph{Quantum algorithms}: 
    a parallel algorithm \emph{Parallel Quantum Walk (PQW)}~\cite{Zhang2024parallelquantum} and 
    \emph{Shor’s Algorithm}~\cite{Shor_1997} using the implementation of~\cite{factoring2np2_2017}.

    \item \emph{Component-level circuits}: \texttt{MCX} (Multi-controlled-$X$~\cite{gidney2015}), 
    \texttt{Incrementer}~\cite{gidney2015}, 
    \texttt{Grover}~\cite{grover}, 
    and \texttt{StatePreparation}~\cite{Low2024tradingtgatesdirty}.
    
    \item \emph{Parallel-composition scenarios}: 
    randomly selected RevLib subcircuits~\cite{wille2008revlib} stitched together under varying degrees of parallelism.
\end{enumerate}

\paragraph{Sources of ancillas}
In all benchmarks, ancillas come from either explicit allocations in the original design or from decomposing multi-controlled-$X$ gates ($\tMCX{n}$) into V-chain (clean) or M-chain (dirty) implementations before optimization. 

\paragraph{Allocation of ancillas}
Manually designed circuits are often already optimized for ancilla reuse, leaving few or no opportunities for further automatic optimization. Large quantum algorithms, however, are typically synthesized component by component. Although ancillas may be reused within each component by design, reuse opportunities across independently synthesized components are usually left unresolved, making cross-component reuse a natural target for automatic optimization.
To expose these opportunities without prematurely constraining the optimizer, the components are initially assigned disjoint ancilla sets. In our case studies, we evaluate \bona\ on circuits constructed in this manner and compare the resulting circuits with manually optimized implementations. Accordingly, ancillas are allocated in two ways.
\begin{itemize}
\item \emph{Max-ancilla version:} A new ancilla is allocated whenever needed, either by the original design or by the decomposition of $\tMCX{n}$ gates.

\item \emph{Manually optimized version:} ancilla reuse from the original design is preserved when available; otherwise, reuse is introduced by hand only for relatively direct opportunities (e.g., when one ancilla can be reused immediately after another use ends).
\end{itemize}

\paragraph{Baselines}
In our evaluation, we use two clean-ancilla reuse optimizers: \recycle~\cite{jiang2024recycling}, with its best-performing ``Greedy+Max0s'' heuristic, and \quantinuum~\cite{quantinuum}, using the implementation provided by~\cite{jiang2024recycling}.
When dirty-ancilla circuits also contain clean ancillas, we first apply \recycle\ to reuse the clean ancillas and then apply either \bona\ or \trivD\ to reuse the dirty ancillas. \trivD\ is a depth-aware greedy serial-reuse baseline that reuses a qubit only after the previous dirty ancilla has completely finished. For \textbf{RQ2}, we compare the \recycle+\bona\ pipeline on dirty-ancilla circuits with the two optimizers applied to the corresponding clean-ancilla circuits.

\paragraph{Metrics}
We evaluate three key circuit metrics: 
Width, 
Depth, and 
Clean/Dirty-ancilla count.
Circuit gates remain at the logical level; for example, Toffoli gates are not further decomposed into hardware-level gates.

\paragraph{Input-safety and output-soundness testing}
We evaluate input safety and output soundness through sampling-based testing, as exhaustive verification of both properties is computationally expensive.

\paragraph{Execution Environment}
All experiments were conducted in a Linux environment under WSL2 on a machine with an Intel Core i9-14900HX CPU and 64\,GB of RAM.

\subsection{Capturing Ancilla-usage Structure in Quantum Algorithms}

Many quantum algorithms make extensive use of ancilla qubits, making ancilla reuse important for controlling circuit width. However, their complex algorithmic structures often make such reuse difficult to perform manually. In this subsection, we evaluate \bona\ on two representative algorithms, parallel quantum walk and Shor's algorithm, to assess whether it can capture realistic ancilla-usage structures and exploit them for effective reuse.

\subsubsection{Parallel Quantum Walk}

Dirty ancillas may offer scheduling advantages in highly parallel circuits. 
Therefore, we evaluate \bona\ on the parallel quantum walk structure used in a parallel Hamiltonian simulation algorithm~\cite{Zhang2024parallelquantum}. 

\paragraph{Implementation details}
Specifically, we consider its core block-encoding construction for powers of a Hamiltonian based on parallel quantum walks (PQW). This construction arranges $r$ copies of a data-lookup oracle into $r$ parallel layers, yielding a highly parallel circuit with substantial ancilla requirements.
We instantiate the data-lookup oracle $O_H$ as in~\cite{Low2024tradingtgatesdirty}, where dirty ancillas are used as temporary workspace.
Since \cite{Zhang2024parallelquantum} provides only an algorithmic description, we implement the circuit following its algorithmic structure. The manually optimized version is constructed by hand and incorporates only relatively direct ancilla-reuse opportunities.

\paragraph{Experimental results}

As shown in~\cref{tab:pqw}, for the dirty-ancilla implementations, we first apply \recycle\ to reuse clean ancillas and then apply either \trivD\ or \bona\ to reuse dirty ancillas. \recycle+\trivD\ reduces width by 88\%--97\% and dirty-ancilla usage by 93\%--98\%, while incurring a depth overhead of 24\%--99\%. In comparison, \recycle+\bona\ reduces width further, by 92\%--99\%, and eliminates 99\%--100\% of dirty ancillas, leaving at most five, at the cost of a larger depth overhead of 101\%--330\% (\textbf{RQ1}).
With a preprocessed manual optimization, \recycle+\bona\ eliminates all dirty ancillas, reduces width by 89\%--97\%, and increases depth by only 1\%--18\%, while consistently achieving smaller width and depth than \recycle+\trivD\ (\textbf{RQ1}).

For the clean-ancilla implementations, \recycle\ and \quantinuum\ directly reuse clean ancillas. Although they achieve comparable width reductions of 91\%--98\%, they incur depth overheads of 119\%--528\%, and \recycle\ times out on \texttt{PQW70\_6}.
Thus, despite starting from deeper dirty-ancilla circuits, the dirty-ancilla pipelines achieve comparable width with substantially smaller depth after optimization at most scales (\textbf{RQ2}).

At the largest completed scales, excluding the \recycle\ preprocessing time, the running times for the four dirty-ancilla columns from left to right are 5327\,s for \trivD, 8301\,s for \bona, 10444\,s for \trivD, and 260\,s for \bona, respectively (\textbf{RQ3}). For the clean-ancilla versions, the running times at the largest completed scales are 8288\,s for \recycle and 19855\,s for \quantinuum.

\definecolor{BrickRed}{rgb}{0.8,0.1,0.1}
\definecolor{RoyalBlue}{rgb}{0.1,0.2,0.7}
\definecolor{DarkGray}{gray}{0.4}

\newcommand{\poschg}[1]{\textcolor{BrickRed}{#1}}
\newcommand{\negchg}[1]{\textcolor{RoyalBlue}{#1}}
\newcommand{\neuchg}[1]{\textcolor{DarkGray}{#1}}

\begin{table*}[!t]
  \centering
  \small
  \setlength{\abovecaptionskip}{2pt}
  \setlength{\belowcaptionskip}{0pt}

  % ==========================================================
  % Part 1: Width / Depth
  % ==========================================================
  \begin{minipage}{\linewidth}
    \centering
    \vspace{2pt}

    \resizebox{\linewidth}{!}{
      \begin{tabular}{lcccccccccc c lcccccc}
        \toprule
        & \multicolumn{10}{c}{\textbf{Dirty-Ancilla Implementation}} 
        & 
        & \multicolumn{6}{c}{\textbf{Clean-Ancilla Implementation}} \\
        \cmidrule(lr){2-11}\cmidrule(lr){13-18}
        \multirow{2}{*}{Circuit}
          & \multicolumn{2}{c}{\textbf{Original}}
          & \multicolumn{2}{c}{\textbf{\trivD}}
          & \multicolumn{2}{c}{\textbf{\bona}}
          & \multicolumn{2}{c}{\textbf{Manual+\trivD}}
          & \multicolumn{2}{c}{\textbf{Manual+\bona}}
          &
          & \multicolumn{2}{c}{\textbf{Original}}
          & \multicolumn{2}{c}{\textbf{\recycle}}
          & \multicolumn{2}{c}{\textbf{\quantinuum}} \\
        \cmidrule(lr){2-3}\cmidrule(lr){4-5}\cmidrule(lr){6-7}
        \cmidrule(lr){8-9}\cmidrule(lr){10-11}
        \cmidrule(lr){13-14}\cmidrule(lr){15-16}\cmidrule(lr){17-18}
          & Width& Depth
          & Width& Depth
          & Width& Depth
          & Width& Depth
          & Width& Depth
          &
          & Width& Depth
          & Width& Depth
          & Width& Depth \\
        \midrule

        PQW18\_2
          & 418 & 697
          & \makecell{51 \\ \negchg{-88\%}} & \makecell{891 \\ \poschg{+28\%}}
          & \makecell{\textbf{32} \\ \negchg{-92\%}} & \makecell{1402 \\ \poschg{+101\%}}
          & \makecell{48 \\ \negchg{-89\%}} & \makecell{811 \\ \poschg{+16\%}}
          & \makecell{46 \\ \negchg{-89\%}} & \makecell{\textbf{704} \\ \poschg{+1\%}}
          &
          & 418 & 651
          & \makecell{\textbf{34} \\ \negchg{-92\%}} & \makecell{1643 \\ \poschg{+152\%}}
          & \makecell{39 \\ \negchg{-91\%}} & \makecell{\textbf{1428} \\ \poschg{+119\%}} \\
        PQW42\_6
          & 1338 & 725
          & \makecell{152 \\ \negchg{-89\%}} & \makecell{1442 \\ \poschg{+99\%}}
          & \makecell{\textbf{93} \\ \negchg{-93\%}} & \makecell{2014 \\ \poschg{+178\%}}
          & \makecell{156 \\ \negchg{-88\%}} & \makecell{1025 \\ \poschg{+41\%}}
          & \makecell{150 \\ \negchg{-89\%}} & \makecell{\textbf{856} \\ \poschg{+18\%}}
          &
          & 1338 & 679
          & \makecell{\textbf{93} \\ \negchg{-93\%}} & \makecell{4167 \\ \poschg{+514\%}}
          & \makecell{104 \\ \negchg{-92\%}} & \makecell{\textbf{3030} \\ \poschg{+346\%}} \\
        PQW56\_6
          & 3860 & 2527
          & \makecell{212 \\ \negchg{-95\%}} & \makecell{3270 \\ \poschg{+29\%}}
          & \makecell{\textbf{111} \\ \negchg{-97\%}} & \makecell{8790 \\ \poschg{+248\%}}
          & \makecell{212 \\ \negchg{-95\%}} & \makecell{3839 \\ \poschg{+52\%}}
          & \makecell{202 \\ \negchg{-95\%}} & \makecell{\textbf{2751} \\ \poschg{+9\%}}
          &
          & 3860 & 2157
          & \makecell{\textbf{107} \\ \negchg{-97\%}} & \makecell{13542 \\ \poschg{+528\%}}
          & \makecell{162 \\ \negchg{-96\%}} & \makecell{\textbf{10492} \\ \poschg{+386\%}} \\
        PQW70\_6
          & 10294 & 7025
          & \makecell{299 \\ \negchg{-97\%}} & \makecell{8685 \\ \poschg{+24\%}}
          & \makecell{\textbf{126} \\ \negchg{-99\%}} & \makecell{30175 \\ \poschg{+330\%}}
          & \makecell{293 \\ \negchg{-97\%}} & \makecell{10731 \\ \poschg{+53\%}}
          & \makecell{280 \\ \negchg{-97\%}} & \makecell{\textbf{7358} \\ \poschg{+5\%}}
          &
          & 10294 & 5419
          & \makecell{- \\ \negchg{-}} & \makecell{- \\ \poschg{-}}
          & \makecell{\textbf{226} \\ \negchg{-98\%}} & \makecell{\textbf{29184} \\ \poschg{+439\%}} \\
        \bottomrule
      \end{tabular}
    }
  \end{minipage}

  \vspace{8pt}

  % ==========================================================
  % Part 2: Clean/Dirty Ancilla Usage
  % ==========================================================
  \begin{minipage}{\linewidth}
    \centering
    \vspace{2pt}

    \resizebox{\linewidth}{!}{
      \begin{tabular}{lccccc c lccc}
        \cmidrule(lr){2-6}\cmidrule(lr){8-10}
        
          & \textbf{Original}
          & \textbf{\trivD}
          & \textbf{\bona}
          & \textbf{Manual+\trivD}
          & \textbf{Manual+\bona}
          &
          & \textbf{Original}
          & \textbf{\recycle}
          & \textbf{\quantinuum} \\
        \cmidrule(lr){2-6}\cmidrule(lr){8-10}
          & C/D & C/D & C/D & C/D & C/D
          &
          & C/D & C/D & C/D \\
        \midrule

        PQW18\_2
          & 76/324
          & \makecell{\textbf{11}/22 \\ \negchg{-86\%} / \negchg{-93\%}}
          & \makecell{\textbf{11}/3 \\ \negchg{-86\%} / \negchg{-99\%}}
          & \makecell{28/2 \\ \negchg{-63\%} / \negchg{-99\%}}
          & \makecell{28/\textbf{0} \\ \negchg{-63\%} / \negchg{-100\%}}
          &
          & 400/0
          & \makecell{\textbf{16}/0 \\ \negchg{-96\%} / \neuchg{0\%}}
          & \makecell{21/0 \\ \negchg{-95\%} / \neuchg{0\%}} \\

        PQW42\_6
          & 324/972
          & \makecell{\textbf{51}/59 \\ \negchg{-84\%} / \negchg{-94\%}}
          & \makecell{\textbf{51}/\textbf{0} \\ \negchg{-84\%} / \negchg{-100\%}}
          & \makecell{108/6 \\ \negchg{-67\%} / \negchg{-99\%}}
          & \makecell{108/\textbf{0} \\ \negchg{-67\%} / \negchg{-100\%}}
          &
          & 1296/0
          & \makecell{\textbf{51}/0 \\ \negchg{-96\%} / \neuchg{0\%}}
          & \makecell{62/0 \\ \negchg{-95\%} / \neuchg{0\%}} \\

        PQW56\_6
          & 456/3348
          & \makecell{\textbf{51}/105 \\ \negchg{-89\%} / \negchg{-97\%}}
          & \makecell{\textbf{51}/4 \\ \negchg{-89\%} / \negchg{-100\%}}
          & \makecell{146/10 \\ \negchg{-68\%} / \negchg{-100\%}}
          & \makecell{146/\textbf{0} \\ \negchg{-68\%} / \negchg{-100\%}}
          &
          & 3804/0
          & \makecell{\textbf{51}/0 \\ \negchg{-99\%} / \neuchg{0\%}}
          & \makecell{106/0 \\ \negchg{-97\%} / \neuchg{0\%}} \\

        PQW70\_6
          & 588/9636
          & \makecell{\textbf{51}/178 \\ \negchg{-91\%} / \negchg{-98\%}}
          & \makecell{\textbf{51}/5 \\ \negchg{-91\%} / \negchg{-100\%}}
          & \makecell{210/13 \\ \negchg{-64\%} / \negchg{-100\%}}
          & \makecell{210/\textbf{0} \\ \negchg{-64\%} / \negchg{-100\%}}
          &
          & 10224/0
          & \makecell{- \\ -}
          & \makecell{\textbf{156}/0 \\ \negchg{-98\%} / \neuchg{0\%}} \\

        \bottomrule
      \end{tabular}
    }
  \end{minipage}

  % ==========================================================
  \caption{Resource usage of the parallel quantum walk (PQW) benchmark across different working-qubit scales (e.g., \texttt{PQW18\_2} denotes 18 working qubits and 2 parallel layers). The top table reports circuit width and depth. Original denotes the unoptimized max-ancilla version using dirty ancillas (left) or clean ancillas (right). For the dirty-ancilla versions, the \trivD and \bona columns correspond to the \recycle+\trivD\ and \recycle+\bona\ pipelines, respectively. The Manual+\trivD and Manual+\bona columns apply the corresponding pipelines to the manually optimized versions.
  Ancillas introduced by the decomposition of $\tMCX{n}$ gates are not manually optimized. For clean-ancilla versions, \recycle\ and \quantinuum\ reuse clean ancillas. The second line of each cell reports the relative change (\%) with respect to the corresponding dirty- or clean-ancilla Original circuit, and ``--'' indicates that the optimization timed out after 8 hours. The bottom table reports clean/dirty ancilla usage (C/D) for the same set of PQW benchmarks.}

  \vspace{-0.25cm}
  \label{tab:pqw}
\end{table*}

\subsubsection{Shor's Algorithm}

One of the best-known applications of dirty ancillas is the substantial width
reduction achieved in Shor's algorithm~\cite{factoring2np2_2017,
gidney2018factoringn2cleanqubits}. Prior work achieves this reduction using an
intricate, Shor-specific ancilla-reuse strategy. This case study evaluates
whether \bona\ can capture such algorithm-specific structure and automatically recover most of the reuse benefit.

\paragraph{Implementation details}
We implement modular exponentiation circuits, which constitute the entire quantum part of Shor's algorithm, following the construction in~\cite{factoring2np2_2017}. 
Our implementation includes both a dirty-ancilla max-ancilla version and a manually optimized version based on the intricate ancilla-reuse strategy developed in that work. For simplicity, we replace the QFTs with $\tMCX{n}$ gates, as this replacement does not affect the available reuse opportunities.
We retain prime input instances $N$ of Shor's algorithm to study how \bona\ scales with the circuit size induced by increasing problem size. In each case, we choose $a$ such that $\gcd(a,N)=1$, ensuring that modular multiplication by $a$ is reversible.

\paragraph{Experimental results}

As shown in Table~\ref{tab:shor_comparison}, manual optimization (column Manual) eliminates all dirty ancillas by coordinating reuse across the repeated components of Shor circuits. Both \trivD\ (column \trivD) and \bona\ (column \bona) substantially reduce dirty ancillas; for example, \bona\ reduces their number from 1008 to 7 for \(N=5\) and from 6400 to 20 for \(N=21\). On these Shor circuits, \trivD\ achieves lower width than \bona, as directly chaining non-overlapping dirty-qubit lifetimes matches their long sequential structure. However, \bona\ relies on local dependency and lifetime information and may therefore miss globally coordinated reuse opportunities across repeated components. Overall, \bona\ greatly reduces circuit width relative to the original circuits while maintaining depth comparable to manual optimization.

\begin{table}[t]
\centering
\scriptsize
\begin{tabular}{cccccccccccccc}
\toprule
\multicolumn{2}{c}{\textbf{Instance}}
    & \multicolumn{3}{c}{\textbf{Original}}
    & \multicolumn{3}{c}{\textbf{Manual}}
    & \multicolumn{3}{c}{\textbf{\trivD}}
    & \multicolumn{3}{c}{\textbf{\bona}} \\
\cmidrule(lr){3-5} \cmidrule(lr){6-8} \cmidrule(lr){9-11} \cmidrule(lr){12-14}
$N$ & $a$ & Width & Dirty & Depth & Width & Dirty & Depth & Width & Dirty & Depth & Width & Dirty & Depth\\
\midrule
 3 &  2 &    198 &    192 &    1484 &      6 &      0 &    1508 &     10 &      4 &    1484 &      9 &      3 &    1486 \\
 5 &  2 &   1016 &   1008 &    8038 &      8 &      0 &    8263 &     14 &      6 &    8084 &     15 &      7 &    8074 \\
 7 &  3 &   1016 &   1008 &    8176 &      8 &      0 &    8382 &     14 &      6 &    8221 &     15 &      7 &    8229 \\
 9 &  2 &   2826 &   2816 &   16809 &     10 &      0 &   17395 &     17 &      7 &   19279 &     20 &     10 &   20146 \\
15 &  2 &   2826 &   2816 &   17162 &     10 &      0 &   17805 &     17 &      7 &   19602 &     20 &     10 &   20388 \\
15 &  4 &   2826 &   2816 &   17161 &     10 &      0 &   17802 &     17 &      7 &   19634 &     20 &     10 &   20366 \\
21 &  2 &   6412 &   6400 &   43514 &     12 &      0 &   45166 &     20 &      8 &   48368 &     32 &     20 &   49156 \\
25 &  2 &   6412 &   6400 &   44286 &     12 &      0 &   46062 &     20 &      8 &   49777 &     32 &     20 &   49816 \\
\bottomrule
\end{tabular}
\caption{Comparison of Different Shor Algorithm Implementations.
\(N\) is the integer to factor and \(a\) is the randomly chosen base. For each instance we compare the original modular exponentiation circuit with manual reuse, \trivD, and \bona. We compare the circuit width, number of dirty qubits, and circuit depth.}
\vspace{-0.25cm}
\label{tab:shor_comparison}
\end{table}

\subsection{Approaching Manual Optimization for Component-level Quantum Circuits}

Components of quantum algorithms often implement focused and self-contained functionalities, are typically of moderate size, and are carefully optimized by hand. After evaluating \bona\ on quantum algorithms, we investigate how closely \bona\ approaches manually optimized component-level quantum circuits to answer \textbf{RQ1}.

\paragraph{Benchmark details}
Our benchmarks include: \texttt{MCX400}, a $\tMCX{400}$ circuit adapted from~\cite{gidney2015}, which inherently requires at least one dirty ancilla; \texttt{Incrementer50}, a 50-qubit incrementer from the same source~\cite{gidney2015}, where decomposing the $\tMCX{49}$ via an M-chain uses 47 dirty ancillas; grover15~\cite{grover}, adapted from~\cite{unqomp}, whose oracle employs an M-chain decomposition with a dirty-ancilla requirement of 13; and \texttt{StatePreparation5}~\cite{Low2024tradingtgatesdirty}, a 5-qubit state-preparation circuit, whose optimal configuration uses one clean and 80 dirty ancillas.
We use the ancilla-reuse patterns prescribed by or directly derived from the
cited constructions to construct the manually optimized versions.

\paragraph{Experimental results}

As shown in~\cref{tab:componentcircuits}, \bona\ matches the final width of the
manual implementations on \texttt{MCX400}, \texttt{Incrementer50}, and
\texttt{Grover15}, while producing a circuit that is 10 qubits wider on
\texttt{StatePreparation5}. For \texttt{MCX400}, \bona\ also produces smaller
depth than manual reuse (2788 versus 3183). On the remaining three benchmarks,
its depth is identical to that of the manual implementation or differs by at most three levels.
Thus, \bona\ automatically matches the manually designed width--depth tradeoff
on three of the four components, with a small remaining width gap on
\texttt{StatePreparation5} (\textbf{RQ1}).

\begin{table}[t]
    \centering
    \scriptsize
    \setlength{\tabcolsep}{3.5pt}
    \begin{tabular}{lccccccccc}
        \toprule
        \multirow{2}{*}{\makecell{\textbf{Circuit} \\ Dirty}} 
            & \multicolumn{3}{c}{\textbf{Original}} 
            & \multicolumn{3}{c}{\textbf{\bona}} 
            & \multicolumn{3}{c}{\textbf{Manual}} \\
        \cmidrule(lr){2-4} \cmidrule(lr){5-7} \cmidrule(lr){8-10}
            & Width& Depth & C/D 
            & Width& Depth & C/D 
            & Width& Depth & C/D \\
        \midrule
        \texttt{MCX400}~\cite{gidney2015}     
            & 1199 & 2192 & 0/797
            & \makecell{403 \\ \negchg{-66\%}} 
              & \makecell{2788 \\ \poschg{+27\%}} 
              & \makecell{0/1 \\ \textcolor{gray}{0\%} / \negchg{-100\%}}
            & \makecell{403 \\ \negchg{-66\%}} 
              & \makecell{3183 \\ \poschg{+45\%}} 
              & \makecell{0/1 \\ \textcolor{gray}{0\%} / \negchg{-100\%}} \\
        \texttt{Incrementer50}~\cite{gidney2015}       
            & 1178 & 4515 & 0/1128
            & \makecell{97 \\ \negchg{-92\%}}
              & \makecell{4515 \\ \textcolor{gray}{+0\%}}
              & \makecell{0/47 \\ \textcolor{gray}{0\%} / \negchg{-96\%}}
            & \makecell{97 \\ \negchg{-92\%}}
              & \makecell{4515 \\ \textcolor{gray}{+0\%}}
              & \makecell{0/47 \\ \textcolor{gray}{0\%} / \negchg{-96\%}} \\
        \texttt{Grover15}~\cite{grover, unqomp}    
            & 3566 & 14488 & 0/3550
            & \makecell{29 \\ \negchg{-99\%}}
              & \makecell{14488 \\ \textcolor{gray}{+0\%}}
              & \makecell{0/13 \\ \textcolor{gray}{0\%} / \negchg{-100\%}}
            & \makecell{29 \\ \negchg{-99\%}}
              & \makecell{14488 \\ \textcolor{gray}{+0\%}}
              & \makecell{0/13 \\ \textcolor{gray}{0\%} / \negchg{-100\%}} \\
        \texttt{StatePreparation5}~\cite{Low2024tradingtgatesdirty}  
            & 229 & 3276 & 2/212
            & \makecell{106 \\ \negchg{-54\%}}
              & \makecell{3497 \\ \poschg{+7\%}}
              & \makecell{1/90 \\ \negchg{-50\%} / \negchg{-58\%}}
            & \makecell{96 \\ \negchg{-58\%}}
              & \makecell{3500 \\ \poschg{+7\%}}
              & \makecell{1/80 \\ \negchg{-50\%} / \negchg{-62\%}} \\
        \bottomrule
    \end{tabular}
    \caption{
    Width, Depth, and clean/dirty ancilla counts (C/D) for component-level benchmark circuits using dirty-ancilla implementations. 
    From left to right: the max-ancilla versions (Original), \recycle+\bona\ pipeline (\bona), and the manually optimized versions (Manual).
    }
    \label{tab:componentcircuits}
    \vspace{-0.6cm}
\end{table}

\subsection{Unlocking the Scheduling Potential of Dirty Ancillas in Parallel Circuits}

\Cref{tab:pqw} primarily answers \textbf{RQ2} by showing that, for circuits with parallel execution structures, dirty-ancilla implementations may initially have greater depth, but after applying the \recycle+\bona\ pipeline, they achieve substantially lower depth than the optimized clean-ancilla implementations while attaining comparable final width. In this subsection, we further investigate this scheduling advantage to answer \textbf{RQ2}. Specifically, we randomly compose small circuits into larger circuits with parallel execution structures of varying degrees of parallelism, and examine how the advantage changes and at what level of parallelism it emerges.

\paragraph{Implementation Details}
We derive two subsets from RevLib:
(1) circuits containing no ancillas (17 circuits in total), where ancillas are introduced only through $\tMCX{n}$ decompositions before optimization; and
(2) circuits obtained by excluding extremely large instances (85 circuits in total), to avoid excessive variance during stitching.
For each subset, we generate 10 circuit sets, each containing 100 randomly selected components.
Given a parallelism parameter $p$, the 100 components in each set are stitched into layers with approximately $p$ parallel subcircuits per layer, resulting in about $100/p$ layers. For each subset and each value of $p$, this procedure yields 10 composite circuits, over which we report the mean.
Varying $p$ therefore corresponds to evaluating different degrees of parallelism.

\begin{figure}[htbp]
    \centering
    \includegraphics[width=1.0\linewidth]{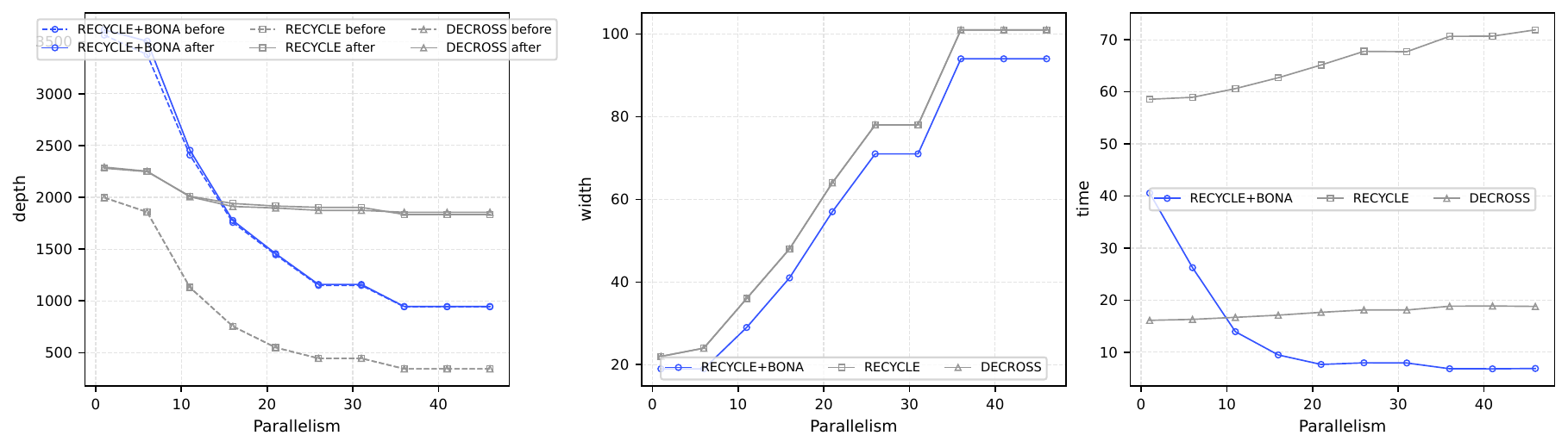}
    \vspace{0.8em}
    \includegraphics[width=1.0\linewidth]{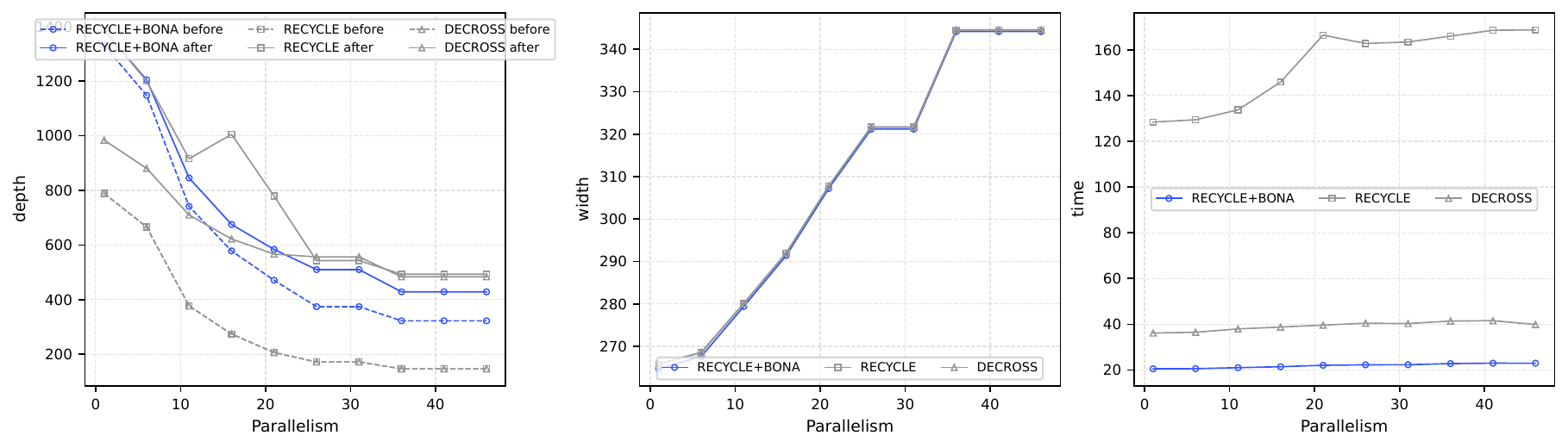}
    \vspace{-0.5cm}
    \caption{
    From left to right, the plots show circuit depth, width, and optimization time as parallelism increases for the three optimization strategies \recycle\ +\bona, \recycle, and \quantinuum; lower is better for all three metrics.
    Each data point reports the mean over 10 composite circuits.
    The top and bottom rows correspond to the two RevLib subsets, respectively.
    For depth, dashed lines indicate pre-optimization values, while solid lines indicate post-optimization results.
    For width, solid lines report post-optimization results; all three methods substantially reduce ancilla usage.
    The three strategies are distinguished by color and marker shape: blue circles for \recycle\ +\bona, grey squares for \recycle, and grey triangles for \quantinuum.
    The \recycle\ +\bona\  pipeline operates on composite circuits whose $\tMCX{n}$ gates are decomposed using dirty ancillas, whereas the standalone \recycle\ and \quantinuum\ strategies operate on clean-ancilla decompositions.}
% \vspace{-0.2cm}
    \label{fig:random-circuits}
\end{figure}

\paragraph{Experimental results}
As shown in the two leftmost plots of~\cref{fig:random-circuits}, the dirty-ancilla circuits have, on average, roughly twice the pre-optimization depth of the clean-ancilla circuits. After optimization, \recycle+\bona\ increases the depth of the dirty-ancilla versions only slightly, whereas both \recycle\ and \quantinuum\ substantially increase the depth of the clean-ancilla versions (\textbf{RQ2}).

For subset~(1), the dirty-ancilla depth becomes consistently smaller once $p$ reaches approximately 15. This corresponds to about seven layers with 15 parallel subcircuits per layer. For subset~(2), the crossover occurs at approximately $p=22$. At the highest parallelism level in the top-left plot, corresponding to approximately two layers with 50 parallel subcircuits each, the optimized clean-ancilla depth is nearly twice the dirty-ancilla depth.

The two middle plots, which report post-optimization width, further show that the dirty-ancilla versions consistently maintain smaller width in the top-middle plot and comparable width in the bottom-middle plot across all parallelism settings (\textbf{RQ2}). On average, in the top-middle plot, \recycle+\bona\ leaves only one dirty ancilla unreduced, whereas both \recycle\ and \quantinuum\ leave approximately seven clean ancillas unreduced.

In addition, the two rightmost plots show that \recycle+\bona\ requires less optimization time than \recycle\ and \quantinuum\ (\textbf{RQ3}).

\subsection{Summary and Future Work}
Overall, our evaluation shows that \bona\ is effective in reusing dirty ancillas. In practice, it can be integrated with clean-ancilla optimization, as illustrated by the \recycle+\bona\ pipeline, to handle circuits containing both types of ancillas. Our results also reveal a distinct scheduling advantage of dirty ancillas, particularly in parallel circuits. Thus, clean and dirty ancillas can be treated as complementary resources, whose respective advantages can be jointly exploited in circuit optimization.

Besides scheduling, another important problem is facilitating the use of dirty ancillas, similar to prior work on clean ancillas~\cite{silq,unqomp,reqomp,modularuncomp,qurts}. Such automation may be provided through type checking~\cite{silq,qurts} or circuit-structure analysis~\cite{unqomp,reqomp}. Since dirty ancillas are more complex to use than clean ancillas, new methods may be required.

\section{Related Work}
\label{sec: related work}

Existing work on automated ancilla management is primarily framed as a
\emph{logical circuit-level} optimization problem, rather than one tied to
hardware topology. This aligns with the layered structure of quantum
compilation: resource management is performed on the logical circuit,
while hardware-specific tasks such as mapping, routing, and SWAP insertion
are handled later. Operating at this level makes the optimization more
general and reusable across backends, as it targets intrinsic dataflow
rather than architectural constraints. Topology-aware factors are thus
largely orthogonal, affecting cost models but not the core formulation of
ancilla allocation and reuse. Following this perspective, prior work has
proposed several logical-level approaches, which we review next.

\paragraph{Clean qubit management.}
Qubit management, which relies on the reuse of clean qubits (also called recycling), has been widely studied.
Paler et al.\cite{Paler2017Wirerecycling} first introduces causal graphs to apply automatic recycling to quantum circuits. Sadeghi et al.\cite{sadeghi2022quantumcircuitresizing} and DeCross et al.\cite{quantinuum} respectively introduce dependency lists and the causal cone to reuse. The latter also uses the and dual-circuit technique.
Hua et al.~\cite{Hua2023CqQR} developed CaQR, a tool that exploits trade-offs among qubit reuse, fidelity, gate count, and circuit duration, while also handling gate commutativity.
Jiang~\cite{jiang2024recycling} introduced qubit dependency graph abstraction to design an efficient solver for the recycling problem and further proving the problem’s NP-completeness by reducing Wilf’s question to the decisional version.

\paragraph{Beyond structural topology.}
The following recent works exceed the scope of this paper and allow structural circuit modifications.
SQUARE\cite{Ding2020SQUARE} strategically rearranges uncomputation blocks to create opportunities for qubit reuse.
Furthermore, Reqomp~\cite{reqomp} reuses ancillas prior to uncomputation and supports automatic uncomputation.
The work of Fang et al.~\cite{fang2023Dynamic} introduces a general framework that enables optimal dynamic quantum circuit compilation via qubit-reuse while also managing commutable structures.
Brandhofer et al.\cite{Brandhofer2023Nearterm} combines a SAT-based model with per-qubit error characterization, demonstrating its effectiveness for circuits with up to ten qubits.
Niu et al.\cite{Niu2024OptimizationinBQSKit} introduces reuse opportunities at the unitary-matrix level and provides an algorithm configurable between qubit reduction and circuit-depth reduction. 
Tang et al.~\cite{Tang2025Widthoptimization} uncovered additional reuse opportunities by focusing on adjusting the gate execution sequence.
Kim et al.\cite{Kim2025QRmap} incorporates the mapping onto physical qubits.

\paragraph{Dirty qubits with manual borrowing}
Dirty qubits have been widely used in circuit design and quantum algorithms. 
Barenco et al.\cite{Barenco1995Elementarygates} were the first to investigate the construction of $\tMCX{n}$ gates using dirty ancillas.
Gidney presented an incrementer circuit employing dirty ancillas with a detailed explanation in his well-known blog~\cite{gidney2015}.
Häner et al.\cite{factoring2np2_2017} introduced a constant adder based on dirty ancillas for Shor's algorithm. Gidney combines the dirty-ancilla techniques in the previous two works to further reduce the number of clean qubits required in Shor’s algorithm~\cite{gidney2018factoringn2cleanqubits}.
More recently, Nie et al.\cite{nie2024quantumcircuitmultiqubittoffoli} proposed the conditional clean-qubit strategy for $\tMCX{n}$, achieving $O(\log n)$ depth and $O(n)$ size while requiring only a single ancilla of either type.
Building on this strategy, Khattar et al.\cite{Khattar2025riseofconditionally} introduced ladder-toggle detection to replace clean ancillas with dirty ones under certain conditions and demonstrates several constructions with reduced depth, gate count, and overall resource usage.
Low et al.\cite{Low2024tradingtgatesdirty} developed a data-lookup oracle in which dirty ancillas serve as batched temporal workspace, enabling a trade-off between dirty qubits and $\tT$-gate count for state preparation and unitary synthesis.
Similarly, Huang et al.\cite{Huang2025Constructingquantumimplementations} exploited dirty ancillas to minimize $\tT$-depth or width in cryptographic circuit gadgets.
More recently, Remaud et al.\cite{Remaud_2025} developed the first ancilla-free quantum adder with sublinear depth using the technique of dirty qubits.

\paragraph{Verification of safe use of dirty ancillas}

Su et al.\cite{su2024bibasedreasoningquantumprograms, su2025borrowingdirtyqubitsquantum} formalize the semantics of dirty-qubit borrowing in quantum programming languages, present quantum separation logic, and use SMT solvers to verify the safe use of dirty ancillas.
Notably, while Su et al.~\cite{su2025borrowingdirtyqubitsquantum} reduce the
\emph{verification of safe use} to the unsatisfiability of Boolean formulas,
our work establishes the NP-hardness of a fundamentally distinct optimization
problem via a reduction from 3-SAT to \fr.
Their results indicate that ensuring the safe use of dirty ancillas requires no additional reasoning beyond standard quantum program verification~\cite{Ying24,LSZreview,CVLreview}, implying that recently developed automated verification tools such as~\cite{VerifyingWithTreeAutomata25,chen2023automata} may also be capable of reasoning about dirty ancillas.
Recently, \cite{formalVeriSafety} formally verify clean and dirty ancilla safety via Pauli-X/Z commutativity checks.

\section*{Data-Availability Statement}\label{sec: data-availability}

The implementation and benchmarks used in \cref{sec:evaluation} are available at \url{https://anonymous.4open.science/r/ReuseDQ-D412}. 
The repository includes source code and experimental scripts necessary to reproduce the results.

\section*{Acknowledgements}

We thank the anonymous reviewers for their helpful feedback, which helped us improve the paper. We are also grateful to Minbo Gao, Zhenhao Li and Qisheng Wang for valuable technical discussions.
This work was supported in part by the Beijing Major Science and Technology Project under Contract no. Z251100008125035. This work was supported by Beijing Academy of Artificial Intelligence (BAAI).

\bibliographystyle{ACM-Reference-Format}
\bibliography{references}

\newpage
\appendix

\noindent{\LARGE \textbf{\textsf{Supplementary Material}}}

\section{Proof of Proposition~\cref{prop:semantics-preserving-tarnsitions}}\label{app:proof-prop-semantics-preserving-tarnsitions}

We begin by proving semantics of substituition.
\begin{proposition}\label{prop:semantics-subst}
    $\sem{C[q/d]} = \sem{C}[q/d]$.
\end{proposition}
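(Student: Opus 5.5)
We prove $\sem{C[q/d]} = \sem{C}[q/d]$ by structural induction on $C$, with the induction hypothesis quantified over all pairs of names $(q,d)$. Quantifying over all names is needed because the capture-avoiding clause substitutes a fresh $a'$ before substituting $q$. Here $\sem{C}[q/d]$ means the operator obtained by relabelling the tensor factor indexed by $d$ as $q$. This is well defined when $q\notin \qv(C)$, or trivially when $d\notin\qv(C)$, since then the substitution is the identity. We use the convention $\bot[q/d]=\bot$. The proof will record this side condition, which is exactly the situation in which the lemma is used, namely the premise of \RuleSubst. The central fact is that relabelling a tensor index is a bijection of index sets, and it commutes with composition, with tensoring by identities, and with the factorisation $F\otimes I_d$.

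For the gate case $U[\overline{p}]$, substitution only renames the qubit list. So $U_{\overline{p}[q/d]}$ is by definition $U_{\overline{p}}$ with the $d$-slot renamed. For sequential composition, the induction hypothesis applies to $C_1$ and $C_2$. We also need $\qv(C_i[q/d]) = \qv(C_i)[q/d]$, a routine auxiliary lemma proved by the same induction. Given this, the padding sets $\qv(C_1)\setminus\qv(C_2)$ and their counterparts are mapped bijectively by the renaming. This step uses $q\notin\qv(C)$, so that no two distinct indices collapse. Clause (2) of Definition~\ref{def:semantics} then commutes with renaming, and the $\bot$ cases propagate because $\bot$ is absorbing on both sides.

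The borrow case $\borr{a}{C'}$ has three subcases. If $a=d$, then $d$ is not free and both sides equal $\sem{\borr{a}{C'}}$, since $d\notin\qv$ of it. In the generic case $a\neq d$ and $a\neq q$, the induction hypothesis gives $\sem{C'[q/d]} = \sem{C'}[q/d]$, and the renaming fixes $a$. So $\sem{C'}=F\otimes I_a$ holds iff $\sem{C'}[q/d] = F[q/d]\otimes I_a$. The three-way case split of clause (3) is preserved, including the case $a\notin\qv$ and the $\bot$ case. In the capture case $q=a$, we first prove $\alpha$-invariance: $\sem{\borr{a}{C'}} = \sem{\borr{a'}{C'[a'/a]}}$ for fresh $a'$. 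This follows from the induction hypothesis applied to $C'$ with the pair $(a',a)$, because relabelling the factor $I_a$ to $I_{a'}$ yields the same $F$. We then apply the generic subcase to $C'[a'/a]$, which has the same size as $C'$. For this reason the induction should be on the size of $C$, not on the literal syntax.

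The main obstacle is the bookkeeping in the borrow case, specifically showing that the factorisation condition $\sem{C'} = F_{\qv(C')\setminus\{a\}}\otimes I_a$ is invariant under renaming. This requires care that $q$ is not identified with a bound name. It also requires that the set-level identity $\qv(C[q/d]) = \qv(C)[q/d]$ holds, which is where the freshness of $a'$ and the side condition $q\notin\qv(C)$ are used. Once these two auxiliary facts are in place, every case reduces to the observation that a tensor-index relabelling is a monoidal isomorphism.
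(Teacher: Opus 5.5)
Your proposal is correct and follows the same route as the paper: structural induction with a gate case, a sequential case, and a three-way borrow case ($a=d$; the capture case $q=a$ handled by $\alpha$-renaming; and the generic case, where the factorization $F\otimes I_a$ is shown to be preserved under relabelling). Your additions make rigorous the steps the paper only glosses: the side condition $q\notin\qv(C)$, which the paper's unconditional statement silently needs (as $\tCNOT[q,d]$ shows); induction on size, so the hypothesis applies to $C'[a'/a]$; and an explicit $\alpha$-invariance lemma for the capture case.
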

\begin{proof}[Proof]
    We proceed by structural induction on $C$.

    \noindent\textbf{1.\;Case $C = U[\overline{p}]$.}
    This follows directly from Definition~\ref{def:semantics} and the definition of substituition:
    \[
        \sem{U[\overline{p}]}[q/d]
        = (U_{\overline{p}})[q/d]
        = U_{\overline{p[q/d]}}
        = \sem{U[\overline{p[q/d]}]}
        = \sem{\,U[\overline{p}][q/d]\,}.
    \]

    \noindent\textbf{2.\;Case $C = C_1; C_2$.}
    Using the induction hypothesis:
    \begin{align*}
        \sem{(C_1; C_2)[d/q]}
            &= \sem{(C_1[d/q]); C_2[d/q]}  \\
            &= \sem{C_2[d/q]}\, \sem{C_1[d/q]} \\
            &= \sem{C_2}[d/q]\, \sem{C_1}[d/q] \\
            &= (\sem{C_2}\sem{C_1})[d/q] \\
            &= \sem{C_1; C_2}[d/q].
    \end{align*}

    \noindent\textbf{3.\;Case $C = \borr{a}{C'}$.}
    We distinguish several subcases depending on the relationship among $a, q, d$.

    \textit{(a) If $d = a$.}
    Then substitution does not affect the borrow binder:
    \[
        \sem{\borr{a}{C'}[q/d]} = \sem{\borr{a}{C'}}
        = F_{\qv(C') \setminus a}
        = F_{\qv(C') \setminus a}[q/a],
    \]
    or both sides equal $\bot$ if the borrow is unsafe.

    \textit{(b) If $q = a$ and $d \neq a$.}
    Alpha-renaming reduces this case to the inductive hypothesis:
    \begin{align*}
        \sem{\borr{a}{C'}[a/d]}
            &= \sem{\borr{a'}{C'[a'/a]}[a/d]} \\
            &= \sem{\borr{a'}{C'[a'/a][a/d]}} \\
            &= \sem{\borr{a'}{C'[a'/a]}}[a/d].
    \end{align*}

    \textit{(c) If $q \ne a$ and $d \ne a$.}
    Assuming the borrow is safe, we compute:
    \begin{align*}
        \sem{(\borr{a}{C})[q/d]} 
        \ket{e_i}_{\overline{p}} \ket{x}_d \ket{y}_a \ket{z}_q
            &= \sem{C[q/d]} 
               \ket{e_i}_{\overline{p}} \ket{x}_d \ket{y}_a \ket{z}_q \\
            &= \sem{C}[q/d]
               \ket{e_i}_{\overline{p}} \ket{x}_d \ket{y}_a \ket{z}_q \\
            &= (F_{\overline{p},d} \otimes I_a)[q/d]
               \ket{e_i}_{\overline{p}} \ket{x}_d \ket{y}_a \ket{z}_q \\
            &= F_{\overline{p},q} \otimes I_a
               \ket{e_i}_{\overline{p}} \ket{x}_d \ket{y}_a \ket{z}_q.
    \end{align*}
    Hence,
    \[
        \sem{(\borr{a}{C})[q/d]}
        = F_{\overline{p},q}
        = F_{\overline{p},d}[q/d]
        = \sem{\borr{a}{C}}[q/d].
    \]
    If the borrow is unsafe, both sides evaluate to $\bot$, so the equality is trivial.
\end{proof}

\begin{proposition}\label{prop:semantics-transition}
    All transition rules in~\cref{subsec:transitions}, except \RuleSubst, are semantics preserving. 
    The rule \RuleSubst~ is semantics preserving provided that the statement $\borr{d}{C}$ is safe.
\end{proposition}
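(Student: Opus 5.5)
We argue rule by rule, using the denotational semantics of Definition~\ref{def:semantics}. Write $\sigma(C)$ for $\sem{C}$ cylindrically extended by identities to any larger qubit set; with this convention, sequential composition is just the product $\sigma(C_2)\sigma(C_1)$, and $\bot$ is absorbing.

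\textbf{Structural rules.} Semantics is compositional: $\sem{C_1;C_2}$ depends only on $\sem{C_1}$, $\sem{C_2}$ and their variable sets, and $\sem{\borr{d}{C}}$ depends only on $\sem{C}$. No transition rule changes $\qv$; for \RuleSubst\ the side condition makes $q$ free in $C[q/d]$, which is exactly the role $d$ played. So it suffices to check each non-structural rule at the root and then lift by induction on the derivation, including the $\bot$ cases.

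\textbf{Topological rules.}
\begin{itemize}
\item \RuleAssoc\ follows from associativity of operator products.
\item \RuleSwap\ holds because $\qv(C_1)\cap\qv(C_2)=\emptyset$, so $\sigma(C_1)=\sem{C_1}\otimes I$ and $\sigma(C_2)=I\otimes\sem{C_2}$ act on disjoint tensor factors and commute.
\item \RuleBorrComm\ holds because $\sem{C}=F\otimes I_{d_1}\otimes I_{d_2}$ is a condition symmetric in $d_1,d_2$. If the inner borrow fails, so does the outer one in the other order, and both sides become $\bot$. The case where some $d_i\notin\qv(C)$ is excluded by the standing assumption.
\item \RuleBorrLeft\ (\RuleBorrRight\ is symmetric) is the main case. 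Since $d\notin\qv(C_0)$, we have $\sigma(C_0)=\sem{C_0}\otimes I_d$. If $\sem{C_1}=F\otimes I_d$, then $\sem{C_0;C_1}=(\sigma(F)\,\sigma(C_0))\otimes I_d$. Hence the right side equals $\sigma(F)\sigma(C_0)$, which is the left side.
\end{itemize}

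\textbf{Converse for \RuleBorrLeft.} The step I expect to be the main obstacle is showing that $\borr{d}{C_0;C_1}$ safe implies $\borr{d}{C_1}$ safe, so that both sides agree on $\bot$. Suppose $\sigma(C_1)(\sem{C_0}\otimes I_d)=G\otimes I_d$. Then $\sigma(C_1)=(G\,\sigma(C_0)^{-1})\otimes I_d$. The factor $\sigma(C_0)^{-1}$ is unitary, because $C_0$ has no $\bot$ (otherwise both sides are $\bot$ anyway). Hence $\sem{C_1}$ has the form $F\otimes I_d$. Restricting from the cylindrical extension back to $\qv(C_1)$ is routine, since tensoring with an identity is injective.

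\textbf{\RuleSubst.} Assume $\borr{d}{C}$ is safe, so $\sem{C}=F\otimes I_d$ with $F$ on $\qv(C)\setminus\{d\}$. By Proposition~\ref{prop:semantics-subst}, $\sem{C[q/d]}=\sem{C}[q/d]=F\otimes I_q$. The side condition $q\notin\qv(C)\cup\qa(C)$ ensures that the renaming does not clash with $F$'s qubits or with bound names. Extending $\sem{\borr{d}{C}}=F$ by $I_q$, as the sequential composition rule does, gives equality with $\sem{C[q/d]}$ as operators on the enlarged variable set. So \RuleSubst\ preserves semantics up to this identity padding, which is the intended reading of Proposition~\ref{prop:semantics-preserving-tarnsitions}.

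If $\borr{d}{C}$ is unsafe, the left side is $\bot$ while the right side may not be. This is why the proposition requires safety for \RuleSubst.
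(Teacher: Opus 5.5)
Your proposal is correct and follows the paper's proof essentially step for step. \RuleSwap\ and \RuleAssoc\ are immediate, \RuleBorrLeft/\RuleBorrRight\ push the factorization $\sem{C_1}=F\otimes I_d$ past $C_0$, \RuleBorrComm\ uses the symmetric condition $\sem{C}=F\otimes I_{d_1}\otimes I_{d_2}$, the structural rules follow by induction, and \RuleSubst\ reduces to the substitution lemma (Proposition~\ref{prop:semantics-subst}). Beyond this, your invertibility argument (safety of $\borr{d}{C_0;C_1}$ implies safety of $\borr{d}{C_1}$) and your explicit $I_q$ padding for \RuleSubst\ make rigorous two points the paper only asserts; note that your restriction step really needs a partial-trace argument, not just injectivity of $X\mapsto X\otimes I$.
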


\begin{proof}
    We prove semantic preservation by considering the three classes of transition rules.

    \paragraph{Topological transitions.}
The cases \RuleSwap~ and \RuleAssoc~ are immediate, as they merely reorder or regroup sequential compositions without affecting denotational semantics.

    We next consider \RuleBorrLeft.  
    Assume $d \notin \qv(C_0)$.  
    Let $\overline{p}$ denote the qubits in $(\qv(C_1)\setminus d)\setminus \qv(C_0)$ and $\overline{q}$ denote the qubits in $\qv(C_0)$.

    If either side is unsafe, then both  
    $\sem{C_0; \borr{d}{C_1}}$ and $\sem{\borr{d}{C_0; C_1}}$ equal $\bot$, and the equality is trivial.  
    Suppose now that both sides are safe.  
    For any state $\ket{x}_d$, computational basis states $\ket{e_i}_{\overline{p}}$ and $\ket{f_j}_{\overline{q}}$, we have:
    \begin{align*}
        \sem{C_0; \borr{d}{C_1}} 
        \ket{x}_d \ket{e_i}_{\overline{p}} \ket{f_j}_{\overline{q}}
        &= \sem{\borr{d}{C_1}}
           \ket{x}_d \ket{e_i}_{\overline{p}} \sem{C_0}\ket{f_j}_{\overline{q}} \\
        &= \sem{C_1} 
           (\ket{x}_d \ket{e_i}_{\overline{p}} \sem{C_0}\ket{f_j}_{\overline{q}}) \\
        &= \ket{x}_d F (\ket{e_i}_{\overline{p}} \sem{C_0}\ket{f_j}_{\overline{q}})
    \end{align*}
    where $\sem{C_1} = F \otimes I_d$ because $\sem{\borr{d}{C_1}}$ is safe.

    Similarly,
    \begin{align*}
        \sem{\borr{d}{C_0; C_1}}
        &= \sem{C_0; C_1}\ket{x}_d \ket{e_i}_{\overline{p}} \ket{f_j}_{\overline{q}} \\
        &= \sem{C_1}
           (\ket{x}_d \ket{e_i}_{\overline{p}} \sem{C_0}\ket{f_j}_{\overline{q}}) \\
        &= \ket{x}_d F (\ket{e_i}_{\overline{p}} \sem{C_0}\ket{f_j}_{\overline{q}}),
    \end{align*}
    which matches the expression above.  
    Hence $\sem{C_0; \borr{d}{C_1}} = \sem{\borr{d}{C_0; C_1}}$.

    The rule \RuleBorrRight~ is analogous.

    We finally consider \RuleBorrComm.  Let
    $R = \qv(C) \setminus \{d_1,d_2\}$ and $U = \sem{C}$.  Expanding the
    semantics of the two nested borrowing statements shows that
    \[
        \sem{\borr{d_1}{\borr{d_2}{C}}} = F_R
    \]
    if and only if
    \[
        U = F_R \otimes I_{\{d_1,d_2\}}.
    \]
    The same condition, with the same result $F_R$, characterizes
    $\sem{\borr{d_2}{\borr{d_1}{C}}}$.  If this factorization does not
    exist (including when $U=\bot$), both sides evaluate to $\bot$.
    Therefore, exchanging the order of two distinct nested borrowing
    statements preserves both safety and denotational semantics.

    \paragraph{Structural transitions.}
    The rules \RuleCompLeft, \RuleCompRight, and \RuleBorrInner~ are semantics preserving directly by induction on the sub-derivations
    
    \paragraph{\RuleSubst.}
    Finally, consider substitution.  
    To show that \RuleSubst~ is semantics preserving, it suffices to prove  
    \[
        \sem{C[q/d]} = \sem{C}[q/d],
    \]
    under the assumptions that $q \notin \qv(C) \cup \qa(C)$ and that $\borr{d}{C}$ is safe.
    This is precisely Proposition~\ref{prop:semantics-subst}.
\end{proof}

\begin{proposition}
    For any safe circuit $C$ such that $C \to^{*} C'$, we have
    \[
    \sem{C'} = \sem{C}.
    \]
\end{proposition}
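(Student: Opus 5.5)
The plan is to argue by induction on the length $n$ of the derivation $C \to^{*} C'$, using Proposition~\ref{prop:semantics-transition} as the one-step lemma. The base case $n=0$ is trivial, since then $C' = C$. For the inductive step, I write the derivation as $C \to^{*} C'' \to C'$. The induction hypothesis gives $\sem{C''} = \sem{C}$, and it remains to show $\sem{C'} = \sem{C''}$.

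By Proposition~\ref{prop:semantics-transition}, every rule other than \RuleSubst\ preserves semantics unconditionally, so those cases close at once. This includes structural transitions, since the proposition covers them by induction on sub-derivations.

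The only real work is the \RuleSubst\ case, because that proposition assumes the rewritten statement $\borr{d}{D}$ is safe. This is also where the hypothesis that $C$ is safe enters. I would first strengthen the induction invariant to: \emph{$C''$ is safe and $\sem{C''}=\sem{C}$.} Safety of $C''$ follows from $\sem{C''} = \sem{C} \neq \bot$. The remaining point is that safety propagates to subterms. By Definition~\ref{def:semantics}, $\bot$ is absorbing under composition, tensoring, and the borrow case, so if any subterm evaluated to $\bot$, the whole circuit would too. Hence every subterm of a safe circuit is safe. I would prove this small lemma by structural induction on circuits.

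With that lemma in hand, the \RuleSubst\ step fires on some subterm $\borr{d}{D}$ of $C''$, reached through structural contexts. This subterm is safe, so Proposition~\ref{prop:semantics-transition} gives $\sem{D[q/d]} = \sem{\borr{d}{D}}$. Compositionality of Definition~\ref{def:semantics} then lifts this equality through the enclosing contexts \RuleCompLeft, \RuleCompRight, and \RuleBorrInner. The lifting is by induction on the context depth, since each semantic clause depends on subterms only through their denotations.

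One subtlety needs checking at this lifting step: the free-variable sets must match. The clauses for $;$ and $\mathbf{borrow}$ pad with identities over $\qv$ sets, so I must confirm that $\qv(D[q/d]) = \qv(\borr{d}{D})$ for the context clauses to agree. This requires $q$ to be a qubit already present in the enclosing circuit, as the standing assumption guarantees. If the sets differed, the two sides would only agree up to identity extension, and the comparison would be made on $\qv$ of the whole circuit instead.

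I expect this bookkeeping, together with the safe-subterm lemma, to be the main obstacle. The remaining steps are routine applications of the earlier proposition.
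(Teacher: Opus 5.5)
Your proposal is correct and follows the paper's route. The paper simply declares the result immediate from Proposition~\ref{prop:semantics-transition}, i.e.\ induction on the length of the derivation; your safety invariant and safe-subterm lemma supply the justification of the \RuleSubst\ side condition that the paper leaves implicit. One correction: the side condition gives $q \notin \qv(D)$, so $\qv(D[q/d]) = \qv(\borr{d}{D}) \cup \{q\}$ is always strictly larger and $\sem{D[q/d]} = \sem{\borr{d}{D}} \otimes I_q$. The equality of $\qv$ sets you plan to confirm therefore never holds, and your identity-extension fallback (padding both sides to a common qubit set before lifting through the contexts) is the argument you actually need.
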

\begin{proof}
    This is trivial with Proposition~\ref{prop:semantics-transition}
\end{proof}

\section{Proofs for the Endpoint DAG Model}\label{app:correctDAG}

In this appendix, we provide the full proofs for the properties of the
Endpoint DAG model introduced in \Cref{sec:heuristic:graph}.

\subsection{Proof of Lemma \ref{lem:dag-properties} (Validity and Invariance)}

\begin{proof}
The lemma claims two fundamental properties: \textbf{Validity} (a DAG is
valid if and only if it corresponds to a circuit) and \textbf{Invariance}
(topological/structural rules do not alter the DAG). We prove them
separately.

\paragraph{Proof of Validity.}
($\Rightarrow$) Suppose $G = \T(C)$ is constructed from a valid quantum
circuit $C$.
(1) \emph{Acyclic}: Quantum circuits are inherently sequential and
operations strictly follow the forward progression of time. Because directed
edges are only added from earlier operations to later operations within the
abstract syntax tree, no cycles can be formed.
(2) \emph{Balanced}: Every quantum gate $g$ acting on $k$ qubits takes
exactly $k$ input wires and produces exactly $k$ output wires. Thus, its
corresponding node $\Op(g)$ will have exactly $k$ incoming edges and $k$
outgoing edges, yielding $\deg^{-}(\Op(g)) = \deg^{+}(\Op(g))=k$.
(3) \emph{Input/Output Consistency}: For every qubit $q$ (whether working or
dirty), the construction creates a distinguished path $P_q$ from $\In(q)$ to
$\Out(q)$, and the edge multiset is exactly the multiset-disjoint union of
these qubit paths.

($\Leftarrow$) Suppose $G = (Q, Q_D, V, E)$ satisfies the three validity
conditions. Since $G$ is acyclic, it admits a topological ordering of its
vertices. We proceed by induction on the number of operation nodes in $V$.
\begin{itemize}
    \item \emph{Base Case}: If there are no operation nodes, the balanced
and I/O consistency conditions imply that for every $q \in Q$, there is a
direct edge $(\In(q), \Out(q))$. This corresponds to an empty circuit
(identity operation).
    \item \emph{Inductive Step}: Assume any valid DAG with $k$ operation
nodes corresponds to a circuit. Consider a DAG with $k+1$ operation nodes.
Fix the qubit-wise path decomposition $\{P_q\}_{q\in Q}$ guaranteed by
I/O consistency.
In the topological sort of $G$, let $u = \Op(g)$ be the first operation
node. Because it is the first operation, all its incoming edges must
originate from input nodes, say $\In(q_1), \dots, \In(q_\ell)$.
Let $v_i$ be the successor of $u$ on the path $P_{q_i}$, i.e., $P_{q_i}$ locally contains
$\In(q_i) \to u \to v_i$.
    We construct a new graph $G'$ by removing $u$. For each $i \in \{1,
\dots, \ell\}$, we delete the edges $(\In(q_i), u)$ and $(u, v_i)$, and add
a direct edge $(\In(q_i), v_i)$. This operation preserves acyclicity,
balance, and I/O consistency, resulting in a valid DAG $G'$ with $k$
operation nodes. By the induction hypothesis, $G' = \T(C')$ for some circuit
$C'$. We then construct $C = g[q_1, \dots, q_\ell] ; C'$. It is
straightforward to verify that $\T(C) = G$.
\end{itemize}

\paragraph{Proof of Invariance.}
We must show that if $C_1 \topotrans^{*} C_2$, then
$\T(C_1) = \T(C_2)$. It suffices to consider each generating topological
rule; applications under structural contexts do not affect the argument.
\begin{itemize}
    \item \RuleSwap: $C_A ; C_B \leftrightarrow C_B ; C_A$ where $C_A$ and
$C_B$ act on disjoint sets of qubits. In the construction of $\T(C)$, edges
are exclusively added between sequential operations on the \emph{same}
qubit. Since $C_A$ and $C_B$ share no qubits, the path of any qubit $q$
passes through either $C_A$, $C_B$, or neither, but never both. Therefore,
changing their relative textual order in the syntax does not alter the
predecessor-successor relationship of any gate on any specific qubit wire.
Thus, the edge set $E$ remains identical.
    \item \RuleAssoc: Sequence concatenation is strictly associative; the
sequence of gates applied to any qubit $q$ is identical in both
$(C_A;C_B);C_C$ and $C_A;(C_B;C_C)$.
    \item \RuleBorrLeft\ / \RuleBorrRight:
    $\borr{d}{C_0}; C_1 \leftrightarrow \borr{d}{C_0; C_1}$
    given $d \notin \qv(C_1)$. The translation $\T$ does not
encode the boundaries of the \texttt{borrow} block as explicit nodes; it
only tracks the operational flow. Expanding the lexical scope of $d$ over
$C_1$ does not insert any new operations on $d$ (since $d \notin \qv(C_1)$),
nor does it change the relative sequence of operations on any other qubits.
Thus, the constructed DAG is invariant. The \RuleBorrLeft\ case is symmetric.
    \item \RuleBorrComm:
    Exchanging two nested borrowing declarations changes neither $Q$, $Q_D$,
    nor the gate sequence on any qubit, so the DAG remains unchanged.
\end{itemize}
The result for $\topotrans^{*}$ follows by induction on the transition-sequence
length. This concludes the proof of Lemma 4.1.
\end{proof}

%%%%%%%%%%%%%%%%%%%%%%%%%%%%%%%%%%%%%%%%%%%%%%%%%%%%%
\subsection{Proof of Unreachable Condition}

As discussed in \Cref{sec:formalization}, when splicing the path of $d$
into an edge $e=(u,v)$, the necessity of the unreachable condition ($\fst(d)
\not\leadsto u$ and $v \not\leadsto \lst(d)$) to prevent immediate cycles is
straightforward. We now formally prove that this condition is sufficient to
guarantee that the resulting graph $G[e \triangleleft d]$ satisfies all
three validity conditions of an Endpoint DAG.

\begin{lemma}[Validity of Edge-Splicing]\label{lem:splicing-validity}
Given a valid Endpoint DAG $G = (Q, Q_D, V, E)$, an edge $e=(u,v) \in E$,
and an ancilla $d \in Q_D$, if $\fst(d) \not\leadsto u$ and $v \not\leadsto
\lst(d)$ in $G$, then the spliced graph $G' = G[e \triangleleft d]$
satisfies the Acyclic, Balanced, and I/O Consistency conditions.
\end{lemma}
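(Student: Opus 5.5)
I would verify the three validity conditions for $G'=G[e\triangleleft d]$ one at a time. Balance and I/O consistency are local bookkeeping. Acyclicity is the only place where the unreachable hypothesis is used.

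\emph{Balanced.} Only four nodes have their incident edge multisets changed: $u$, $v$, $\fst(d)$ and $\lst(d)$; $\In(d)$ and $\Out(d)$ are deleted. Node $u$ loses the out-edge $(u,v)$ and gains $(u,\fst(d))$, so its out-degree is unchanged. Symmetrically, $v$ keeps its in-degree. Node $\fst(d)$ loses the in-edge from $\In(d)$ and gains one from $u$, and $\lst(d)$ trades its out-edge to $\Out(d)$ for one to $v$. If $\fst(d)=\lst(d)$, both trades happen at the same node and the counts still match. Since each operation node keeps its degrees, it keeps its arity. Note that $u$ and $v$ could be $\In(q)$ or $\Out(q)$; the in/out-only property of these endpoints is preserved, because we add an out-edge only at $u$ and an in-edge only at $v$.

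\emph{I/O consistency.} Let $e$ lie on $P_q$ for some $q\neq d$; it cannot lie on $P_d$, since $d$ is the ancilla being spliced in. Write $P_q=(\In(q),\dots,u,v,\dots,\Out(q))$ and $P_d=(\In(d),\fst(d),\dots,\lst(d),\Out(d))$. Define
\[P'_q=(\In(q),\dots,u,\fst(d),\dots,\lst(d),v,\dots,\Out(q)),\]
and set $P'_{q'}=P_{q'}$ for every other $q'\in Q'$. A direct multiset computation then gives $E'=\biguplus_{q'\in Q'}E(P'_{q'})$: the removed occurrences are exactly $(u,v)$ and the two terminal edges of $P_d$, the added occurrences are exactly the two new edges, and the interior edges of $P_d$ are reassigned to $P'_q$. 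In particular, removing one occurrence of $(u,v)$ is well-defined even when parallel copies exist.

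\emph{Acyclic (the main step).} Suppose for contradiction that $G'$ contains a cycle. Every edge of $G'$ that is not one of the two new edges already lies in $G$, and $G$ is acyclic, so the cycle must use $(u,\fst(d))$ or $(\lst(d),v)$. Take a simple cycle and split into cases.

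\begin{enumerate}
\item The cycle uses only $(u,\fst(d))$. The remainder is a path $\fst(d)\leadsto u$ built from edges in $E'\setminus\{\text{new edges}\}\subseteq E$. This contradicts $\fst(d)\not\leadsto u$.
\item The cycle uses only $(\lst(d),v)$. The remainder is a path $v\leadsto\lst(d)$ in $G$, contradicting $v\not\leadsto\lst(d)$.
\item The cycle uses both new edges. It decomposes into a $G$-path $\fst(d)\leadsto\lst(d)$ and a $G$-path $v\leadsto u$. The second path together with the edge $(u,v)\in E$ would form a cycle in $G$, which is impossible. The order in which the two new edges appear on the cycle has to be checked, but in either order one of these contradictions arises.
\end{enumerate}

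The delicate point in all three cases is that the residual paths must be genuine paths of $G$. They are, because $G'$ only deletes edges relative to $G$, apart from the two added ones, and the deleted nodes $\In(d)$, $\Out(d)$ cannot appear in $G'$.
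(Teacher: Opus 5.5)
Your proof is correct and follows the paper's argument essentially step for step. Balance comes from degree bookkeeping at $u$, $v$, $\fst(d)$, $\lst(d)$; I/O consistency comes from rerouting $P_q$ through the interior of $P_d$; and acyclicity comes from the same three-way split on which of the two new edges a cycle uses.

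One justification should be fixed: $e$ cannot lie on $P_d$ because of the unreachable hypothesis, since an edge of $P_d$ would give $\fst(d)\leadsto u$ or $v\leadsto\lst(d)$, not merely because $d$ is the ancilla being spliced. The paper derives this from the same observation that makes $P_q$ and $P_d$ node-disjoint. In your version, the simplicity of $P'_q$ instead follows from the acyclicity you prove separately.
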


\begin{proof}
We verify the three validity conditions for $G'$ sequentially.

\paragraph{1. Balanced Condition.}
The splicing operation $G[e \triangleleft d]$ removes the boundary nodes
$\In(d)$ and $\Out(d)$, and the edges $e=(u,v)$, $(\In(d), \fst(d))$, and
$(\lst(d), \Out(d))$. It then adds two new edges: $(u, \fst(d))$ and
$(\lst(d), v)$.
For any operation node $w \in V' = V \setminus \{\In(d), \Out(d)\}$, its
degree changes only if it is one of the endpoints involved in the splice:
$u$, $v$, $\fst(d)$, or $\lst(d)$.
\begin{itemize}
    \item For node $u$: one outgoing edge $(u,v)$ is removed, and one
outgoing edge $(u, \fst(d))$ is added. Thus, $\deg^{+}(u)$ remains
unchanged.
    \item For node $v$: one incoming edge $(u,v)$ is removed, and one
incoming edge $(\lst(d), v)$ is added. Thus, $\deg^{-}(v)$ remains
unchanged.
    \item For node $\fst(d)$: the incoming edge from $\In(d)$ is replaced by
the incoming edge from $u$, preserving $\deg^{-}(\fst(d))$.
    \item For node $\lst(d)$: the outgoing edge to $\Out(d)$ is replaced by
the outgoing edge to $v$, preserving $\deg^{+}(\lst(d))$.
\end{itemize}
Since all other operation nodes are completely unaffected, the equation
$\deg^{-}(w) = \deg^{+}(w)$ holds for every operation node $w \in V'$.
Therefore, $G'$ remains perfectly balanced.

\paragraph{2. I/O Consistency.}
Because $d$ and its boundary nodes, $\In(d)$ and $\Out(d)$, are entirely
removed from the updated sets $Q'$ and $V'$, every remaining qubit $q \in Q'$
still retains its respective $\In$ and $\Out$ nodes in $V'$. Furthermore,
as shown in the proof balanced condition, no degree of the nodes changes.
Thus, the in-degree of each $\In(q)$ and the out-degree of each $\Out(q)$ remain exactly zero.

Although the direct edge $(u,v)$ is removed, the newly added edges
$(u,\fst(d))$ and $(\lst(d),v)$ redirect the flow through the operational
sequence of $d$. Let $\{P_q\}_{q\in Q}$ be the distinguished paths of $G$,
and suppose $(u,v)$ belongs to $P_q$. The paths $P_q$ and $P_d$ share no
operation node: any common node lies on one side of $(u,v)$ along $P_q$ and
would imply $\fst(d)\leadsto u$ or $v\leadsto\lst(d)$. We replace the edge
$(u,v)$ in $P_q$ with the segment $u \to \fst(d) \leadsto \lst(d) \to v$,
where the middle part is inherited from $P_d$, and then remove $P_d$ from the
decomposition. All other paths remain unchanged. Thus, the edge multiset $E'$ is still the
multiset-disjoint union of the distinguished paths for qubits in $Q'$.

\paragraph{3. Acyclicity.}
Assume for the sake of contradiction that the unreachable condition holds,
but $G'$ contains a cycle $C_{cycle}$. Because the original graph $G$ is
acyclic, $C_{cycle}$ must utilize at least one of the newly added edges in
$G'$: $(u, \fst(d))$ or $(\lst(d), v)$. We analyze the three possible cases:
\begin{itemize}
    \item \textbf{Case A ($C_{cycle}$ uses only $(u, \fst(d))$):} This
implies there exists a path from $\fst(d)$ back to $u$ in $G'$ that does not
use $(\lst(d), v)$. Since all other edges in this path belong to the
original graph $G$, it implies $\fst(d) \leadsto u$ in $G$. This directly
contradicts the assumption $\fst(d) \not\leadsto u$.

    \item \textbf{Case B ($C_{cycle}$ uses only $(\lst(d), v)$):} Similarly,
this implies a path from $v$ back to $\lst(d)$ entirely composed of original
edges. Thus $v \leadsto \lst(d)$ in $G$, contradicting the assumption $v
\not\leadsto \lst(d)$.

    \item \textbf{Case C ($C_{cycle}$ uses both $(u, \fst(d))$ and
$(\lst(d), v)$):} This implies that in the original graph $G$, there exists
a path from $v$ to $u$ (to close the loop from the endpoint of the second
new edge back to the start of the first). However, $G$ is a valid DAG and
already contains the forward edge $(u,v)$. If there were a path $v \leadsto
u$ in $G$, $G$ itself would contain a cycle $(u \to v \leadsto u)$, which
contradicts the foundational premise that $G$ is acyclic.
\end{itemize}
Since all three cases lead to a contradiction, no such cycle $C_{cycle}$ can
exist. Therefore, $G'$ is strictly acyclic.

Having satisfied all three validity conditions, the spliced graph $G'$ is
formally established as a valid Endpoint DAG.

\end{proof}

%%%%%%%%%%%%%%%%%%%%%%%%%%%%%%%%%%%%%%%%%%%%%%%%%%%%%%%%%%%%%%%%%%%%%%%

\subsection{Proof of Theorem~\ref{thm:step-correspondence} (Step Correspondence)}

\begin{proof} 
The theorem asserts the isomorphism between graph-level edge-splicing and
circuit-level borrowing transitions.

\paragraph{Soundness.}
Suppose $G \to G'$ via a valid edge-splicing step $G' = G[e \triangleleft
d]$ on edge $e=(u,v)$. By Lemma~\ref{lem:splicing-validity}, the
unreachable condition guarantees that $G'$ is a valid, acyclic endpoint DAG.
Therefore, $G'$ admits a valid topological ordering.

Because the edge $e$ existed in $G$, let $q$ be the specific qubit line to
which $e$ belonged in the original circuit $C$, and let $P_q$ be its
distinguished path. Using \RuleBorrLeft~and \RuleBorrRight, we first move all
borrowing declarations to an outer prefix; lexical scoping and freshness
ensure the side conditions. We then use \RuleBorrComm~to place $d$ inside all
other declarations, so the declaration of $q$, if $q$ is dirty, remains
outside that of $d$.

Fix such a topological ordering and restrict it to operation nodes. Since the
splice preserves the relative order of all gates that shared a qubit in $C$,
this ordering can be obtained from the original gate order by repeatedly
swapping adjacent disjoint gates, using \RuleAssoc~to expose them. Let $C_d$
be the consecutive code block from $\fst(d)$ through $\lst(d)$ in this
ordering, possibly containing unrelated gates. Operations on $P_q$ before
$e$ precede $\fst(d)$, those after $e$ follow $\lst(d)$, and the unreachable
condition excludes a common operation of $P_q$ and $P_d$. The prefix and
suffix outside $C_d$ contain no operation on $d$, so reverse applications of
\RuleBorrLeft~and \RuleBorrRight~shrink the scope of $d$ to $C_d$. Therefore,
these topological transitions construct a configuration $C_0$ with
$C\topotrans^{*}C_0$. By Lemma~\ref{lem:dag-properties},
$\T(C_0)=\T(C)=G$.

In $C_0$, the qubit $q$ is strictly idle during the code block $C_d$ of $d$.
Since all other borrowing declarations remain outside that of $d$, $q$ is not
declared in $C_d$ either. Hence $q$ is unrelated to the code block of $d$, so
the side-condition for the borrowing rule, $q \notin \qv(C_d) \cup \qa(C_d)$,
is perfectly satisfied. Thus, we can legally fire the \RuleSubst\ transition to
substitute $d$ with $q$, yielding a target circuit $C'$. Since the syntactic
substitution exactly merges $d$'s operations into $q$'s idle period between
$u$ and $v$, removes the boundary nodes and terminal edges of $P_d$, and
leaves every other distinguished path unchanged, it directly mirrors the
edge-splicing construction, confirming
$\T(C') = G'$.

\paragraph{Completeness.}
Conversely, suppose $C \to C'$ via a valid borrowing transition
$\RuleSubst$. In the syntax tree, the transition substitutes an ancilla $d$
with an idle qubit $q$. The validity of the transition dictates that $q
\notin \qv(C_{body}) \cup \qa(C_{body})$, meaning $q$ is neither used nor
declared during the entire scope of $d$.

In the corresponding DAG $\T(C)$, this uninterrupted idle period on $q$
naturally manifests as a single directed edge $e=(u,v)$ connecting the last
operation on $q$ before the block to the first operation on $q$ after the
block. The substitution in the redex body re-routes the causal flow of $q$
to pass entirely through the operations of $d$. By the definition of the
translation $\T$, mapping these variables directly translates to cutting the
edge $e=(u,v)$ and reconnecting $u$ to $\fst(d)$ and $\lst(d)$ to $v$, while
discarding the now-obsolete boundary nodes $\In(d)$ and $\Out(d)$. Hence
\[
    \T(C')=\T(C)[e\triangleleft d].
\]
Neither $\fst(d)\leadsto u$ nor $v\leadsto\lst(d)$ can hold in $\T(C)$:
either path would survive the splice because it cannot use a deleted terminal
edge or $e$, and together with the corresponding new edge would form a cycle
in $\T(C')$. Therefore the unreachable condition holds, and the algebraic
definition gives the edge-splicing step $\T(C)\to\T(C')$.
\end{proof}

\section{NP hardness of \fr}\label{app:NP}

In this section, we present the full proof of Lemma \ref{lem:NPdagudg} and Theorem \ref{thm:NPfr}.

\begin{proof}[Proof of Lemma \ref{lem:NPdagudg}]
   Consider a 3-SAT instance $\psi$ consisting of $m$ clauses $c_1,\dots,c_m$ over $n$ variables $x_1,\dots,x_n$. Now we construct a corresponding instance of \daginudg, which is a UDG $\mathcal{G}=(V,E,E_s,E_t)$. 
   
   For each variable $x_i$, we add $x_i$ and $\bar{x_i}$ to $V$. For each clause $c_j$, we also add $c_j$ to $V$. Furthermore, we add 2 specific vertices $c$ and $x_{false}$ to $V$. That is, $V=\{x_i\mid 1\le i\le n\}\cup\{\bar{x_i}\mid 1\le i\le n\}\cup\{c_j\mid 1\le j\le m\}\cup\{c,\xf\}$.

   We let $E=\{\xf c\}\cup\{cc_j\mid 1\le j\le m\}$. We let $Y_j$ to be the set of literals that $c_j$ contains and $E_s=\{c_jY_j\mid  1\le j\le m\}$. We let $X_i=\{x_i,\bar{x_i}\}$ and $E_t=\{X_i\xf\mid 1\le i\le n\}$. See Figure \ref{fig:3SAT} for an example.

   Now we prove that the 3-SAT instance has a satisfying assignment if and only if a DAG can be realized by the UDG we constructed. The insight of the proof is that, the realization of edges in $E_t$ corresponds to a variable assignment in the 3CNF formula by indicating either $x$ or $\bar{x}$ is false. If the assignment is a solution, then the remaining part of the UDG can be realized as a DAG (Figure \ref{fig:3SATT}). Otherwise, the violated clause will grantee a loop in such realizations (Figure \ref{fig:3SATF}).
   
   If the 3-SAT instance has a satisfying assignment, then in each clause $c_j$ there exists at least one literal $y_j$ that is true under the assignment; and we then replace the uncertain edge $c_jY_j$ with $c_jy_j$. Also, in the satisfying assignment, exactly 1 literal $z_i$ in $X_i=\{x_i,\bar{x_i}\}$ is false, and we replace the uncertain edge $X_i\xf$ with $z_i\xf$. It can then be verified that the resulting graph is a DAG with 5 levels, where every directed edge starting at a vertex in level $i$ ends at a vertex in level $i+1$.
   \begin{itemize}
       \item Level 1: Vertices corresponding to the false literals under the satisfying assignment;
       \item Level 2: $\xf$;
       \item Level 3: $c$;
       \item Level 4: Vertices corresponding to the clauses.
       \item Level 5: Vertices corresponding to the true literals under the satisfying assignment;
   \end{itemize}

   If the constructed UDG can realize a DAG, then in the DAG each uncertain edge $X_i\xf$ is replaced by $z_i\xf$ for some $z_i\in X_i=\{x_i,\bar{x_i}\}$. We claim that $\overline{z_1\dots z_n}$ is a satisfying assignment of the 3-SAT instance. Suppose otherwise, then $\overline{z_1\dots z_n}$ must violate some clause $c_j$, which means each literal in $c_j$ is false. Then, for every literal $y_j$ in $c_j$, $y_j\xf cc_j$ forms a path, and consequently there would always be a loop when replacing $c_jY_j$, a contradiction to DAG. 

   Consequently, as 3-SAT is NP-hard, \daginudg\ is NP-hard as well.
\end{proof}

\begin{proof}[Proof of Theorem \ref{thm:NPfr}]
 First, \fr\ is in NP: a certificate contains at most $|Q_D|$ edge-splicing steps, each of which can be checked in polynomial time.
 We reduce \daginudg\ to \fr. For an UDG $\mathcal{G}=(V,E,E_s,E_t)$, suppose $|V|=n$ and $|E\cup E_s\cup E_t|=m$. We now construct an endpoint DAG model $T= (Q, Q_D, U, E)$ based on $\mathcal{G}$. We fix an arbitrary order of edges in $E\cup E_s\cup E_t$, yielding $e_1,\dots,e_m$. Let $U=V\cup E\cup E_s\cup E_t\cup \{s,t,e_0\}$, which contains a corresponding element for each vertex and each edge in UDG, as well as 3 extra elements. 

 In the endpoint DAG model model $T$, $e_0, e_1,\dots,e_m\in U$ are elements used to ensure that each dirty ancilla qubit can only splice into specific positions. We use $e_{[i, j]},0\le i\le j\le m$ to denote the path $e_ie_{i+1}\dots e_j$.
 We construct the flow of $Q$ and $Q_D$, denoted by paths $P$ and $P_D$ respectively, as follows:
    \begin{itemize}
        \item For each vertex $v\in V$, create $p_v=e_{[0,m]}svt\in P-P_D$. For each edge $e_i=ab\in E$, create a path $p_{i,1}=e_{[0,i-1]}b\in P-P_D$ and a dirty path $d_{i,1}=e_{[i,m]}a\in P_D$. 
        \item For each uncertain edge $e_i=aB,B=\{b_1,\dots,b_k\}$, create $k$ paths $p_{i,j}=e_{[0,i-1]}{b_j}\in P-P_D, 1\le j\le k$,  $k-1$ dirty paths $d_{i,j}=e_{[i,m]}s\in P_D, 1\le j\le k-1$ and a dirty path $d_{i,k}=e_{[i,m]}a\in P_D$.
        \item For each uncertain edge $e_i=Ab,A=\{a_1,\dots,a_k\}$, create $k-1$ paths $p_{i,j}=e_{[0,i-1]}t\in P-P_D, 1\le j\le k-1$, a path $p_{i,k}=e_{[0,i-1]}b\in P-P_D$,  and $k$ dirty paths $d_{i,j}=e_{[i,m]}{a_j}\in P_D, 1\le j\le k$.
    \end{itemize} 
    
    Now we show that $\mathcal{G}$ can realize a DAG if and only if the dirty ancillas can be fully reused. Suppose $\mathcal{G}$ can realize a DAG $G$. Then,
    \begin{itemize}
        \item For each certain edge $e_i=ab$, let the dirty path $d_{i,1}=e_{[i,m]}a\in P_D$ splice into the path $p_{i,1}=e_{[0,{i-1}]}b$,  resulting in $p'_{i,1}=e_{[0,m]}ab\in P-P_D$.
        \item For each uncertain edge $e_i=aB,B=\{b_1,\dots,b_k\}$, suppose it is replaced by $ab_l$ in the DAG $G$. Then,
        \begin{itemize}
            \item let the dirty path $d_{i,k}=e_{[i,m]}a\in P_D$ splice into the path $p_{i,l}=e_{[0,i-1]}{b_l}$, resulting in $p'_{i,l}=e_{[0,m]}ab_l\in P-P_D$;
            \item let each dirty path of the form $d_{i,j}=e_{[i,m]}s\in P_D, 1\le j\le k-1$ splice into a path of the form $p_{i,j}=e_{[0,i-1]}{b_j}, 1\le j\le k, j\neq l$, resulting in $k-1$ paths of the form $p'_{i,j}=e_{[0,m]}sb_j\in P-P_D, 1\le j\le k, j\neq l$.
        \end{itemize}  
        \item For each uncertain edge $e_i=Ab,A=\{a_1,\dots,a_k\}$, suppose it is replaced by $a_lb$ in the DAG $G$. Then,
        \begin{itemize}
            \item let the dirty path $d_{i,l}=e_{[i,m]}a_l\in P_D$ splice into the path $p_{i,k}=e_{[0,i-1]}{b}$, resulting in $p'_{i,l}=e_{[0,m]}a_lb\in P-P_D$;
            \item let each dirty path of the form $d_{i,j}=e_{[i,m]}{a_j}\in P_D, 1\le j\le k, j\neq l$ splice into a path of the form $p_{i,j}=e_{[0,i-1]}t\in P-P_D, 1\le j\le k-1$, resulting in $k-1$ paths of the form $p'_{i,j}=e_{[0,m]}a_jt\in P-P_D, 1\le j\le k, j\neq l$.
        \end{itemize}
    \end{itemize}  

    Now we show that the resulting endpoint DAG model $T'$ is valid. In this model, every path is of one of the following forms:
    \begin{enumerate}
        \item $e_{[0,m]}sat,a\in V$;
        \item $e_{[0,m]}sb,b\in V$;
        \item $e_{[0,m]}at,a\in V$;
        \item $e_{[0,m]}ab, a,b\in V$.
   \end{enumerate}
   In particular, as $G$ is a DAG, there is a corresponding topological ordering $\tau$ of all vertices in $V$. Consequently, it can be verified that $e_0e_1\dots e_ms\tau t$ is a topological ordering of $T'$, implying $T'$ is valid.

    Now suppose the dirty ancillas can be fully reused, resulting in an endpoint DAG model $T'$. We first prove the following claim: Dirty paths in $T$ having the form $e_{[i,m]}x$ can only splice into paths having the form $e_{[0,i-1]}y$, and the resulting path is $e_{[0,m]}xy$. 
    
    The claim can be proved by induction. Suppose dirty paths in $T$ having the form $e_{[j,m]}x, 1\le j\le i-1$ can only splice into paths having the form $e_{[0,j-1]}y$, and the resulting path is $e_{[0,m]}xy$. 
    
    Consider dirty paths having the form $e_{[i,m]}x$. As $e_i$ appears in the ancilla, it can not be reused by any paths containing $e_i$. Since all paths having the form $e_{[0,j-1]}y, 1\le j\le i-1$ have been spliced, such dirty paths can only be reused by paths of the form $e_{[0,i-1]}y$. Furthermore, this splice operation can only happen between $e_{i-1}$ and $y$ since $e_i$ must be placed after $e_{i-1}$ and $e_m$ must be placed before $y$ due to the operator sequence $e_{[0,m]}syt$ \footnote{If $y=s$ or $y=t$, the statement still holds by taking the operator sequence $H_0H_1\dots H_mO_sO_xO_t,x\in V$}. This finishes the proof of the claim.

    Now we use $\mathcal{G}$ to realize a directed graph $G$. For each uncertain edge $e_i=aB,B=\{b_1,\dots,b_k\}$, suppose its corresponding paths after the splice operations are composed of a path $e_{[0,m]}a{b_l}$ for some $1\le l\le k$ and $k-1$ paths $e_{[0,m]}s{b_j},j\neq l$. Then we replace $aB$ with $ab_l$. 
    Similarly for each uncertain edge, $e_i=Ab,A=\{a_1,\dots,a_k\}$, suppose its corresponding working qubits after the reusing are composed of a qubit with operator sequence $e_{[0,m]}{a_l}{b}$ for some $1\le l\le k$ and $k-1$ working qubits $e_{[0,m]}{a_j}{t},j\neq l$. Then we replace $Ab$ with $a_lb$. 

    As $T'$ is valid, there is a topological ordering of elements in $U$, which contains a topological ordering of vertices in $G$. This implies the realized graph $G$ is a DAG. This finishes the proof.
\end{proof}

\section{Detailed Implementation of the Scheduling
Algorithm}\label{app:bona-details}

While Algorithm~\ref{alg:bona} in the main text presents the graph-theoretic
abstraction, our Python implementation utilizes a specialized structure
called \emph{OpGrid} to manage the spatiotemporal relationships between
operations efficiently.

\subsection{The OpCell and OpGrid Structure}

The \emph{OpGrid} is a sparse 2D map: $\texttt{(qubit, column)} \mapsto
\texttt{OpCell}$. Each \texttt{OpCell} represents a DAG node $v \in V$
located at a specific qubit wire and depth. It contains the following
member variables:
\begin{itemize}
    \item \textbf{node}: The unique identifier of the operation node $v$ in
the Endpoint DAG.
    \item \textbf{left} / \textbf{right}: Column indices of the immediately
preceding and succeeding \texttt{OpCell}s on the same qubit wire. These
pointers form a doubly-linked list for each wire, allowing $O(1)$ splicing.
    \item \textbf{tag}: An integer used to determine the traversal priority.
Dirty-ancilla starting nodes are assigned negative tags to ensure they are
processed before other operations in the same depth layer.
\end{itemize}

An \emph{idle edge} $(u,v)$ on a qubit $q$ is thus explicitly represented in
the OpGrid by the gap between a cell at $(q, c_u)$ and its
\texttt{cell.right} at $(q, c_v)$.

\subsection{Uniform Handling of Borrowing and Recycling}

To unify different resource management mechanisms, \bona\ strategically maps
the boundary nodes ($\In(q)$ and $\Out(q)$) from the Endpoint DAG into the
OpGrid:

\begin{itemize}
    \item \textbf{Exposing Working/Clean Wires:} For working qubits and
promoted clean ancillas, we include their $\In(q)$ nodes at column $0$ and
their $\Out(q)$ nodes at column $\text{MaxDepth}+1$. This exposes the long
idle edges at the beginning and end of the circuit to the resource pool.
Consequently, \emph{Clean-Dirty (CD) recycling} is naturally handled as a
borrowing operation into these terminal edges.
    \item \textbf{Restricting Dirty Ancilla Chaining:} Conversely, we do not
expose the boundary nodes of dirty ancillas to the resource pool during the
initial sweep. This ensures that a dirty ancilla cannot provide a boundary
candidate edge to another dirty ancilla, which would result in
premature \emph{Dirty-Dirty (DD) recycling}. Since mapping a long-lifetime
dirty qubit is strictly harder than mapping several short ones, \bona\
postpones DD-recycling until all embedding (borrowing) opportunities for
working and clean wires have been exhausted.
\end{itemize}

\subsection{Implementation Optimizations}

Algorithm~\ref{alg:detailed_bona} details the execution flow. Key
performance optimizations include:
\begin{enumerate}
    \item \textbf{Transactional Pool Updates:} The \texttt{delay\_in} buffer
(Lines 15 and 20) ensures that the Resource Pool is updated only after all
borrowing requests at the current depth are processed, preventing an ancilla
from borrowing its own internal operations (self-borrowing).
    \item \textbf{Logarithmic Tracking:} The $\Pool$ is implemented via a
\texttt{SortedList}~\cite{sortedcontainers_sortedlist} ordered by right
endpoints, ensuring $O(\log |Q|)$ query and update times.
    \item \textbf{Constant-time Acyclicity Checks:} Before scanning
candidates for hard borrowing, \bona\ pre-computes the reachable sets of the
dirty ancilla into Python \texttt{set}s ($Fwd$ and $Bwd$, Line 26--27), reducing
the graph acyclicity check to two $O(1)$ lookups (Line 30).
\end{enumerate}

\begin{algorithm}[h]
\SetKwInOut{Input}{Input}
\SetKwInOut{Output}{Output}
\SetAlgorithmName{Algorithm}{alg:detailed_bona}{List of Algorithms}
\caption{Detailed OpGrid Implementation of \bona}
\label{alg:detailed_bona}
\DontPrintSemicolon

\Input{Endpoint DAG $G$ with pending dirty ancillas $Q_D$.}
\Output{Updated DAG $G$ with dirty ancillas mapped to physical wires.}

$\Skipped \gets \emptyset$ \tcp*{Ancillas skipped after a failed global search}

\While{$Q_D \setminus \Skipped \neq \emptyset$}{
    $OpGrid \gets \text{BuildOpMatrix}(G)$ \tcp*{Builds cells, appends dummy
endpoints for $Q \setminus Q_D$}
    $\Pool \gets \text{Empty SortedList}$ \tcp*{Ordered by right endpoints of
intervals}
    $delay\_in \gets \emptyset$ \tcp*{Buffer for transactional pool updates}
    $trouble\_maker \gets \text{NULL}$\;

    \tcp{Sort all cells by (column ASC, is\_dirty\_start DESC, row ASC)}
    $Cells \gets \text{SortedCells}(OpGrid)$\;

    \ForEach{cell $(row, col) \in Cells$}{
        \eIf{cell is the start of a dirty ancilla $d \in Q_D \setminus
\Skipped$}{
            $[fst, lst] \gets \text{lifespan of } d$\;
            $cand \gets \Pool.\text{QueryCoveringInterval}(fst-1, lst+1)$\;

            \eIf{$cand \neq \text{NULL}$}{
                $\text{SoftReuse}(G, OpGrid, cand, d)$ \tcp*{Updates
pointers in $O(1)$}
                $Q_D \gets Q_D \setminus \{d\}$\;
                $delay\_in.\text{add}(\text{right interval of } d)$\;
            }{
                $trouble\_maker \gets d$\;
                \textbf{break} cell loop \tcp*{Soft borrowing failed, halt
sweep}
            }
        }{
            \tcp{Standard operation cell: flush buffer and update intervals}
            $\Pool.\text{add\_all}(delay\_in)$; $delay\_in \gets \emptyset$\;
            $\Pool.\text{remove}(\text{interval ending at } col)$\;
            $\Pool.\text{add}(\text{interval starting at } col)$\;
        }
    }

    \If{$trouble\_maker = \text{NULL}$}{
        \textbf{break} outer loop \tcp*{All feasible ancillas successfully
mapped}
    }

    \tcp{Fallback Phase: Hard Borrowing}
    $d \gets trouble\_maker$\;
    $Fwd \gets \text{Descendants}(G, d.fst) \text{ as } \texttt{set}$
\tcp*{$O(|E|)$ pre-computation}
    $Bwd \gets \text{Ancestors}(G, d.lst) \text{ as } \texttt{set}$\;
    $best\_cand \gets \text{NULL}$\;

    \ForEach{idle interval $(c_l, c_r)$ in $OpGrid$}{
        \If{node at $c_l \notin Fwd$ \textbf{and} node at $c_r \notin Bwd$
\tcp*{$O(1)$ set lookup}}{
            $\estdepthdelta \gets \max(c_l+1-d.fst, 0) +
\max(d.lst+1-c_r, 0)$\;
            Update $best\_cand$ with the interval minimizing
$\estdepthdelta$\;
        }
    }

    \eIf{$best\_cand \neq \text{NULL}$}{
        $\text{HardReuse}(G, OpGrid, best\_cand, d)$ \tcp*{Updates $G$
edges}
        $Q_D \gets Q_D \setminus \{d\}$\;
    }{
        $\Skipped.\text{add}(d)$ \tcp*{Do not repeat the failed global search}
    }
}
\Return $G$\;
\end{algorithm}

\end{document}